\DeclarePairedDelimiter\ceil{\lceil}{\rceil}
\DeclarePairedDelimiter\floor{\lfloor}{\rfloor}
\newtheorem{theorem}{Theorem}
\newtheorem{lemma}{Lemma}
\newtheorem{proposition}{Proposition}
\newtheorem{corollary}{Corollary}
\newtheorem{example}{Example}
\newtheorem{conjecture}{Conjecture}
\newtheorem{definition}{Definition}
\newcommand{\thicktilde}[1]{\mathbf{\tilde{\text{$#1$}}}}
\begin{document}

\title{Quantum advantage in temporally flat measurement-based quantum computation}

\author{Michael de Oliveira}
\email{michael.oliveira@inl.int}
\affiliation{University of Minho, Department of Computer Science, Braga, Portugal}
\affiliation{INESC TEC, Braga, Portugal}
\affiliation{International Iberian Nanotechnology Laboratory (INL) Av. Mestre Jose Veiga, 4715-330, Braga, Portugal}
\author{Luís Soares Barbosa}
\affiliation{University of Minho, Department of Computer Science, Braga, Portugal}
\affiliation{INESC TEC, Braga, Portugal}
\affiliation{International Iberian Nanotechnology Laboratory (INL) Av. Mestre Jose Veiga, 4715-330, Braga, Portugal}
\author{Ernesto F. Galvão}
\affiliation{International Iberian Nanotechnology Laboratory (INL) Av. Mestre Jose Veiga, 4715-330, Braga, Portugal}
\affiliation{Instituto de Física, Universidade Federal Fluminense Av. Gal. Milton Tavares de Souza s/n, Niterói, RJ, 24210-340, Brazil}
\maketitle

\begin{abstract}
 Several classes of quantum circuits have been shown to provide a quantum computational advantage under certain assumptions. The study of ever more restricted classes of quantum circuits capable of quantum advantage is motivated by possible simplifications in experimental demonstrations. In this paper we study the efficiency of measurement-based quantum computation with a completely flat temporal ordering of measurements. We propose new constructions for the deterministic computation of arbitrary Boolean functions, drawing on correlations present in multi-qubit Greenberger, Horne, and Zeilinger (GHZ) states. We characterize the necessary measurement complexity using the Clifford hierarchy, and also generally decrease the number of qubits needed with respect to previous constructions. In particular, we identify a family of Boolean functions for which deterministic evaluation using non-adaptive MBQC is possible, featuring quantum advantage in width and number of gates with respect to classical circuits.
\end{abstract}

\newpage 
\tableofcontents

\section{Introduction}
One of the primary motivations for studying quantum information and computation, along with the search for practical advantage, is to clarify 
the longstanding question of what generates the classical-quantum separation in information processing power \cite{DevonAaronson21}. Phenomena such as non-locality and contextuality have been identified as possible sources of quantum computational advantage \cite{Jozsa03,Howard2014,Bermejo17}, among others \cite{Galvao05,man12,Grover98}. 
It appears that no single phenomenon can be associated with all forms of quantum information processing advantage, so it is important to identify and study quantum advantage in different models and regimes.

In this pursuit, the measurement-based quantum computation (MBQC) model, first presented by Raussendorf and Briegel in \cite{Raussendorf01} as a sequence of adaptively selected single-qubit measurements on a highly entangled quantum state, 
is a natural setting to study quantum-to-classical separations. In particular, it 
allows for demonstrations that specific structures of non-local correlations between qubits are necessary for universal quantum computation \cite{UniversalRes06}. Nevertheless, a temporal structure for the measurements is simultaneously imposed to 
achieve universality \cite{AndersBrown2009,Danos06,Browne_2007}. 
This means statically selected measurements are not computationally expressive enough to 
implement arbitrary quantum algorithms, even with access to highly correlated resource states. A judicious choice of side classical computation and control is essential for universality in the MBQC model. 

The importance of time structure in measurement-based quantum computation is not fully understood. For instance, it is conjectured that classical computers cannot efficiently simulate instantaneous quantum polynomial (IQP) circuits \cite{Bremner16}, 
despite the fact that these circuits of commuting gates have no temporal structure, and can be implemented with no adaptivity in MBQC
\cite{Matty14}. 
These results have motivated the study of temporally flat computation for demonstrations of quantum advantage \cite{Bremner2016,Novo2021quantumadvantage,Hayashi_2019}. Despite having an output distribution that is hard to simulate classically, no practical application for IQP circuits has been found. Even for quantum circuits with temporal order, constant improvements in classical simulation techniques contribute to enlarging the classes of circuits that are classically efficiently simulable.

This paper studies non-adaptive MBQC computations which have no temporal structure for the measurement operators. In this setting, there exist
circuits with quantum advantage, defined as 
those that cannot be efficiently simulated classically
\cite{Vega18,Miller17}. Here we will be interested in investigating how this model can be applied to decision problems. Previously, in 
references \cite{AndersBrown2009,Hoban2011a,Mori2019,Frembs22,Mackeprang22} this model was studied with the goal of evaluating Boolean functions. In the next section we briefly describe the main results we obtain here, relating to the use of non-adaptive MBQC for the evaluation of Boolean functions.

\subsection{Our results}

For temporally unstructured MBQC computations, we propose new constructions that synthesize quantum circuits for the deterministic evaluation of Boolean functions. These constructions reduce the number of qubits required in the GHZ resource states used, especially for the case of symmetric Boolean functions. We also improve the circuit synthesis process, removing an exponential scaling of previous constructions on the degree of the Boolean function \cite{Mori2019,Frembs22}. Regarding the complexity of the 
single-qubit measurements, we also characterize the maximum level of the Clifford hierarchy required for deterministic Boolean function evaluation.

\begin{theorem}\label{clif1} (Informal)
Any Boolean function $f$ can be evaluated deterministically in the non-adaptive MBQC model 
using measurement operators of the $\mathsf{deg}(f)$-level of the Clifford hierarchy.
\end{theorem}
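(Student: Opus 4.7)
The plan is to proceed by constructing the non-adaptive MBQC protocol from the algebraic normal form (ANF) of $f$ and then tracking the Clifford-hierarchy level of the measurement operators used in that construction. First, I would write $f$ in ANF as $f(x)=\bigoplus_{S\subseteq[n]} c_S \prod_{i\in S} x_i$, so that $\mathsf{deg}(f)$ equals the largest $|S|$ with $c_S=1$. This decomposition is the natural bridge to the Clifford hierarchy, since it is known that a diagonal unitary of the form $\sum_x (-1)^{g(x)}|x\rangle\langle x|$ lies in level $\mathsf{deg}(g)$ of the Clifford hierarchy; the idea is to realize such a phase oracle (for a function closely related to $f$) through single-qubit measurements on GHZ-type resource states.

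Second, I would exploit the non-adaptive MBQC constructions already developed earlier in the paper, in which each monomial of degree $k$ in the ANF of $f$ is computed by measuring a GHZ resource of the appropriate size in a rotated $X$--$Y$ plane basis. The input bits $x_i$ enter as $\{0,\pi\}$ shifts of a base angle, and a degree-$k$ monomial requires that base angle to be of the form $\pi/2^{k-1}$. The corresponding single-qubit measurement operator is (up to a Clifford conjugation) $R_Z(\pi/2^{k-1})$, which by the standard recursive definition of the Clifford hierarchy — $T\in\mathcal{C}_3$, $S\in\mathcal{C}_2$, $Z\in\mathcal{C}_1$, and in general $R_Z(\pi/2^{k-1})\in\mathcal{C}_k$ — sits exactly at level $k$.

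Third, I would combine the per-monomial analysis into a single statement about $f$. Since the resource state and measurement pattern is a union over monomials, the maximum hierarchy level encountered across all measurements is attained by the measurements realizing the monomials of maximum degree, which is $\mathsf{deg}(f)$. All other measurement operators lie in strictly lower levels, hence (by inclusion $\mathcal{C}_j\subseteq\mathcal{C}_k$ for $j\leq k$) still fall within $\mathcal{C}_{\mathsf{deg}(f)}$. Determinism of the overall protocol follows from the deterministic evaluation of each monomial in the construction (as guaranteed by earlier sections) and parity addition of the outcomes, which is a purely classical post-processing step.

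The main obstacle I expect is verifying rigorously that the angles prescribed by the construction for a degree-$k$ monomial actually match $\pi/2^{k-1}$ (up to Clifford adjustments), rather than some finer refinement that would push them into a higher level of the hierarchy. This requires carefully unwinding the GHZ correlations and showing that the phase picked up by the measurement outcomes correctly encodes the degree-$k$ monomial without requiring extra precision; in particular one must check that when several monomials share input variables, the overlapping contributions do not create angles outside $\mathcal{C}_{\mathsf{deg}(f)}$. Once this bookkeeping is done, the inductive characterization of the Clifford hierarchy closes the argument.
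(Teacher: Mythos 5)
Your overall route is the same as the paper's: start from the ANF, observe that a degree-$k$ monomial forces measurement angles that are multiples of $\pi/2^{k-1}$, identify the corresponding operator as (essentially) $R_Z(\pi/2^{k-1})$ at level $k$, and conclude that the maximal level over all monomials is $\mathsf{deg}(f)$. However, the step you defer as ``the main obstacle'' is precisely the content of the paper's proof, so as written your argument has a genuine gap. In the constructions you invoke (the $\mathcal{EF}$/$\mathcal{KR}$ constructions), the protocol is \emph{not} a disjoint union over monomials: the multi-linear polynomials obtained from the individual monomials are summed, and Fourier coefficients attached to the same parity basis $\bigoplus_{i\in S}x_i$ merge into a single coefficient, hence a single measurement angle. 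You must show that this merging cannot produce an angle of finer granularity than $\pi/2^{\mathsf{deg}(f)-1}$. The paper closes this with Lemma \ref{coef}: each monomial of degree $i$ contributes coefficients of the form $l/2^{i}$ with $i\le\mathsf{deg}(f)$, and a sum (with signs) of dyadic rationals has denominator dividing the least common multiple of the individual denominators, which for powers of two is just the maximum; hence every coefficient of the final polynomial has granularity at most $\mathsf{deg}(f)$, and via $\phi_i=-2\pi\widehat{f}(S_i)$ (with $\theta_i$ removed by a linear shift) every measurement angle has granularity at most $\mathsf{deg}(f)-1$. Without this (short but essential) argument, your third step ``the maximum hierarchy level is attained by the maximum-degree monomials'' is an assertion, not a proof.

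A secondary looseness: you pass from the equatorial measurement $M=\cos(\phi)\sigma_x+\sin(\phi)\sigma_y$ to $R_Z(\phi)$ ``up to a Clifford conjugation.'' The relation is actually a product, $M\sigma_x=R_Z(\phi)$, not a conjugation, and since the Clifford hierarchy is not closed under multiplication for levels $\ge 3$ one cannot immediately conclude that $M$ sits at the same level as $R_Z(\phi)$. The paper handles this through the semi-Clifford hierarchy together with the single-qubit equality $Cl_k^1=SCl_k^1$ (Lemma \ref{operators}); you should either cite that fact or give an equivalent argument. Once both points are supplied, your bookkeeping over monomials, the monotone inclusion of hierarchy levels, and determinism via parity post-processing give exactly the paper's Theorem \ref{maxcliff}.
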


\noindent This result describes 
a measure of complexity for the evaluation of Boolean functions in the model while defining the type of measurement operators necessary to maximally violate a multipartite Bell inequality 
with dichotomic observables and outcomes \cite{Hoban2011a}. Furthermore, this strengthens the idea that the degree of a Boolean function that can be computed with certain given quantum resources introduces a hierarchy for quantum correlations, connecting measures of computational complexity with measures of non-classicality beyond the binary characterization, as suggested in \cite{Frembs22}.

We translate the abstract description of the quantum circuits obtained from the constructions 
into specific circuits based on a fixed gate set. Also, we characterize how the solutions produced by the constructions are related to various circuit complexity measures. This enabled us to prove a quantum over classical separation on the circuit level for the computations of a specific class of Boolean functions with degree two. 

\begin{theorem}\label{main}
(Informal) Any classical circuit with unary and binary Boolean operators with single fan out computes symmetric Boolean functions $f$ with $\mathsf{deg}(f)=2$ with $\Theta(n*\log_2(n))$ gates and circuit width. In contrast, realizations of the non-adaptive MBQC model compute these functions with $\Theta(n)$ gates and circuit width.
\end{theorem}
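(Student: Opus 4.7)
The statement pairs a tight $\Theta(n \log_2 n)$ bound for classical single-fan-out circuits with a $\Theta(n)$ bound for non-adaptive MBQC realizations, so the plan is to prove each side separately and read off the quantum-over-classical separation by comparing them. I would first restrict attention to a canonical representative of the symmetric degree-$2$ Boolean functions: over $\mathbb{F}_2$, up to affine shifts by $1$ and $e_1(x)$, the only nontrivial such function is $e_2(x) = \bigoplus_{i<j} x_i x_j = \binom{|x|}{2}\bmod 2$, i.e.\ the second bit of the Hamming weight $|x|$. It suffices to establish both bounds for this representative and then transfer them to the whole class by absorbing the linear/constant shifts into a constant-size correction at the output.

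For the MBQC side, I would invoke the constructions developed earlier in the paper. By Theorem~\ref{clif1}, any degree-$2$ Boolean function is computable with measurements at the second (Clifford) level of the hierarchy; specialising the improved construction for symmetric functions advertised in the results summary gives a GHZ resource state on $O(n)$ qubits together with one single-qubit measurement per qubit, yielding $O(n)$ gates and width. The matching $\Omega(n)$ lower bound is immediate because $e_2$ depends nontrivially on each input bit, so every one of the $n$ inputs must be encoded into at least one qubit of the resource state and triggered by at least one measurement.

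For the classical side, I would first give an upper bound by explicit construction: arrange the variables as leaves of a balanced binary tree and, at each internal node, compute the two low-order bits of the Hamming weight of the corresponding subtree via a constant-size subformula, using a fresh copy of each relevant input inside each bit's subformula. The $O(\log_2 n)$ levels together with $O(n)$ leaves per bit yield $O(n \log_2 n)$ gates, and since the tree must simultaneously keep $O(\log_2 n)$ bits of partial-sum information alive across each cut, the width is also $O(n \log_2 n)$. For the matching lower bound, I would exploit the single-fan-out restriction: because intermediate wires cannot be reused, the circuit is forced into a tree whose output depends on every leaf, and any leaf-to-root path that carries useful information about the Hamming weight must transmit $\Omega(\log_2 n)$ bits of partial-sum data. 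Making this precise via a Khrapchenko-style quadratic bound on the De Morgan formula size of $e_2$, combined with a basis-change argument that handles bases containing XOR by reducing to the parity-free factor, gives the desired $\Omega(n \log_2 n)$.

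The main obstacle, I expect, is the classical lower bound. The informal ``no sharing of intermediate results forces replication of the low bits of the Hamming-weight computation'' intuition has to be turned into a rigorous statement that holds uniformly over all symmetric degree-$2$ functions and over every choice of unary/binary basis, not just for $e_2$ in a specific basis. Once this step is in place, the $\Theta(n \log_2 n)$ versus $\Theta(n)$ comparison is direct and completes the theorem.
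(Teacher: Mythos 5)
Your choice of representative ($e_2=\binom{|x|}{2}\bmod 2$, i.e.\ the paper's $C^2$), your quantum upper bound (sparsity-$(n{+}1)$ polynomial, exact Clifford-level-2 measurements, hence $\Theta(n)$ gates and width, with the trivial $\Omega(n)$ because the function depends on all inputs), and your classical upper bound (a balanced tree tracking the two low-order bits of the Hamming weight, equivalently the paper's explicit formula $C^2=\bigoplus_i\bigwedge_j\bigoplus_k x_k$ of length $O(n\log_2 n)$) all match the paper's argument in substance. The place where your plan breaks is exactly the step you flag as the main obstacle: the classical $\Omega(n\log_2 n)$ lower bound over circuits built from \emph{all} unary and binary operators. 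A Khrapchenko-style argument is intrinsically tied to the De Morgan basis; it cannot be transported to a basis containing XOR by any size-preserving ``basis-change'' or ``parity-free factor'' reduction, because parity itself is the counterexample (linear formula size over the full binary basis, quadratic over AND/OR/NOT). So the route you sketch would at best give a De Morgan-basis bound, which says nothing about the circuit class in the theorem, and no rigorous version of the ``replication of partial-sum bits'' intuition is supplied.

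The paper closes this gap not by proving a new lower bound but by reducing to a known one over the full binary basis: Fischer, Meyer and Paterson's $\Omega(n\log_2 n)$ bound on the formula length of the counting function $\mathsf{Count}(x,4)$. Using the decomposition $\mathsf{Count}(x,4)=\neg C^2(x)\wedge\neg C^1(x)$ one gets $L(\mathsf{Count}(x,4))\le L(C^2)+L(C^1)$, and since $L(C^1)=n$ this yields $L(C^2)\ge \varsigma n\big(\log_2(n/4)-1\big)=\Omega(n\log_2 n)$; the single-fan-out hypothesis then equates gate count (and, up to constants, width) with formula length. If you want to keep your write-up self-contained in spirit, you should replace the Khrapchenko step by this reduction (or by some other lower-bound technique that is valid over the full binary basis, e.g.\ the Fischer--Meyer--Paterson/Pudl\'ak line of results for symmetric functions); as it stands, the crux of the separation is unproven in your proposal.
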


 The proof compares these quantum circuits with the length of the classical Boolean formulas, which describe classical circuits without a memory. The comparison is indeed very strong in the sense that the quantum circuit also does not use any memory (see Figure \ref{fig:separation} for a 
 pictorial representation of this subclass of circuits). Interestingly, this result also has an interpretation in terms of Bell inequalities. In particular, the evaluation of the specific family of Boolean functions for which we identify quantum advantage corresponds to maximal violations of generalized Svetlichny inequalities \cite{Collins02}. Therefore, a maximal violation of these inequalities starting from a specific size implies a corresponding circuit separation \footnote{Recall that the separation in Theorem \ref{main} is proven to hold asymptotically. Therefore, there is a minimum problem size above which all the quantum circuits are more efficient than the best classical circuits.}. More precisely, any quantum state that can be prepared and measured with linearly bounded quantum circuits that violate these inequalities maximally implies a corresponding circuit separation.

\begin{figure}[H]
    \centering
    \includegraphics[scale=0.40]{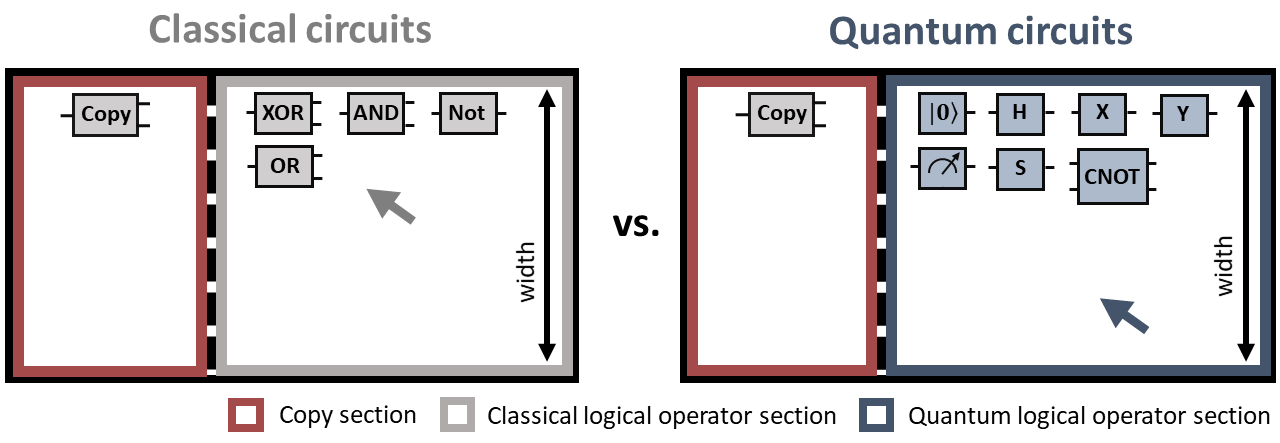}
    \caption{Illustration of the circuit classes for which we prove a separation in Theorem \ref{main}. For the classical class, the input can be copied only in the copy section, which precedes the logical operator section. For the quantum circuits, the same occurs. This prevents the circuits from employing the copy operation at any point in time, which would replicate the effect of memory in the circuits. }
    \label{fig:separation}
\end{figure}

Finally, we extend the previous analysis to higher degree symmetric Boolean functions, and conjecture that these have no advantage with respect to classical circuits.  

\begin{conjecture}\label{lowerdeghigh}
(Informal) Symmetric Boolean functions $f$ with $\mathsf{deg}(f)\geq 3$ evaluated within the non-adaptive MBQC model do not entail circuits with better scalings than classical circuits with unary and binary Boolean operators with single fan out. 
\end{conjecture}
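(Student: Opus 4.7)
The plan is to show that for any symmetric Boolean function $f$ with $\mathsf{deg}(f) = d \geq 3$, both the classical single-fan-out circuit complexity and the non-adaptive MBQC resource count are asymptotically of the same order, so that no quantum advantage survives.

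First I would match the two sides asymptotically. On the classical side, Lucas' theorem says a symmetric Boolean function of degree $d$ is determined by the $k = \lceil \log_2(d+1) \rceil$ lowest-order bits of the input Hamming weight. A balanced adder tree computes these $k$ bits using $\Theta(n \log n)$ gates in the single-fan-out model (generalizing the analysis for Theorem \ref{main}): each input is copied a constant number of times in the copy section, then combined through a tree of full adders whose $O(\log n)$-length carry chains produce the logarithmic factor, followed by a constant-size post-processing stage depending on $d$ but not on $n$ that maps the $k$ extracted bits to $f$. The matching lower bound $\Omega(n \log n)$ should follow from standard fan-out arguments for computing $\log_2 n$-bit Hamming-weight parities. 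On the MBQC side, the construction underlying Theorem \ref{clif1} processes the ANF of $f$ monomial by monomial; the savings that produced the $\Theta(n)$ bound at $d=2$ in Theorem \ref{main} stem from a pairwise symmetrization that eliminates duplicated GHZ states among the $\binom{n}{2}$ quadratic monomials. For $d \geq 3$ the analogous symmetrization over the $\binom{n}{d}$ cubic or higher monomials does not achieve the same collapse: the GHZ states at adjacent degrees carry independent phase information that cannot be merged without raising either the Clifford-hierarchy level of the measurement or the number of ancilla qubits. A careful bookkeeping of GHZ sizes and measurement bases should therefore give the same $\Theta(n \log n)$ scaling on the MBQC side as on the classical side.

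The hard part will be the MBQC lower bound: showing that no alternative non-adaptive MBQC realization of $f$ can achieve $o(n \log n)$ resources. This is a genuine lower-bound question about the model itself, and is presumably why the statement appears as a conjecture rather than a theorem. One promising angle is to connect the non-adaptive MBQC width for symmetric $f$ to a classical complexity measure---for instance, the decision-tree depth or the Fourier $\ell_1$ norm of $f$---that is provably $\Omega(n \log n)$ for degree-$d$ symmetric functions with $d \geq 3$, and then to translate such a measure across the MBQC-to-classical reduction of Hoban et al.~\cite{Hoban2011a}. Carrying this translation out without a lossy blow-up that wipes out the logarithmic gap will be the technical heart of any full proof.
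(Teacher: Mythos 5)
First, note that the statement you are proving is a \emph{conjecture}: the paper does not prove it, it only assembles supporting evidence, so any full "proof" would go beyond the paper. Your proposal, however, rests on a quantitative claim that conflicts with that evidence. You assert that careful bookkeeping should give the same $\Theta(n\log n)$ scaling on the \textnormal{NMQC}$_\oplus$ side as on the classical side. The paper's picture is different in both regimes. For constant degree $d\geq 4$ the relevant quantum resource is the minimal sparsity of a multi-linear polynomial congruent to $f$ modulo $2$, which fixes the GHZ width; the paper's numerics (via the Krawtchouk transform and a Smith-normal-form feasibility test, Lemma \ref{ltest} and Conjecture \ref{lowerbound}) indicate this scales like $n^{\Theta(d)}$, i.e.\ polynomially \emph{worse} than classical, not equal to $n\log n$. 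For $d=3$ the situation is reversed: there are measurement assignments of sparsity $\Theta(n)$ (e.g.\ for $\mathsf{Count}(x,4)$, built from Lemma \ref{polinomials}), so the abstract width advantage over the $\Omega(n\log n)$ formula bound persists; what kills it at the circuit level in the paper is Corollary \ref{cliffordlimit} -- the required measurement operators lie in the third or higher level of the Clifford hierarchy and cannot be synthesized exactly in Clifford$+T$, forcing an approximation with an error term and a $\log$-overhead in gate count. Your argument never engages with this synthesis obstruction, which is the paper's central mechanism for the $d=3$ case, and your "GHZ states at adjacent degrees carry independent phase information" heuristic does not correspond to the actual constructions (there is a single GHZ resource whose size is the sparsity, not one GHZ state per monomial).

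There are also gaps on the classical side and in your proposed lower-bound route. The balanced-adder-tree upper bound of $\Theta(n\log n)$ is not available in the single-fan-out (formula) model, because the carry/Hamming-weight bits produced by the tree are reused by several outputs; the paper's own formula bound is $L(C^k)=\mathcal{O}\bigl(n\log_2(n)^{\lfloor\log_2 k\rfloor}\bigr)$, which already exceeds $n\log n$ for $k\geq 4$, so "classical $=\Theta(n\log n)$ for all $d\geq 3$" is unsupported. Finally, for the genuinely hard direction (a lower bound on \textnormal{NMQC}$_\oplus$ resources), measures such as decision-tree depth or the Fourier $\ell_1$ norm of $f$ are not the right quantities: the model's cost is the minimal sparsity over the whole equivalence class of integer-valued polynomials $g$ with $g\equiv f \bmod 2$, and the cosine symmetry makes standard Boolean-analysis lower bounds inapplicable (the paper stresses exactly this point). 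The paper instead restricts to symmetric measurement assignments and tests feasibility degree by degree; if you want to pursue a proof, that reduction (or a proof that optimal assignments for symmetric functions are symmetric) is where the work lies, not in transferring $\ell_1$ or decision-tree bounds through the Hoban et al.\ correspondence.
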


\noindent This conjecture is based on several independent results. The first is the exponential lower bound proven in \cite{Hoban2011a} for the general $\mathsf{AND}$ function which is symmetric and has an instance for any possible degree. Then, on the same level, we conjecture and provide numerical support that symmetric Boolean functions require quantum states whose size scales at a greater rate than the corresponding number of classical bits required in the optimal classical circuits for the same computations. Furthermore, we show that with the Clifford$+$T gate set the necessary measurement operators cannot be synthesized exactly, and always need to be approximated. In contrast, there is no difficulty in computing the same function deterministically with classical circuits. This illustrates some of the restrictions resulting from the imposed flat temporal order.

\subsection{Related work}

\noindent \textbf{Non-local games.\ }
Computations within the non-adaptive MBQC model were proven to have a one-to-one correspondence to multi-party Bell inequalities with dichotomic observables and outcomes  \cite{Hoban2011a}. Therefore, the classical and the quantum bounds determined in previous works for this type of Bell inequalities immediately translate to computational efficiencies of the respective functions in the non-adaptive MBQC model \cite{BrunnerNicolas2014}. For the quantum bounds, we obtain the maximal efficiency that can be obtained from quantum resources to compute the respective Boolean functions, and equivalently, the same happens for classical bounds. From this relation, we can compare our work with a large spectrum of results and techniques on the optimal strategies for non-local games \cite{BrunnerNicolas2014,kravvcenko2013quantum,Slofstra11}.

For instance, in \cite{ambainis2012advantage,ambainis2013provable}, the authors consider arbitrary symmetric non-local games, which correspond to the symmetric Boolean functions we study here. 
Their results complement our own in the probabilistic setting. We determined the size of the quantum resource to obtain deterministic computations. At the same time, they considered the efficiency obtained if a quantum resource with a size equal to the input string is evaluated, with the additional comparison to a classical resource of the same size. Their main results show a mean advantage of $\sqrt{\log(n)}$ between the use of classical or quantum resources. Moreover, the magnitude of this separation appears to be similar to the separation demonstrated in Theorem \ref{main} due to the $\log(n)$ term. Still, it is not comparable to our results, given that they consider the classical limit based on measurement-based computations with classical resources, while we allow for arbitrary classical circuits without memory. 

Furthermore, in the study of communication complexity, the non-local boxes usually employed can be interpreted as non-adaptive MBQC computations \cite{Reznik08}. This observation relates this model to results obtained in the study of non-local boxes. 
The study of this relationship is two-fold. First, in references \cite{Pawlowski09,Popescu94,Barret05}, the hypothesis of axiomatizing quantum mechanics with information-theoretical principles intends to limit the quantum correlations allowed with these principles. Therefore, the resulting limits to the non-local boxes' efficiencies also relate to the maximal efficiency of the corresponding non-adaptive MBQC computations \cite{razborov2003quantum,zhang2009communication,botteronnonlocal}. On the other hand, non-local games defined on a non-binary set of observables and outcomes can be interpreted as a generalization of the non-adaptive MBQC model to qudits , together with the respective classical linear side processors \cite{bae2018generalized,Frembs_2018}. 

\vspace{0.4cm}

\noindent \textbf{Quantum circuits.} A breakthrough result 
by Bravyi, Gosset, and K\"{o}nig \cite{Bravyi188}  shows that a class of relation problems can be solved in constant depth with a quantum circuit, while requiring logarithmic depth in a classical computer \cite{Bravyi188}. This result has been extended in various manners \cite{Grier20,Libor22}. In \cite{Gall19}, the analysis is extended from worst-case difficulty to average-case difficulty. In \cite{Courdron18}, the probability with which a classical circuit correctly computes the relation problem was exponentially reduced with the size of the input string. Dealing with the same separation, the works of \cite{Bravyi2020,Atsuya21} have made it more robust against noise.

The first connection of our work to these results is that the proof relies on non-local games that are fundamental to the separation. These are naturally described with the non-adaptive MBQC model, which could be used to extend the relation problems with different non-local games, as done in \cite{Courdron18}. Nevertheless, a direct connection to these circuit separations can be made with the work of \cite{Watts19}. In this paper, the authors present the parity halving problem $\textnormal{PHP}:\{0,1\}^n \rightarrow \{0,1\}^n$, defined for all $x \in \{0,1\}^n$ with even Hamming weight and $y\in \{0,1\}^{n+1}$ as
\begin{equation}
 |x|/2 \equiv |y|\ \mod\ 2 \ ,
\end{equation}
\noindent with $|x|$ and $|y|$ describing the Hamming weight of the respective bitstrings.
The authors prove that this relation problem can be computed in $\mathsf{QNC}^0$ while not being computable with any polynomial size circuit of the $\mathsf{AC}^0$ class. Furthermore, this is precisely a sub-problem of the relation problem built in the quantum circuit that we use to prove a quantum advantage for a decision problem. In particular, the Hamming weight of the input string and the output string from the measurement layer performed on the GHZ state describes the full parity halving problem $\textnormal{FPHP}:\{0,1\}^n\rightarrow\{0,1\}^{n+1}$, defined for all $x\in \{0,1\}^n$ and $|y|\in \{0,1\}^n$ as  
\begin{equation}
\begin{cases}
    |x|/2 \equiv |y|\ mod\ 2,\ \ |x|\mod\ 2=0 \\
    |x-1|/2 \equiv |y|\ mod\ 2,\ \ |x|\mod\ 2=1
\end{cases}\ .
\end{equation}
\noindent We translate this into a decision problem by taking the parity of the output string $|y|$ of this relation (Figure \ref{fig:relationdecision}). Interestingly, using a sub-circuit with a quantum advantage respective to the local relation problem does not imply an advantage for the equivalent decision problem. Only the inverse implication would be unconditionally true. 

Another related work concerning shallow quantum circuits is that of \cite{parham2022power}. Here constant-depth quantum circuits that solve a sampling problem related to the class of symmetric Boolean functions we have explored are proposed. In addition, an advantage to the best constant depth circuits has been identified for these problems. 
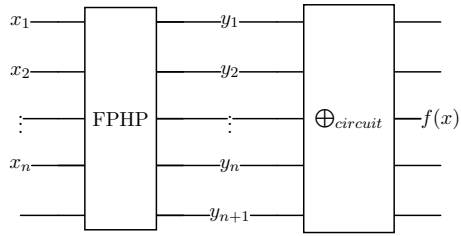
\begin{figure}[H]
\begin{center}
\begin{tikzpicture}\node[scale=0.7]{
\begin{quantikz}
x_1    & &\gate[5, nwires=5]{\textnormal{FPHP}}        & \qw  & y_1     & & \gate[wires=5][1.7cm]{\bigoplus_{circuit}}& \\
x_2    & &                      & \qw  &  y_2    & &  & \\
\vdots & &                      & \qw & \vdots  & &  &\qw f(x) \\
x_n    & & \qw                  & \qw  & y_{n}   & &  &  \\
      & &                      & \qw & y_{n+1} & &  &
\end{quantikz}};
\end{tikzpicture}
\end{center}
    \caption{Circuit for evaluating the symmetric Boolean functions with degree two built from the FPHP relation by the composition of a classical circuit that computes the parity.}
    \label{fig:relationdecision}
\end{figure}
From another perspective, the quantum advantage for the decision problem we identified is described in terms of the number of gates and the width of the circuits. Various related results have been obtained for circuit width. For instance, in \cite{Maslov21}, the authors prove a quantum advantage using the same symmetric Boolean function of degree two that we adress here. The advantage regarding the equivalent classical model is proven by fixing a width size for the computational space and showing that these computations can be computed with greater efficiency in an actual quantum device, IBM's quantum computer. Therefore, our result can be interpreted as the computation of the same function without any width restriction but with a temporal restriction on the time ordering for the quantum circuit (all the measurements have to commute). 

Finally, quantum width advantages for a class of symmetric Boolean functions in the setting of branching programs were demonstrated in \cite{ABLAYEV2005}. In particular,  these functions separate optimally quantum and classical branching programs and are generalizations of the symmetric Boolean function we used in the proof of Theorem \ref{main}.

\subsection{Document organization}
The remainder of the document is organized as follows. Section \ref{background} provides the background for the non-adaptive measurement-based quantum computation model. Section \ref{constructions} introduces our new constructions to select the measurement instructions for Boolean function evaluation. Subsequently, Section \ref{tbound} analyses theoretical resource bounds for the model, providing the technical proof of Theorem \ref{clif1} and support for Conjecture \ref{lowerdeghigh}. Section \ref{circuits} translates the model into realizable computations in the form of circuits, providing the technical proof of Theorem \ref{main} and additional support for Conjecture \ref{lowerdeghigh}. Finally, section \ref{conclusion} concludes with a brief discussion of the results.

\section{Background}\label{background}

\subsection{The non-adaptive measurement-based quantum computation model}

The definition of the non-adaptive MBQC (NMQC$_\oplus$) model builds upon that of the MBQC model \cite{Hoban2011a}. In general terms, the MBQC model is a universal quantum computing model which performs computations by a sequence of measurements on a highly entangled quantum state \cite{Raussendorf01, UniversalRes06}. The measurement operators applied during the computation must be selected adaptively, depending on previous measurement outcomes. The \textnormal{NMQC}$_\oplus$ model differs from the general MBQC in that it does not allow for measurement adaptivity. Therefore, it cannot select subsequent measurement operators assuming knowledge about the outcome of previous measurements. Without the capacity to adapt, the order of single-qubit measurements performed on the resource state is not defined, and the temporal structure becomes flat \cite{Bremner16, Bremner09}.

Furthermore, in the usual MBQC model \cite{Raussendorf01, UniversalRes06}, each measurement is chosen using a limited classical side computer capable only of linear computations. Thus, in \textnormal{NMQC}$_\oplus$ this linear side computer can immediately determine all the measurement bases required for the computation based on the input string. Then, the selected measurement operators are applied to the resource, producing a set of outcome values. These outcomes are fed again into a linear side computer, which delivers the final output of the entire computational process. A detailed description of the  model can be given as a procedure consisting of the following three main stages:
\paragraph{Stage 1.}
A classical input string $x \in \{0,1\}^n$ is pre-processed by a linear side computer generating a binary output $s_i \in \{0,1\}$ for each qubit $i$ of the resource state, with $k$ qubits. Without loss of generality, this computation for each qubit is given by the general linear expression,
\begin{equation}\label{linear}
 s_i=L_i(x)= a_{i_0} \oplus \bigoplus_{j \in S_i} \big(a_{i_j}\oplus x_j \big)
\end{equation}
\noindent  for all $i \in \{0, 1,..., k-1\}$, with $a_{i_j} \in  \{0,1\}$, and $\ S_i \subseteq [n]$.
\paragraph{Stage 2.}
To each bit-value of $s_i\in \{0,1\}$ we associate a single-qubit measurement operator on the equator of the Bloch sphere as, 
\begin{equation}\label{measurements}
   M_{i}(s_i)=\cos(\theta_i+\phi_i*s_i) \sigma_x + \sin(\theta_i+\phi_i*s_i) \sigma_y 
\end{equation}
\noindent  for all $i \in \{0,1,...,k-1\}$, with $\sigma_x$ and $\sigma_y$ standing for the Pauli $X$ and $Y$ operator, respectively, and angles $\theta_i$ and $\phi_i$ defining the specific measurement operators. In the measurement of, 
\begin{equation}
 M_0(s_0)\otimes M_{1}(s_{1}) \otimes ...  \otimes M_{i}(s_i) \otimes ... \otimes M_{k-1}(s_{k-1})  \ ,
\end{equation}
\noindent on resource state $\ket{\Psi_k}$, each one of these measurements yields a binary output $m_i\in \{-1,1\}$, with an associated probability for all $i \in \{0,1,...,k-1\}$. The result of this process is a string $x_m$ of symbols $\pm 1$ with length $k$ composed of eigenvalues of the measurement operators $\otimes_{i=0}^{k-1} M_{i}(s_i)$,
\begin{equation}
 x_m= m_0 , m_1 , \hdots ,m_i,\hdots , m_{k-1}\ \ \text{with}\ m_i \in \{-1,1\}\ .
\end{equation}
\paragraph{Stage 3.}
 In the final stage, the auxiliary linear side processor is employed again to compute the binary output based on the previously obtained string $x_m$, 
\begin{equation}
  L_f(x_m)= a_0 \oplus \bigoplus_{i=1}^{k} \big(a_i\oplus g(m_i)\big ) 
\end{equation}
\noindent where $g(x)=\dfrac{1-x}{2}$ acts as a renormalization function to map the $\pm 1$ outcomes to the more usual bit values $0$ and $1$. The complete process is illustrated in Figure \ref{fig:nmqc} (a), summarizing the different stages of the \textnormal{NMQC}$_\oplus$ model.

\begin{figure}[H]
\centering
\begin{subfigure}[b]{0.59\textwidth}
\centering
\includegraphics[scale=0.32]{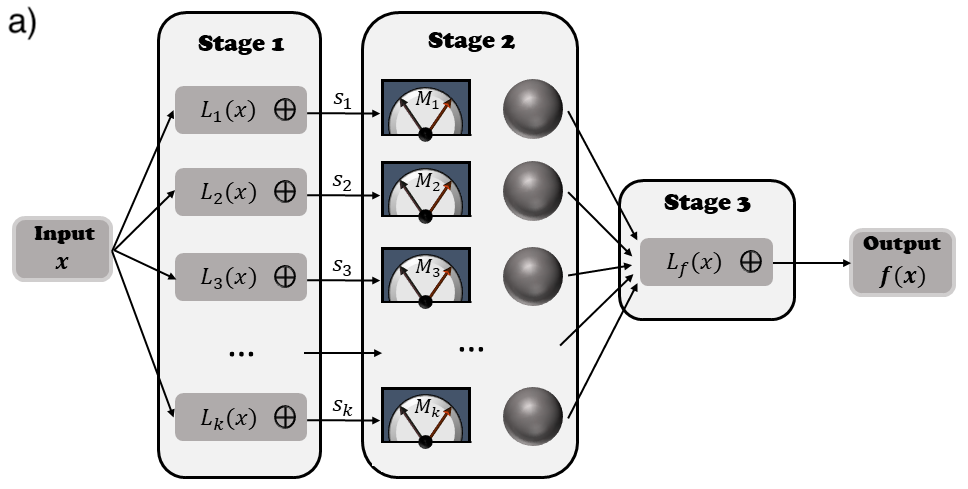}
\label{fig:model}
\end{subfigure}
\begin{subfigure}[b]{0.4\textwidth}
\centering
\includegraphics[scale=0.28]{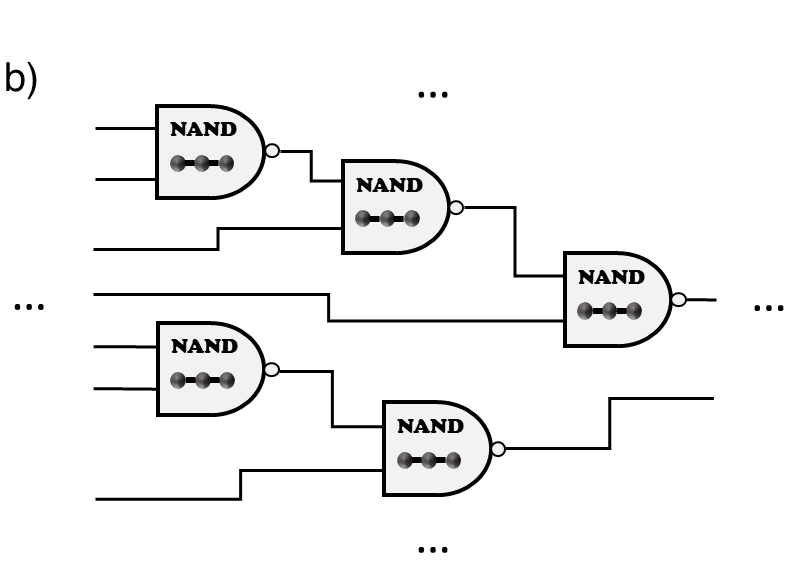}
\label{fig:nand}
\end{subfigure}
\caption{(a) The complete \textnormal{NMQC}$_\oplus$ procedure. The input string $x \in \{0,1\}^n$ is shared with each linear function of the pre-processing stage. Then the binary values $s_i$ are forwarded to the measurement process, determining a specific measurement operator on all qubits. In the end, the measurement results are fed into a linear function to compute the result of the \textnormal{NMQC}$_\oplus$ computation. (b) Representation of a digital circuit composed of NAND gates that are implemented using individual \textnormal{NMQC}$_\oplus$ computations.}
        \label{fig:nmqc}
\end{figure}

\subsection{Measurement assignments} \label{universality}

The lack of adaptivity prevents \textnormal{NMQC}$_\oplus$ from being a universal model for quantum computation, as MBQC is. Nevertheless, Anders and Browne demonstrated that it could evaluate the NAND gate deterministically with a 3-qubit GHZ state and measurements of its stabilizer group. This particular gate is special because it describes a complete gate set for a universal classical computer. Consequently, any Boolean function can be computed by composing layers of  \textnormal{NMQC}$_\oplus$ evaluations (Figure  \ref{fig:nmqc} (b)).  However, by composing different \textnormal{NMQC}$_\oplus$ computations, the whole computational process becomes adaptive. This implication can be easily understood as the result of a specific NAND gate will define the measurement operators of the NAND gates to its outputs are fed into. Therefore, we will focus mainly on the computation capabilities of the \textnormal{NMQC}$_\oplus$ model without 
any compositions. 

Subsequent to the work of Anders and Browne, which introduced the main ideas, the first detailed description of the \textnormal{NMQC}$_\oplus$ model was presented by Hoban \textit{et al.} in \cite{Hoban2011a}. This work examines the set of instructions necessary for each stage of the model, the measurement assignments, in order to deterministically evaluate Boolean functions. Formally,

\begin{definition} \textnormal{ \textbf{(Measurement assignment).}}
A measurement assignment for a \textnormal{NMQC}$_\oplus$ computation is composed by a set of linear functions $L_i$, a set of dichotomic measurement operators $M_i(s_i)$, for all $i\in \{0,1,...,k-1\}$, a final linear function $L_{f}$ and a resource state $\Psi_k$, which takes the form of a $k$-qubit quantum state. 
\end{definition}

In the same work, the authors proved that there is a measurement assignment for any Boolean function, $f:\{0,1\}^n\rightarrow\{0,1\}$, which deterministically evaluates the function in the \textnormal{NMQC}$_\oplus$ model. These measurement assignments require general GHZ states as a resource state \cite{GHZ1989,Cruz19}. Moreover, such highly entangled quantum states were proven to be the most efficient resource state for these computations, concerning the number of required qubits \cite{Wolf2001,Hoban2011a}, as stated in the following theorem.

\begin{theorem}\label{hobans} \cite{Hoban2011a} In \textnormal{NMQC}$_\oplus$ with quantum resources, any Boolean function $f(x)$, on an arbitrary n-bit string $x$, can be computed with unit probability. This can be achieved with the resource state $\frac{1}{\sqrt{2}}\big(\ket{0}^{\otimes(2^n-1)} +\ket{1}^{\otimes(2^n-1)} \big) $ i.e. a
$(2^n-1)$ qubit generalised GHZ state.
\end{theorem}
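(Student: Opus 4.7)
The plan is to combine the algebraic normal form (ANF) of $f$ with the well-known multipartite correlation of the GHZ state under equatorial measurements. Recall that when the $N$ qubits of the state $\frac{1}{\sqrt{2}}(\ket{0}^{\otimes N}+\ket{1}^{\otimes N})$ are each measured with $\cos(\alpha_i)\sigma_x + \sin(\alpha_i)\sigma_y$, the outcomes $m_i\in\{\pm 1\}$ satisfy $\mathbb{E}[\prod_i m_i] = \cos\!\big(\sum_i \alpha_i\big)$. Hence whenever $\sum_i \alpha_i \equiv 0\pmod{\pi}$, the parity $\prod_i m_i$ is deterministic and equals $(-1)^{(\sum_i\alpha_i)/\pi}$. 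The whole task therefore reduces to engineering angles $\alpha_T(x)$, each obtainable from a linear side-computation $s_T=L_T(x)$ via Eq.~\eqref{measurements}, so that $\sum_T \alpha_T(x) \equiv \pi f(x) \pmod{2\pi}$.

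Next, I would write $f(x)=\bigoplus_{S\subseteq[n],\,S\neq\emptyset} c_S \prod_{j\in S} x_j$ in ANF, and then linearise each monomial using the real-valued identity
\begin{equation}
\prod_{j\in S} x_j \;=\; \frac{1}{2^{|S|-1}} \sum_{\emptyset\neq T\subseteq S} (-1)^{|T|+1}\, \Big(\bigoplus_{j\in T} x_j\Big),
\end{equation}
valid for $x_j\in\{0,1\}$. This identity can be checked by a short induction on $|S|$ (or by observing that for any input with exactly $k$ ones in $S$ the RHS telescopes to $\mathbf{1}[k=|S|]$). Its crucial feature is that it expresses the nonlinear monomial as a rational combination of the \emph{linear} functions $\bigoplus_{j\in T} x_j$, which are precisely the quantities a linear side processor can compute.

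I would then assign one qubit of the GHZ state to each nonempty subset $T\subseteq[n]$, giving exactly $2^n-1$ qubits. For qubit $T$, I set $s_T=L_T(x)=\bigoplus_{j\in T} x_j$ and pick $\theta_T=0$ together with
\begin{equation}
\phi_T \;=\; \pi\,(-1)^{|T|+1} \sum_{S\supseteq T} \frac{c_S}{2^{|S|-1}},
\end{equation}
so that $\alpha_T(x)=\phi_T\, s_T$. Collecting the contributions of all $T$ and reordering the double sum over $(S,T)$ with $T\subseteq S$ reproduces $\sum_{S} c_S \cdot \pi \prod_{j\in S} x_j \equiv \pi f(x)\pmod{2\pi}$ via the identity above. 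Consequently $\prod_T m_T = (-1)^{f(x)}$ with unit probability, and the final linear processor recovers $f(x)$ by reading off the parity of the renormalised outcomes $g(m_T)=(1-m_T)/2$.

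The main obstacle I expect is not the GHZ correlation or the ANF expansion, both standard, but a clean verification of the monomial-linearisation identity together with the bookkeeping that ensures every angle $\phi_T$ fits the single-bit measurement rule of Eq.~\eqref{measurements}; once $s_T$ is fixed to the parity $\bigoplus_{j\in T} x_j$, the angle takes only two values and automatically conforms to the allowed form, so the remaining work is essentially combinatorial. Minimality of $2^n-1$ qubits (which the theorem claims implicitly by the construction) would require a separate matching lower bound, which can be imported from the communication-complexity results of Ref.~\cite{Wolf2001} cited in the paper.
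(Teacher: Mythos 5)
Your proposal is correct and follows essentially the same route the paper itself uses (its Construction 2, the $\mathcal{EF}$ representation): expand $f$ in ANF, replace each monomial by its Fourier expansion over parity bases $\bigoplus_{j\in T}x_j$ — exactly your linearisation identity — assign one qubit per nonempty $T\subseteq[n]$ of the $(2^n-1)$-qubit GHZ state, and use the correlation $\langle\otimes_i M_i\rangle=\cos(\sum_i\theta_i+\phi_i s_i)$ to force a deterministic parity equal to $(-1)^{f(x)}$. The only bookkeeping you omit is the constant ANF term $c_\emptyset$ (when $f(0^n)=1$), which is absorbed by the bit $a_0$ in the final linear post-processing $L_f$, so no extra qubit or angle is needed.
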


 \noindent This result proves that classical universality can be 
 achieved within the \textnormal{NMQC}$_\oplus$ model, without the need to compose various layers of \textnormal{NMQC}$_\oplus$ computations. Furthermore, it motivates the following research question

\begin{center}
 \textit{"What are the measurement assignments for an arbitrary Boolean function?"}
 \end{center}
 
\noindent Hoban \textit{et al.} state that their proof for Theorem \ref{hobans} also provides a constructive method to discover valid measurement assignments. We argue that this process is computationally too expensive, as it searches over all possible combinations of linear functions for the pre-processing stage and over a continuum for all the angles of the measurement stage. This assertion will be supported by the complexity analysis of the constructions proposed in \cite{Mori2019, Frembs22}, and which are much more computationally efficient than the original proposal in \cite{Hoban2011a}.

In the remainder of this section, we will review two different constructions that can be used to derive specific measurement assignments for a Boolean function. The two methods were introduced in \cite{Mori2019}, while the second construction was reintroduced independently in \cite{Frembs22}.

\paragraph{Construction 1: The Fourier representation ($\mathcal{FR}$) (\cite{Mori2019}) }

The first construction merges two ideas to derive a method that yields a measurement assignment for any Boolean function, $f:\{0,1\}^n\rightarrow\{0,1\}$. It starts by verifying that the expected value of a measurement operator composed of single-qubit measurements on a general GHZ state, 
 $\Psi_{GHZ}$, can be described with a cosine function,
\begin{align}
    \bra{\Psi_{GHZ}^k}\otimes_{i=1}^k M_i(s_i) \ket{\Psi_{GHZ}^k}= \cos \big(   \sum_{i=1}^k\theta_i+   \phi_i s_i \big)
    = \cos \big(  \underbrace{ \sum_{i=1}^k\theta_i+   \phi_i L_i(x)}_{\mathsf{poly_f(x)}} \big)  \ ,
\end{align}
\noindent whose assignment in a multi-linear polynomial, $\mathsf{poly_f}:\{0,1\}^n\rightarrow \pi(\mathbb{R})$ \cite{Wolf2001} \footnote{We use the notation $\mathsf{poly_f}$ throughout the text to represent either polynomial of the form $\mathsf{poly_f}:\{-1,1\}^n\rightarrow \mathbb{C}$ or $\mathsf{poly_f}:\{0,1\}^n\rightarrow \mathbb{C}$. Both of these mathematical objects serve as representations for a Boolean function of the form $f:\{0,1\}^n\rightarrow \{0,1\}$, and the type of $\mathsf{poly_f}$ used will depend on the convenience given the context where it is used.}. 

\begin{definition}\label{multipoly}
   \textnormal{ \textbf{(Multi-linear polynomial).}} A multi-linear polynomial is a function $\mathsf{poly}_f:\{1,-1\}^n\rightarrow \mathbb{C}$ that can be written as,
    \begin{equation}
        \mathsf{poly}_f(x)= \sum_{S\subseteq [n]} c_S^\ast \prod_{i\in S} x_i
    \end{equation}
    \noindent where each $c_S^\ast$ is a complex number.
\end{definition}

This polynomial has real coefficients $\theta_i$ and $\phi_i$, which relate to the parameters of the measurement operators from Equation \eqref{measurements}, and the binary variables $s_i$ related to the binary values from Equation \eqref{linear} that select the measurement bases. Thus, finding measurement assignments that compute a specific Boolean function is equivalent to finding the corresponding multi-linear polynomials. In particular, a valid polynomial has to map each input $x \in \{0,1\}^n$ to a multiple of $\pi$ that is congruent modulo two to the output of the Boolean function. Then, the total measurement operator maps to the eigenvalue corresponding to the output of the Boolean function, for all $f(x):\{0,1\}^n\rightarrow\{0,1\}$ and $x\in \{0,1\}^n$,
\begin{align}\label{cosmod}
    \bra{\Psi_{GHZ}^k}\otimes_{i=1}^k M_i(L_i(x)) \ket{\Psi_{GHZ}^k}=\cos \big(  \underbrace{\sum_{i=1}^k  \theta_i+ \phi_i L_i(x)}_{\mathsf{poly_f(x)}} \big)= (-1)^{f(x)}\ .
\end{align}

The existence of such a multi-linear polynomial is guaranteed for every semi-Boolean function ($f_{\ast}:\{0,1\}^n\rightarrow \mathbb{R}$) \cite{Odonnel2021}. A specific instance can be computed through the Walsh-Hadamard transform $\mathcal{F}$, which is the Fourier transform on the Boolean group $(\mathbb{Z}/2\mathbb{Z})^n$ \footnote{A description for the Walsh-Hadamard transform is given in Appendix \ref{fourier}.}. This transformation yields the Fourier coefficients ($\widehat{f}(S)$) given an ordered vector with all the possible input strings evaluated by the Boolean function,
\begin{equation}
   \frac{1}{2^n}  \mathcal{F}^T\left( \begin{matrix}
    f(x_0) \\
    f(x_1) \\
    \vdots \\
    f(x_{2^n-1}) \\
    f(x_{2^n}) \\
    \end{matrix}\right) = \begin{bmatrix}
    \widehat{f}(S_0) \\
    \widehat{f}(S_1) \\
    \vdots \\
    \widehat{f}(S_{2^n-1}) \\
    \widehat{f}(S_{2^n}) \\
    \end{bmatrix} \ .
\end{equation}

\noindent These are the coefficients of the corresponding multi-linear polynomial, $c_S^\ast=\widehat{f}(S)$. Therefore, for all $f:\{0,1\}^n\rightarrow\{0,1\}$ and $x\in \{0,1\}^n$,
\begin{equation}\label{polyeq}
  \mathsf{poly_f} (x) =    \pi*\Big(\sum_{S\subseteq [n]}{\widehat{f}(S)*g^{-1}\Big(  \bigoplus_{i\in S} x_i  \Big )} \Big)=\pi*f(x) \ ,
\end{equation}
\noindent with $g^{-1}(x)=1-2x$.

 The multi-linear polynomial $\mathsf{poly_f}$ obtained through this process can be used to construct a measurement assignment that computes the Boolean function deterministically in the \textnormal{NMQC}$_\oplus$ model. The resource state will be a general GHZ state of $k$ qubits,
\begin{equation}
\ket{\Psi_{GHZ}^k}= \dfrac{1}{\sqrt{2}} \big(\ket{0}^{\otimes k} + \ket{1}^{\otimes k} \big)\ ,
\end{equation}
\noindent with $k$ standing for the sparsity of the multi-linear polynomial $\mathsf{poly_f}$, defined as follows. 

\begin{definition}\textnormal{ \textbf{(Sparsity).}}
The sparsity of a multi-linear polynomial  $\mathsf{poly_f}:\{0,1\}^n\rightarrow \mathbb{R}$ is designated as the number of non-zero Fourier coefficients,
\begin{equation}
\mathsf{sparsity}(\mathsf{poly_f})=\mathsf{sparsity}\big( \mathcal{F}(f)\big)= \sum_{S\subseteq [n]\setminus [0]} \delta \big(\widehat{f}(S)\big), \ \ \delta(x)=
\left\{
\begin{array}{ll}
     1 & x\neq 0 \\
     0 & otherwise \\
\end{array} 
\right.\ ,
\end{equation}
which results from applying the Fourier transform $\mathcal{F}$ to the semi-Boolean function, $f_{\ast}:\{0,1\}^n\rightarrow \mathbb{R}$, such that,  for all $x \in \{0,1\}$, $f_{\ast}(x)=\mathsf{poly}_{f_{\ast}}(x)$.
\end{definition}

In the pre-processing stage (Stage 1), the linear Boolean functions which select the measurement operators, are all the parity bases ($\bigoplus_{l\subseteq S} x_l$), corresponding to non-zero Fourier coefficient  ($\widehat{f}(S_i)\neq 0$) of $\mathsf{poly_f}$,   
\begin{equation}\label{linears}
    s_i=L_i(x)=\bigoplus_{l\subseteq S_i} x_l\ ,
\end{equation}

\noindent  with $i\in \{0,1,...,k-1\}$. The angles of the measurement operators, from Equation \eqref{measurements} in Stage 2, are built from the non-zero Fourier coefficients that constructed the multi-linear polynomial resorting to Equation \eqref{polyeq}, such that, for all $i \in \{0,1,...,k-1\}$,
\begin{equation}\label{angles}
  \theta_i=  \dfrac{\pi}{k}*\sum_{S\subseteq [n]}\widehat{f}(S) \ .
\end{equation}
\noindent Notice that each $\theta_i$  always has the same value for all measurement operators. Only the values of $\phi_i$ change, and these angles are defined by the corresponding non-zero coefficients of the polynomial,
\begin{equation}\label{fmeasure}
 \phi_i=-2\pi* \widehat{f}(S_i)\ ,
\end{equation}

\noindent for all $i \in \{0,1,...,k-1\}$. Therefore, each non-zero Fourier coefficient $\widehat{f}(S_i)$ links a specific parity basis $L_i(x)=\bigoplus_{l\in S_i} x_l$ to the corresponding measurement operator $M_i(s_i)$, which will be applied to one of the qubits of the resource state.

Finally, the linear Boolean function of the post-processing stage, Stage 3, will be the linear sum of all the measurement outcomes ($m_i$), 
\begin{equation}
     L_f(x_m)=\bigoplus_{i=1}^{k}  g(m_i)\ , 
\end{equation}
\noindent for all Boolean functions, with a measurement assignment determined directly by the equivalent multi-linear polynomial.

\begin{example}\label{andexe}
For the $\mathsf{AND}$ function with input strings of size two, $\mathsf{AND}(x_1,x_2)=x_1*x_2$, the discrete Fourier transform applied to the corresponding truth table computes the Fourier coefficients,
\begin{equation}
   \frac{1}{4}\begin{bmatrix}
    1 & -1 &\ \ 1&\ \ 1\\
    1 &\ \ 1  &\ \ 1& -1\\
    1 &  -1& -1& -1\\
    1 &\ \ 1 & -1 & \ \ 1 \\
\end{bmatrix}* \begin{bmatrix}
    0 \\
    0 \\
    0 \\
    1 \\
    \end{bmatrix} = \begin{bmatrix}
 \  \ 1/4 \\
     -1/4\\
     -1/4 \\
 \  \ 1/4 \\
    \end{bmatrix} \ .
\end{equation}
\noindent
These are used to build the corresponding multi-linear polynomial resorting to Equation \eqref{polyeq},
\begin{equation}
    \mathsf{poly_{\mathsf{AND}}}(x)= \dfrac{1}{2}x_1+\dfrac{1}{2}x_2-\dfrac{1}{2} x_1\oplus x_2 \ .
\end{equation}
\noindent Then, the coefficients of the polynomial are used to create the measurement assignment. In this case, the polynomial translates to a 3-qubit GHZ state, $\ket{\Psi_{GHZ}^3}= 1/\sqrt{2} \big(\ket{0}^{\otimes 3} + \ket{1}^{\otimes 3} \big)$, the linear functions 
\begin{equation}
     L_1(x)=x_1,\  L_2(x)=x_2,\ and\ L_3(x)=x_1\oplus x_2 \ ,
\end{equation}
\noindent and the corresponding measurement operators
\begin{equation}
    M_1(s_1)=(\neg s_1)\sigma_x+ s_1\sigma_y,\ M_2(s_2)=(\neg s_2)\sigma_x+ s_2\sigma_y\ and\ M_3(s_3)=(\neg s_3)\sigma_x-s_3\sigma_y\ .
\end{equation} Moreover, the $\mathsf{AND}$ function evaluated on all possible inputs, for $x\in \{0,1\}^2$, in the \textnormal{NMQC}$_\oplus$ model is equal to,
\begin{align*}
    &\sigma_x \otimes \sigma_x \otimes\ \ \sigma_x \ket{\Psi_{GHZ}^3} = +\ket{\Psi_{GHZ}^3},\ \sigma_x \otimes \sigma_y \otimes -\sigma_y \ket{\Psi_{GHZ}^3} = +\ket{\Psi_{GHZ}^3}\\ 
    &\sigma_y \otimes \sigma_x \otimes -\sigma_y \ket{\Psi_{GHZ}^3} = +\ket{\Psi_{GHZ}^3},\ \sigma_y \otimes \sigma_y \otimes\ \  \sigma_x \ket{\Psi_{GHZ}^3} = -\ket{\Psi_{GHZ}^3}\ .
\end{align*}

\noindent Yielding the correct correspondence between the resulting eigenvalues and the input strings, such that the final state is equal to $(-1)^{\mathsf{AND}(x_1,x_2)}\ket{\Psi_{GHZ}^3}$. 
\end{example}

This first construction, taken from \cite{Mori2019} was presented with a detailed translation from a valid multi-linear polynomial, for a specific Boolean function, to a correct and deterministic measurement assignment. The construction is characterized by a process that finds a multi-linear polynomial, which in this case is the Walsh-Hadamard transform. The subsequent constructions differ precisely in the process of finding an equivalent multi-linear polynomial for the function. Thus, it is this process alone that distinguishes between them, as the conversion from the multi-linear polynomial to the measurement assignment will be always the same in every case. Therefore, this process will not be repeated for the next constructions.

\paragraph{Construction 2: The extended Fourier representation ($\mathcal{EF}$) (\cite{Mori2019,Frembs22})}

The second construction presented in \cite{Mori2019}, equivalent to the construction introduced in \cite{Frembs22}, applies the discrete Fourier transform using a different representation of the Boolean function, the Algebraic Normal Form (ANF).

\begin{definition}\textnormal{ \textbf{(Algebraic Normal Form).}}
Any Boolean function $f:\{0,1\}^n\rightarrow\{0,1\}$ can be expressed uniquely as,
\begin{equation}
      f(x)= \bigoplus_{S \subseteq [n]} c_S \prod_{i \in S} x_i
\end{equation}
\noindent with $c_S \in \{0,1\}$.
\end{definition}

Given the ANF of a Boolean function, this construction applies the  Walsh-Hadamard transform to each monomial ($\prod_{i \in S} x_i$). The resulting multi-linear polynomials are summed to assemble a single multi-linear polynomial representing the desired Boolean function. This process constructs a correct multi-linear polynomial and the corresponding correct measurement assignment due to the additivity property of the Walsh-Hadamard transform,
\begin{align}
      \mathsf{poly_f}(x) &= \pi* \frac{1}{2^n}  \mathcal{F}^T\bigg(\bigoplus_{S \subseteq [n]} c_S \prod_{i \in S} x_i \bigg)
 \\& \equiv \pi *\sum_{S \subseteq [n]} c_S * \frac{1}{2^n}  \mathcal{F}^T\bigg(\prod_{i \in S} x_i \bigg) \ .
\end{align}

In summary, we described two constructions that produce correct measurement assignments for any Boolean function. The measurement assignments obtained from them may differ. In particular, the sparsity of the multi-linear polynomials is the main difference and one of the most significant properties related to the efficiency of the \textnormal{NMQC}$_\oplus$ model. Generally, the $\mathcal{EF}$ construction produces sparser polynomials than the $\mathcal{FR}$ construction. This separation can be exponentially large for some families of Boolean functions \footnote{We designate as families of Boolean functions a set of Boolean functions with a precise description and a single instance for each input string size. The general $\mathsf{AND}$ is an example of a family of Boolean functions.} \cite{Mori2019,chistopolskaya2018parity} . Thus, asymptotically the $\mathcal{EF}$ construction is identified as the construction that produces the most efficient measurement assignments.  

\section{Evaluation of Boolean functions in the \texorpdfstring{$\textnormal{NMQC}_\oplus$}{NQMC} model}\label{constructions}

This section presents new constructions for the deterministic evaluation of Boolean function in the \textnormal{NMQC}$\oplus$ model. First, we introduce a new construction for symmetric Boolean functions (SBF). Then, we demonstrate how this new method can be integrated into another construction, yielding more efficient \textnormal{NMQC}$\oplus$ computations for arbitrary Boolean functions.

\subsection{Symmetric Boolean functions}\label{symmetric}

In Boolean functions analysis \cite{Odonnel2021}, the set of symmetric Boolean functions i.e., those which are invariant under permutations of the input strings, was studied for various properties and applications in computer science \cite{Canteaut05,Stockmeyer76,Arnold63,Braeken05}. As a result of this property, the ANF representation has the following form, for all $x\in \{0,1\}^n$,
\begin{equation}\label{symmetricrep}
   f(x) = c_0 \oplus c_1*C^1(x)
    \oplus c_2*C^2(x)   \oplus c_3*C^3(x) \oplus ... \oplus c_n*C^n(x)  = \bigoplus_{k=0}^n c_k*C^k(x),
\end{equation}
\noindent where each $c_k \in \{0,1\}$ and the $C^k(x)$ terms represent the complete symmetric functions (CSF), defined as
\begin{definition}\textnormal{ \textbf{(Complete symmetric function).}}
The complete symmetric function (CSF) of dimension $k$, $C^k(x):\{0,1\}^n\rightarrow \{0,1\}$  is defined as the sum, modulo 2, of all terms with degree equal to $k$, i.e.
\begin{equation}
   C^k(x)= \bigoplus_{i_1=1}^{n-k+1} x_{i_1} \Big(   \bigoplus_{i_2=i_1+1}^{n-k+2} x_{i_2}  \big(\ \  ...\ \  (  \bigoplus_{i_N=i_{k-1}+1}^{n} x_{i_N} )\big) \Big) \ .
\end{equation}
\end{definition}

\noindent These functions are the building blocks used to generate the entire set of symmetric Boolean functions \cite{BUHRMAN200221}, according to Equation \eqref{symmetricrep}. Therefore, a construction that renders multi-linear polynomials for the CSF functions will be universal for the set of symmetric Boolean functions.

\paragraph{Construction 3: CSF Representation ($\mathcal{CSF}$) } We propose a new construction that builds a multi-linear polynomial for an arbitrary symmetric Boolean function. This construction takes advantage of the decomposition established in the following lemma. 

\begin{lemma}\label{decomposition}
All complete symmetric functions $C^k$, for an arbitrary dimension $k$, can be decomposed as the product of complete symmetric functions with a dimension equal to power of two, i.e. for all $x\in \{0,1\}^n$
\begin{equation}\label{decomC}
C^k(x)=  \bigwedge_{r \in R_k} C^{2^r} (x),
\end{equation}
\noindent with $R_k \subseteq [\ceil*{\log(k)}]$, such that $\sum_{r\in R_k} 2^{r}=k$.
\end{lemma}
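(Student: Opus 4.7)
The plan is to reduce the identity to a pointwise statement about binomial coefficients modulo $2$ and then invoke Lucas' theorem. First I would observe that, since $C^k$ is symmetric in its arguments, its value on $x \in \{0,1\}^n$ depends only on the Hamming weight $|x|=w$. More precisely, expanding the nested XORs in the definition of $C^k(x)$ one sees that $C^k(x)$ is the parity of the number of $k$-element subsets of the support of $x$, so
\begin{equation*}
C^k(x) \equiv \binom{w}{k} \pmod 2.
\end{equation*}
In particular, $C^{2^r}(x) \equiv \binom{w}{2^r} \pmod 2$.

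Next, since both sides of (\ref{decomC}) are symmetric Boolean functions of $x$, it suffices to verify the equality weight by weight. After the reduction above, the statement to prove becomes the purely arithmetic identity
\begin{equation*}
\binom{w}{k} \equiv \prod_{r\in R_k} \binom{w}{2^r} \pmod 2, \qquad w=0,1,\dots,n,
\end{equation*}
where $k=\sum_{r\in R_k} 2^r$ is the binary expansion of $k$.

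I would then invoke Lucas' theorem. Applied to $\binom{w}{2^r}$, it yields $\binom{w}{2^r}\equiv w_r\pmod 2$, where $w_r$ is the $r$-th bit of $w$. Applied to $\binom{w}{k}$ with the binary expansion $k=\sum_{r\in R_k} 2^r$, it yields $\binom{w}{k}\equiv \prod_{r\in R_k} \binom{w_r}{1}=\prod_{r\in R_k} w_r \pmod 2$. The right-hand side is the same in both cases, completing the proof.

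The main obstacle, such as it is, lies in the first step: justifying cleanly that the nested XOR-of-products in the definition of $C^k(x)$ really does evaluate to $\binom{|x|}{k}\bmod 2$ (rather than the more familiar algebraic identity over $\mathbb{Z}$ for elementary symmetric polynomials). Once this combinatorial interpretation is established, the remainder of the argument is a direct application of Lucas' theorem and no further computation is needed.
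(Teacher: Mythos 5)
Your proposal is correct, but it takes a genuinely different route from the paper. You reduce everything to the pointwise identity $C^k(x)\equiv\binom{|x|}{k}\pmod 2$ and then finish with Lucas' theorem, which gives $\binom{w}{2^r}\equiv w_r$ and $\binom{w}{k}\equiv\prod_{r\in R_k}w_r\pmod 2$, so both sides of the claimed decomposition equal the product of the binary digits of $w=|x|$ selected by $R_k$. The paper instead proves an auxiliary stepwise product lemma, $C^p\wedge C^{2^m}=C^{p+2^m}$ whenever $2^m$ is a power of two below all powers appearing in $p$, by strong induction: it expands both sides in ANF, distributes the conjunction over the XORs, and counts the multiplicity of each resulting monomial, showing all terms of length below $p+2^m$ appear an even number of times; the full lemma then follows by chaining these products over the binary expansion of $k$. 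Notably, the parity counts in the paper's induction step are themselves evaluated via the representation $C^{2^r}(x)=\binom{|x|}{2^r}\bmod 2=\mathrm{Bit}^r(|x|)$ (credited to Mori), so your argument is essentially the distilled core of theirs: it replaces the lengthy term-multiplicity bookkeeping by a single application of Lucas' theorem and immediately yields the clean characterization $C^k(x)=\prod_{r\in R_k}\mathrm{Bit}^r(|x|)$. Also, the step you flag as the "main obstacle" is in fact immediate: by the paper's own definition $C^k$ is the XOR of all degree-$k$ monomials, each monomial evaluates to $1$ exactly when its index set lies inside the support of $x$, and the XOR of those indicator values is the parity of the number of $k$-subsets of the support, i.e.\ $\binom{|x|}{k}\bmod 2$; no further justification is needed.
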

\begin{proof}
See Appendix \ref{lemma2proof}.
\end{proof}
\noindent This decomposition demonstrates that a subset of CSFs is sufficient to build the entire set of symmetric Boolean functions using the logical operators $\mathsf{AND}$ ($\wedge$) and $\mathsf{XOR}$ ($\oplus$). 

\begin{proposition}\label{decomp}
For any symmetric Boolean function $f$, there is an equivalent multi-linear polynomial $\mathsf{poly}_{f}$ that can be built solely by composing multi-linear polynomials from the subset of complete symmetric functions which have a degree equal to a power of two, such that,  for all $x\in \{-1,1\}^n$
\vspace{-0.1cm}
\begin{equation}\label{eqprop1}
   \mathsf{poly}_{f}(x) =\sum_{i=0}^n c_i \bigg (\prod_{r \in R_i} 	 \text{\large $\mathcal{GC}$} \Big(C^{2^r} (x)\Big) \bigg ),
\end{equation}
 with $R_i \subseteq [\ceil*{\log(i)}]$, such that $\sum_{r\in R_i} 2^{r}=i$ and $\text{\large $\mathcal{GC}$}$ any general construction that generates multi-linear polynomial for CSFs. 
\end{proposition}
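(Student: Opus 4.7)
The plan is to combine the two decomposition results already at hand: the canonical XOR expansion of a symmetric Boolean function into complete symmetric functions given by Equation~\eqref{symmetricrep}, and the AND factorisation of each $C^k$ into power-of-two CSFs provided by Lemma~\ref{decomposition}. First I would write $f(x)=\bigoplus_{i=0}^n c_i\,C^i(x)$ using Equation~\eqref{symmetricrep}, and then substitute $C^i(x)=\bigwedge_{r\in R_i}C^{2^r}(x)$ from Lemma~\ref{decomposition}, yielding the Boolean-level identity $f(x)=\bigoplus_{i=0}^n c_i\bigwedge_{r\in R_i}C^{2^r}(x)$. This already exhibits $f$ in the exact logical shape of Equation~\eqref{eqprop1}; the only remaining work is to argue that this logical identity lifts to the multi-linear polynomial level.

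For the lift I would use two elementary translation rules. For Boolean-valued quantities, $a\oplus b\equiv a+b\pmod 2$, so in the polynomial representation (where the target value is $\pi f(x)\pmod{2\pi}$) the outer XOR translates to ordinary addition modulo $2\pi$: the summands $\pi C^i(x)$ can be added without correction because the cross-terms of the form $2\pi(a\wedge b)$ vanish modulo $2\pi$. Dually, for Boolean-valued quantities $a_1,\dots,a_m\in\{0,1\}$, the logical AND coincides with the arithmetic product $\bigwedge_r a_r=\prod_r a_r$, so the inner AND over $r\in R_i$ lifts to a product of the polynomial representations $\mathcal{GC}(C^{2^r}(x))$ produced by any underlying construction (e.g.\ $\mathcal{FR}$ or $\mathcal{EF}$) for the power-of-two CSFs. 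Putting the two rules together converts the Boolean expression $\bigoplus_i c_i\bigwedge_r C^{2^r}(x)$ into the sum-of-products form of Equation~\eqref{eqprop1}.

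The main technical point to handle carefully is the normalisation picked up by the AND-to-product translation. If one sets $\mathcal{GC}(C^{2^r}(x))\equiv\pi C^{2^r}(x)\pmod{2\pi}$, then the product over $r\in R_i$ contributes a factor $\pi^{|R_i|}$ rather than the single $\pi$ required by the convention $\mathsf{poly}_{C^i}(x)\equiv\pi C^i(x)\pmod{2\pi}$. This is a purely multiplicative rescaling by $\pi^{1-|R_i|}$, which I would absorb either into the coefficients $c_i$ or into the chosen $\mathcal{GC}$; the structural claim of the proposition — that only multi-linear polynomials for the power-of-two CSFs are needed and that they combine as a sum of products with coefficients coming from the symmetric ANF — is unaffected by this bookkeeping. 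After fixing the normalisation, the equality in Equation~\eqref{eqprop1} is then a direct consequence of the Boolean identity derived in the first step, and the proposition follows.
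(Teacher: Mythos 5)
Your proposal is correct and takes essentially the same route as the paper's proof: rewrite $f$ via its symmetric ANF and Lemma~\ref{decomposition}, push the construction through the outer XOR by additivity (the linear sum and the real sum coincide modulo $2$), and turn the inner conjunction into a product of polynomials — the paper does this last step by citing the lemma of \cite{Mori2019} stating $\bigwedge_{j} f_j(x)=\prod_j \mathsf{poly}_{f_j}(x)$, which you re-derive by hand. Your $\pi$-normalisation worry is purely a matter of convention and is resolved exactly as you suggest: the paper treats the polynomials as integer-valued objects congruent to the Boolean functions modulo $2$, with the single factor of $\pi$ applied as a global prefactor only when passing to measurement angles, so no $\pi^{|R_i|}$ factor ever arises.
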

\begin{proof}

The representation of SBF used in Equation \eqref{symmetricrep}  can be adapted to depend uniquely on the CSF set with a degree equal to a power of two. Recall that this translation is possible by Lemma \ref{decomposition}. Then, any general construction $\text{\large $\mathcal{GC}$}$ produces valid multi-linear polynomials for these two representations,  
\begin{align}
  \mathsf{poly}_{f}(x)&= \text{\large $\mathcal{GC}$} \Big( \bigoplus_{i=0}^n c_i*C^i \Big)= \text{\large $\mathcal{GC}$} \Big(\bigoplus_{i=0}^n c_i  \Big (\bigwedge_{r \in R_i} C^{2^r} (x) \Big ) \Big)  , 
\end{align}

\noindent  with $R_i \subseteq [\ceil*{\log(i)}]$, such that $\sum_{r\in R_i} 2^{r}=i$. Then, additivity can be used to push the application of the general construction from the entire expression to the CSFs, given that the linear sum $\oplus$ and the Real sum $+$ coincide for Boolean functions and their multi-linear polynomials,
\begin{align}
    \text{\large $\mathcal{GC}$} \Big(\bigoplus_{i=0}^n c_i  \Big (\bigwedge_{r \in R_i} C^{2^r} (x) \Big ) \Big)\equiv \sum_{i=0}^n c_i * \text{\large $\mathcal{GC}$} \Big (\bigwedge_{r \in R_i} C^{2^r} (x) \Big )   ,
\end{align}

\noindent Subsequently, we will resort to a lemma from \cite{Mori2019} that defines the translation from a conjunction operation on the function to the equivalent operation on multi-linear polynomials. In particular, this lemma shows that, for a set of Boolean function $f_1,f_2,...,f_k$ with common input variables $x_1,x_2,...,x_n$, their composition by logical $\mathsf{AND}\ (\wedge)$ is equivalent to the product of the corresponding multi-linear polynomials,
\begin{equation}
    \bigwedge_{j=1}^{k} f_j(x) = \prod_{j=1}^k poly_{f_j}(x)\ \ .
\end{equation}

\noindent Consequently, a valid multi-linear polynomial can be built from the polynomials with the CSFs of dimension equal to a power of two alone,
\begin{align}
    \mathsf{poly}_{f}(x) \equiv \sum_{i=0}^n c_i * \text{\large $\mathcal{GC}$} \Big (\bigwedge_{r \in R_i} C^r (x) \Big )  \equiv \sum_{i=0}^n c_i \Big (\prod_{r \in R_i} \text{\large $\mathcal{GC}$}\Big(C^r (x)\Big) \Big ).
\end{align}
Finally, note that the construction used to build these polynomials has only to be general in respect to any CSF with a degree equal to a power of two. 
\end{proof}

\noindent The same decomposition holds for the equivalent multi-linear polynomials as proven by Proposition \ref{decomp} using the multiplication and addition over the Reals. Thus, it describes a valid method to generate a multi-linear polynomial for an arbitrary symmetric Boolean function, defining our new construction.

This construction introduces modularity into the process of determining valid multi-linear polynomials. Consequently, to build these mathematical objects for symmetric functions, one can use any valid construction or known multi-linear polynomials, which are equivalent to the CSF in the decomposition. Therefore, the sparsest representations could be combined to attain the most efficient \textnormal{NMQC}$_\oplus$ computations. For instance, if one uses the $\mathcal{EF}$ construction from section \ref{universality} to compute multi-linear polynomials for symmetric Boolean functions, the same polynomials would result from the $\mathcal{CSF}$ construction. However, suppose one uses the multi-linear polynomial presented in \cite{Hoban2011a} for the $C^2$ function whenever it appears in the decomposition. In that case, the final multi-linear polynomials will always be sparser than any polynomials determined by either the $\mathcal{FR}$ or the $\mathcal{EF}$ constructions. This construction motivates the search for sparser multi-linear polynomials for this specific set of CSF, with a degree equal to a power of two.

\paragraph{CSF polynomials}
The $\mathcal{CSF}$ construction stresses the importance of determining efficient representations for the subset of CSF on the entire set of symmetric Boolean functions. Therefore, we identified a general form for the multi-linear polynomials corresponding to the CSFs, which have a degree equal to a power of two. We also proved that these are correct for any input size considering the CSFs with a degree lower or equal to 64, as proved in the lemma below. Therefore, to the authors' knowledge, these are the most compact multi-linear polynomial known for those CSFs.  

\begin{lemma}\label{polinomials}
 For $k=2,4,8,16,32$ and $64$, the general form of multi-linear polynomials represented by Equation \eqref{expr12} correctly computes the corresponding complete symmetric function $C^k:\{0,1\}^n\rightarrow \{0,1\}$ for all $x \in \{0,1\}^n$, using a general GHZ state as a resource, with the \textnormal{NMQC}$_\oplus$ model. 
 \begin{align}\label{expr12}
 \begin{split}
     \mathsf{poly}_{C^{k}}(x)=\frac{\pi}{2^{k-1}}\Bigg (\sum_{j=1}^{k/2} \binom{n-k/2-j}{k/2-j} (-1)^j \bigg( \sum_{S_i\subseteq [n],|S_i|=j} \bigoplus_{i \in S_i} x_i    - \sum_{S_i \subseteq [n] , |S_i|=n-j+1}  \bigoplus_{i \in S_i} x_i  \ \bigg)  \Bigg)
\end{split}
 \end{align}
 \noindent 
 \end{lemma}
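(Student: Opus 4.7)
The plan is to verify, for each specified $k$, that the polynomial in Equation \eqref{expr12} satisfies the criterion $\mathsf{poly}_{C^k}(x) \equiv \pi \cdot C^k(x) \pmod{2\pi}$ that Equation \eqref{cosmod} identifies with a correct measurement assignment on the GHZ state. First I would exploit symmetry of both sides: $C^k$ depends only on the Hamming weight $w = |x|$, with $C^k(x) = \binom{w}{k} \bmod 2$, and each inner sum $\bigoplus_{i \in S_i} x_i$ is a parity of $x$-coordinates, so after aggregating over all $S_i$ of a fixed size the polynomial depends only on $w$ as well. This reduces the claim to a single identity indexed by $n$ and $w$.

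Next I would rewrite the inner sums explicitly in terms of $w$. For $x$ of Hamming weight $w$, the number of size-$j$ subsets whose XOR is $1$ equals $\sum_{\ell \text{ odd}} \binom{w}{\ell}\binom{n-w}{j-\ell}$, and the symmetric contribution from size-$(n-j+1)$ subsets admits an analogous count. Substituting into \eqref{expr12} turns it into a nested binomial sum. Applications of Vandermonde's identity and standard alternating-sum identities of the type $\sum_\ell (-1)^\ell \binom{a}{\ell}\binom{b}{c-\ell}$ should collapse the $\ell$-sum, leaving a single sum over $j$ weighted by $\binom{n-k/2-j}{k/2-j}$ and producing a closed expression $P_k(w,n)$ whose congruence class modulo $2\pi$ is all that matters.

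The remaining task is to show $P_k(w,n) \equiv \pi \binom{w}{k} \pmod{2\pi}$ for every $w,n$. By Lucas' theorem, $\binom{w}{k} \bmod 2$ for $k$ a power of two depends periodically on $w$ with a small period (specifically on the single binary digit of $w$ in position $\log_2 k$), and $\binom{w}{k} = 0$ whenever $w < k$. Combined with the Pascal-type recurrence $\binom{n-k/2-j}{k/2-j} = \binom{n-1-k/2-j}{k/2-j} + \binom{n-1-k/2-j}{k/2-j-1}$, which lets one reduce from $n$ variables to $n-1$, I would set up an induction on $n$ with base cases $n = k, k+1, \ldots$ up to the period dictated by Lucas. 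For each fixed $k \in \{2,4,8,16,32,64\}$ this leaves only finitely many residue classes of $w$ to check.

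The main obstacle is that the closed form $P_k$ is not manifestly congruent to $\pi \binom{w}{k}$ term by term: the requisite cancellations modulo $2$ rely on subtle divisibility properties of the prefactors $\tfrac{1}{2^{k-1}}\binom{n-k/2-j}{k/2-j}$, and these become delicate as $k$ grows. I expect that beyond an inductive skeleton the cases $k = 16, 32, 64$ will require direct computer-algebra verification of the periodic patterns of $\binom{w}{k}\bmod 2$ against evaluations of the polynomial—which is presumably why the lemma is phrased for an explicit finite list of $k$ rather than for all powers of two.
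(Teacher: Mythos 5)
Your proposal takes essentially the same route as the paper's proof: reduce by symmetry to a statement about the Hamming weight $w=|x|$ (with the case split on the parity of $w$ for the size-$(n-j+1)$ parities), rewrite the parity sums as binomial counts $\sum_{\ell\ \mathrm{odd}}\binom{w}{\ell}\binom{n-w}{j-\ell}$, collapse the resulting nested binomial sums symbolically, and compare with $C^k(x)=\binom{w}{k}\bmod 2$ via Lucas-type bit identities --- the paper likewise delegates the collapse to computer algebra (Mathematica) for each fixed $k\in\{2,4,8,16,32,64\}$, which is exactly why the lemma is stated for that finite list. The only real difference is in the final step: the paper's symbolic simplification yields the exact, $n$-independent closed form $\pi\binom{\lfloor w/2\rfloor}{k/2}$ (respectively $\pi\binom{\lfloor (w-1)/2\rfloor}{k/2}$ for odd $w$), after which the bit-shift identity $Bit^{l}(x)=Bit^{l-1}(x/2)$ gives $C^k$ directly, so your proposed induction on $n$ and residue-class bookkeeping are unnecessary once that closed form is in hand.
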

\begin{proof}
See in Appendix \ref{polynomials}.
\end{proof}

\begin{example} 
The complete symmetric function of degree $4$ is defined for all $x\in\{0,1\}^n$ as
\begin{equation}
    C^4(x)=\bigoplus_{i_1=1}^{n-3} x_{i_1} \bigg(   \bigoplus_{i_2=i_1+1}^{n-2} x_{i_2}  \bigg( \bigoplus_{i_3=i_2+1}^{n-1} x_{i_3} \bigg (  \bigoplus_{i_4=i_3+1}^{n} x_{i_4}\bigg ) \bigg ) \bigg)\ ,
\end{equation}

\noindent and the corresponding multi-linear polynomials have the following form 

\begin{equation}
    \mathsf{poly}_{C^4}(x)= \frac{\pi}{8} \Big( \big(n-3\big)*\Big( \sum_{i=1}^n x_i - \big (\bigoplus_{j=1}^n x_j \big)\Big) + \sum_{i=1}^n \Big(\big(\bigoplus_{j=1}^n x_j\big)\oplus x_i\Big)- \Big( \sum_{i=1}^n \sum_{j=i}^n x_i\oplus x_j\Big)  \Big) \end{equation}
\noindent for all $x\in\{0,1\}^n$. Moreover, we compute their sparsity by summing the number of Fourier coefficients from different terms. Specifically, the single-bit terms contribute $n$ coefficients, as given by $\sum_{i=1}^n x_i$. The parities that involve all bits except one, expressed as $\sum_{i=1}^n \Big(\big(\bigoplus_{j=1}^n x_j\big)\oplus x_i\Big)$, contribute another $n$ coefficients. Additionally, there are $\binom{n}{2}$ coefficients from the pairwise linear sums represented by $\sum_{i=1}^n \sum_{j=i+1}^n x_i\oplus x_j$. Finally, there is a single term from the parity of all the bits, given by $\bigoplus_{j=1}^n x_j$.
\begin{equation}
    \mathsf{sparsity}\big(\mathsf{poly}_{C^4}(x)\big) =\binom{n}{2} +  2* \binom{n}{1} +1 = \frac{n^2}{2}+\frac{3n}{2}+1 = \mathcal{O}(n^2)\ ,
\end{equation}
\noindent Therefore, this reduces the number of qubits required  for the \textnormal{NMQC}$\oplus$  computation, compared to the previously result in \cite{Frembs22} that scales asymptotically with $\Theta(n^3)$ . 
\end{example}

The combination of the $\mathcal{CSF}$ construction with the CSF polynomials translates to a new set of sparse multi-linear polynomials, which are equivalent to symmetric Boolean functions. Consequently, these polynomials reduce the number of qubits required for the deterministic evaluation of symmetric Boolean functions in the \textnormal{NMQC}$_\oplus$ model. As a concrete example, in Appendix \ref{examplecsf} we show that with this construction, the $C^5$ function taking input strings of size $|n|=6$ demands a 43-qubit GHZ state, whereas the previous best solution resorts to a 63-qubit GHZ state \cite{Frembs22}. Additionally, in subsection \ref{asym} we will use the construction outlined in Proposition \ref{decomp}, in conjunction with the multi-linear polynomials from Lemma \ref{polinomials}, to establish new upper bounds on the number of qubits needed for evaluating symmetric Boolean functions.

\subsection{General Boolean functions}

Could we some up with a procedure to find measurement assignments for general Boolean functions, and not just for the symmetric ones? Ideally, this procedure should find the sparsest multi-linear polynomials while requiring a small computational effort. These conditions guarantee that the classical pre-processing is not excessively time-consuming and that the \textnormal{NMQC}$_\oplus$ computations are as efficient as possible. In this subsection, we propose a new procedure that reduces the cost of determining the measurement assignments, compared to the $\mathcal{EF}$ construction from section \ref{universality}. This reduction is achieved without compromising the efficiency of the measurement assignments, given that the generated multi-linear polynomials have equal or even smaller sparsities. 

\paragraph{Construction 4: The Krawtchouk representation ($\mathcal{KR}$)}
This solution uses the ANF of the Boolean functions to translate each monomial ($\prod_{i \in S} x_i$), to a multi-linear polynomial. However, in contrast to previous techniques, this translation resorts to the symmetric properties of these monomials. First, one can notice that all these terms are symmetric Boolean functions if evaluated independently. Therefore, it is possible to translate each of them to their simplified value vector \cite{Canteaut05} using a function $\nu:(\{0,1\}^n\rightarrow\{0,1\})\rightarrow \{0,1\}^{n+1}$, defined as
\begin{align}
  \nu  \bigg(\prod_{i \in S} x_i \bigg) =& \delta_{|S|}\bigg(\sum_{i\in  S}x_i\bigg) \\ =&\big(\underbrace{0}_{0},\underbrace{0}_{1},\hdots,\underbrace{0}_{|S|-1},\underbrace{1}_{|S|}\big)\ ,
\end{align}
 \noindent with $|S|\subseteq [n]$ and $\delta_{|S|}$ the Kronecker delta function at $|S|$.

\begin{definition}\textnormal{ \textbf{(Simplified value vector).}} The simplified value vector of a symmetric Boolean function ($v_f:\mathbb{N}\rightarrow \{0,1\}$) is defined as a function mapping the Hamming weights of the input string to the corresponding Boolean values
\begin{equation}
    v_f(|x|)=f(x),
\end{equation}
\noindent and recall that $|x|=\sum_{i=1}^{n} x_i$ stands for the Hamming weight of $x$. Thus,  the bit-string $v_f$ of length $n+1$ unambiguously determines the symmetric function,
\begin{equation}
    v_f=\{\underbrace{v_0}_{|x|=0},\ \underbrace{v_1}_{|x|=1},\ \underbrace{v_2}_{|x|=2},\ ...\ ,\ \underbrace{v_n}_{|x|=n}\}\ .
\end{equation}
\end{definition}

\noindent Given this compact representation for each monomial, one can translate such vectors to a multi-linear polynomial using the Krawtchouk transform ($\mathcal{K}:\mathbb{Z}^n\rightarrow\mathbb{R}^n$), described in Appendix \ref{Krawtchouk}. This transform computes the $n+1$ different Fourier coefficients of an arbitrary symmetric semi-Boolean function $f_{\ast}:\{0,1\}^n\rightarrow \mathbb{R}$, such that, 
\begin{equation}
   \frac{1}{2^n} \mathcal{K}^T\left( \begin{matrix}
    v_f(0) \\
    v_f(1) \\
    \vdots \\
    v_f(n) \\
    \end{matrix}\right) = \begin{bmatrix}
    \widehat{f}(|S|=0) \\
    \widehat{f}(|S|=1) \\
    \vdots \\
    \widehat{f}(|S|=n) \\
    \end{bmatrix} \ .
\end{equation}

\noindent In a subsequent step, the multi-linear polynomial equivalent to the monomials, which are symmetric Boolean functions, will be constructed using the computed Fourier coefficients as follows,  
    \begin{equation}\label{sympoly}
         \mathsf{poly}_{f}(x) =    \pi*\Big(\sum_{i=0}^n\widehat{f}(|S|=i)*\sum_{S_j\subseteq [n],|S_j|=i}g^{-1}\Big(  \bigoplus_{l\subseteq S_j} x_l  \Big ) \Big) \ . 
\end{equation}

\noindent In the end, each of the resulting polynomials is summed, or subtracted, from the others to obtain the final multi-linear polynomial that represents the queried function, 
\begin{align}
   \mathsf{poly_f}(x) &= \pi *\sum_{S \subseteq [n]} (-1)^{t_S}*c_S *  \frac{1}{2^n} \mathcal{K}^T\bigg(\nu\bigg (\prod_{i \in S} x_i \bigg ) \bigg) \\
&= \pm \mathsf{poly_1}(x)  \pm \mathsf{poly_2}(x)\pm ... \pm \mathsf{poly_n}(x) \equiv f(x)\ .
\end{align}
\noindent with $t_S\in \{0,1\}$.

Additionally, it will be necessary to prove that valid measurement assignments are generated for the intended Boolean functions. To do so, we show that the multi-linear polynomials obtained for each monomial using Equation \eqref{sympoly} are correct, and these can be summed or subtracted to build the final multi-linear polynomial. These two proofs are sufficient for our purpose because these are the only differences we have introduced with respect to the $\mathcal{EF}$ construction.

\begin{lemma}\label{rfbound}
     Any symmetric semi-Boolean function, $f_{\ast}:\{0,1\}^n \rightarrow \mathbb{R}$, has at most $n+1$ different Fourier coefficients that can be computed using the Krawtchouk transformation of dimension $n+1$, with $\mathcal{O}(n^5)$ operations.
\end{lemma}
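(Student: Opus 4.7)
The plan is to establish the two claims of the lemma separately: first the counting of distinct Fourier coefficients, and then the algorithmic complexity of the Krawtchouk transform.

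For the first part, I would argue from the symmetry of $f_{\ast}$. Since $f_{\ast}(\pi x) = f_{\ast}(x)$ for every permutation $\pi$ of $[n]$, and the Walsh--Hadamard transform is covariant under coordinate permutations, one gets $\widehat{f_{\ast}}(\pi S) = \widehat{f_{\ast}}(S)$ for all $S \subseteq [n]$. The symmetric group acts transitively on subsets of the same cardinality, so $\widehat{f_{\ast}}(S)$ depends only on $|S|$. As $|S|$ takes values in $\{0,1,\ldots,n\}$, there are at most $n+1$ distinct Fourier coefficients.

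For the second part, I would obtain the Krawtchouk transform by collapsing the $2^n$-dimensional Walsh--Hadamard transform via the simplified value vector. Grouping the standard sum by the Hamming weight $i = |x|$ (on which $f_{\ast}(x) = v_f(i)$ is constant) and evaluating the inner character sum over all $x$ of fixed weight $i$ against a fixed subset $S$ of size $k$, the coefficient that appears is precisely the Krawtchouk polynomial
\begin{equation}
K_k(i;n) \;=\; \sum_{j=0}^{k} (-1)^j \binom{i}{j}\binom{n-i}{k-j}.
\end{equation}
This yields $\widehat{f_{\ast}}(|S|=k) = \tfrac{1}{2^n}\sum_{i=0}^n K_k(i;n)\, v_f(i)$, i.e.\ the $(n+1)$-dimensional Krawtchouk transform applied to $v_f$.

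For the operation count, I would first tabulate all binomials $\binom{a}{b}$ for $0\le b\le a\le n$ via Pascal's triangle in $O(n^2)$ arithmetic operations. Each of the $(n+1)^2$ entries of the Krawtchouk matrix is then a sum of at most $n+1$ products of stored binomials, costing $O(n)$ arithmetic operations per entry, so $O(n^3)$ in total; the subsequent matrix--vector multiplication costs another $O(n^2)$. Finally, since binomial coefficients can grow as large as $\binom{n}{\lfloor n/2\rfloor} = \Theta(2^n/\sqrt{n})$, individual entries carry $O(n)$ bits, so each arithmetic operation costs $O(n^2)$ bit operations under schoolbook multiplication. The overall cost is therefore $O(n^3)\cdot O(n^2) = O(n^5)$, as claimed. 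The only mildly delicate point is the bit-complexity bookkeeping; the symmetry reduction and the identification of Krawtchouk polynomials as the resulting kernel are structural and should be straightforward.
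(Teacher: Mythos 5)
Your overall route is the same as the paper's: you first use permutation symmetry to conclude that $\widehat{f_{\ast}}(S)$ depends only on $|S|$ (the paper isolates this as a separate lemma, proved by an explicit coordinate permutation acting on the character sums), then collapse the Walsh--Hadamard sum over Hamming-weight shells to obtain an $(n+1)\times(n+1)$ transform acting on the simplified value vector, and finally count operations. Your complexity bookkeeping (precomputed binomials, $\mathcal{O}(n^3)$ arithmetic operations at $\mathcal{O}(n^2)$ bit cost each) is, if anything, tidier than the paper's per-entry estimate, and both land on $\mathcal{O}(n^5)$.

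However, your explicit kernel is mis-stated. For fixed $S$ with $|S|=k$, the character sum over the weight-$i$ shell is computed by counting how many ones of $x$ fall inside $S$, which gives
\[
\sum_{x:\,|x|=i} X_S(x)\;=\;\sum_{j}(-1)^j\binom{k}{j}\binom{n-k}{i-j},
\]
i.e.\ the degree-$i$ Krawtchouk polynomial evaluated at $k$, not $K_k(i;n)=\sum_j(-1)^j\binom{i}{j}\binom{n-i}{k-j}$ as you wrote. The two families are related by the reciprocity $\binom{n}{i}K_k(i)=\binom{n}{k}K_i(k)$ but are not equal entrywise: for $n=5$, $k=1$, $i=3$ the correct sum is $-2$ while your expression gives $-1$, so your displayed identity for $\widehat{f_{\ast}}(|S|=k)$ is false as written. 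The slip is an index transposition rather than a conceptual gap: with the corrected kernel you recover exactly the entries $\sum_{j}(-1)^j\binom{|S|}{j}\binom{n-|S|}{i-j}$ derived in the paper's appendix (the paper's own notation obscures this by applying $\mathcal{K}^T$ in the main text), and the remainder of your argument, including the $\mathcal{O}(n^5)$ count, goes through unchanged.
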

\begin{proof}
The proof of this lemma is given in Appendix \ref{FouriertoKrawt}.
\end{proof}

\noindent Thus, Lemma \ref{rfbound} proves that the multi-linear polynomials created for each monomial are correct. In addition, it provides proof that Construction 4 can translate each monomial in the ANF exponentially more efficiently to a multi-linear polynomial compared to the use of the Fourier transform for the same process, as employed in the previous constructions.

Finally, the fact that one can use the computed polynomials, multiplied or not by a factor of $-1$, relates to the simple observation that they are equivalent to the modular two operations, $\mathsf{poly}_{f_{\ast}}(x)\ mod\ 2= -\mathsf{poly}_{f_{\ast}}(x)\ mod\ 2$. Therefore, being positive or negative does not change the parity of the angle, which determines the result of the \textnormal{NMQC}$_\oplus$ computation (Equation \eqref{cosmod}). Hence, one can use this new degree of freedom to reduce the number of qubits necessary to evaluate the selected Boolean function. These reductions can be achieved by attributing signs to the polynomials, $(-1)^{t_S}$, such that sets containing equal parity bases have their coefficients summing to zero. Consequently, this would remove the necessity to use these specific qubits that would otherwise be part of the resource and required to be measured. A simple demonstration of this new possibility is provided in Example \ref{rfexample} where we allocated the signs such that one qubit could be removed from that specific \textnormal{NMQC}$_\oplus$ computation. Furthermore, one could now define heuristics that attempt to cancel out as many coefficients as possible. We leave, however, the search for such heuristics for future work.

\begin{example}\label{rfexample}
For the function $h:\{0,1\}^3\rightarrow\{0,1\}$ with input strings of size three,
\begin{equation}
    h(x_1,x_2,x_3)=x_1*x_2 \oplus x_2*x_3\ ,
\end{equation}the Krawtchouk transform applied to the simplified value vector to each monomial computes the Fourier coefficients,
\begin{equation}
    \frac{1}{4}\begin{bmatrix}
    1 & \ \ 2 &\ \ 1\\
    1 &\ \ 0  &-1\\
    1 &  -2& \ \ 1\\
\end{bmatrix}*\underbrace{\begin{bmatrix}
    0 \\
    0 \\
    1 \\
    \end{bmatrix}}_{\nu(x_1*x_2)} = \begin{bmatrix}
    1/4 \\
     -1/4\\
    1/4 \\
    \end{bmatrix} \ , \  \frac{1}{4}\begin{bmatrix}
    1 & \ \ 2 &\ \ 1\\
    1 &\ \ 0  &-1\\
    1 &  -2& \ \ 1\\
\end{bmatrix}*\underbrace{ \begin{bmatrix}
    0 \\
    0 \\
    1 \\
    \end{bmatrix}}_{\nu(x_2*x_3)} = \begin{bmatrix}
    1/4 \\
     -1/4\\
    1/4 \\
    \end{bmatrix} \ .
\end{equation}

\noindent  These are used to build the respective multi-linear polynomials using Equation \eqref{sympoly},
\begin{equation}
    \mathsf{poly_{x_1x_2}}(x)=\pm \Big(\dfrac{1}{2}x_1+\dfrac{1}{2}x_2-\dfrac{1}{2}x_1\oplus x_2\Big )\ and\    \mathsf{poly_{x_2x_3}}(x)=\pm \Big (\dfrac{1}{2}x_2+\dfrac{1}{2}x_3-\dfrac{1}{2} x_2\oplus x_3\Big ) \ .
\end{equation}

\noindent Then, the resulting polynomials are combined to obtain the multi-linear polynomial equivalent to $h(x)$,
\begin{align}
   \label{firstpoly} \mathsf{poly_{h}}(x)&=+ \Big(\dfrac{1}{2}x_1+\dfrac{1}{2}x_2-\dfrac{1}{2}x_1\oplus x_2\Big ) - \Big (\dfrac{1}{2}x_2+\dfrac{1}{2}x_3-\dfrac{1}{2} x_2\oplus x_3\Big ) \\ \label{secondpoly} &= \dfrac{1}{2}x_1+\cancel{\dfrac{1}{2}x_2}-\cancel{\dfrac{1}{2}x_2}-\dfrac{1}{2}x_3 -\dfrac{1}{2} x_1\oplus x_2 +\dfrac{1}{2} x_2\oplus x_3 \\  &= \dfrac{1}{2}x_1-\dfrac{1}{2}x_3 -\dfrac{1}{2} x_1\oplus x_2 +\dfrac{1}{2} x_2\oplus x_3
    \ .
\end{align}

\noindent  This polynomial is then used to create the corresponding measurement assignment. In this case, it translates to a 4-qubit GHZ state, $\ket{\Psi_{GHZ}^4}= 1/\sqrt{2} \big(\ket{0}^{\otimes 4} + \ket{1}^{\otimes 4} \big)$, the linear functions
\begin{equation}
     L_1(x)=x_1,\  L_2(x)=x_3,\  L_3(x)=x_1\oplus x_2,\ and\ L_4(x)=x_2\oplus x_3\ ,
\end{equation}
and the corresponding measurement operators
\begin{align}
    & M_1(s_1)=(\neg s_1)\sigma_x+ s_1\sigma_y,\ M_2(s_2)=(\neg s_2)\sigma_x-s_2\sigma_y,\ \\
    & M_3(s_3)=(\neg s_3)\sigma_x- s_3\sigma_y,\ and\ M_4(s_4)=(\neg s_4)\sigma_x+ s_4\sigma_y\ .
\end{align} 
\noindent Moreover, the function evaluated on all possible inputs, for $x\in \{0,1\}^3$, in the \textnormal{NMQC}$_\oplus$ model is equal to 
$(-1)^{h(x_1,x_2,x_3)}\ket{\Psi_{GHZ}^5}$ as intended. Notice also that assigned the positive sign to one of the polynomials and the negative sign to the other in Equation \eqref{firstpoly}. This specific attribution removed a term ($x_2$ in Equation \eqref{secondpoly}) that would otherwise demand an extra qubit of the resource state.
\end{example}

Concerning the computational cost of determining the measurement assignments, this construction avoids resorting to the Fourier transform, which computes a number of Fourier coefficients that is exponential in $n$, the size of the input string. Furthermore, as proven in Lemma \ref{rfbound}, there are only $n+1$ different Fourier coefficients that characterize symmetric Boolean functions. Thus, it is only necessary to select $n+1$ lines of the Fourier matrix to determine these coefficients. That is precisely one of the reductions applied to obtain the Krawtchouk transform and explains one of the computational advantages of this solution. Additionally, the transformation does not use these $n+1$ lines of the Fourier matrix to determine the coefficients. Alternatively, each line of the Krawtchouk transformation computes the demanded Fourier coefficients based on the simplified value vector of the SBF. Therefore, it reduces the determination of each Fourier coefficient to a counting problem that can be solved in polynomial time. This observation entails another exponential advantage with respect to the approach bared on the Fourier transform.

In order to quantify the advantage behind the use of the Krawtchouk transform, we need to characterize the cost of the approach  that uses the Fourier transform. This cost is defined by the degree of the Boolean function.

\begin{definition} \textnormal{ \textbf{(Degree).}}
The degree of a Boolean function $f$ is equal to the cardinality of the largest monomial in its ANF,
\begin{equation}
    \mathsf{deg}(f)= max\{|S|\  |   \ c_S\neq 0\}
\end{equation}

\end{definition}
 
\noindent In particular, one needs a Fourier matrix of size $2^\mathsf{deg(f)} \times 2^\mathsf{deg(f)}$ to translate the largest monomial into a multi-linear polynomial. This process enforces the existence of at least $2^\mathsf{2deg(f)}$ operations to determine the coefficient of this singular term, imposing a run-time complexity lower bound of $\Omega (2^\mathsf{2deg(f)})$ to this method.  In addition, one must consider that it will be necessary to compute the multi-linear polynomials of all the other monomials ($m=\sum_{s\in [n]}c_S$) and then sum them all to form the resulting polynomial. Accounting for these additional operations, we roughly estimate the computational cost of the $\mathcal{EF}$ construction to be $\mathcal{O}\big(poly(m)* 2^\mathsf{2deg(f)}\big)$\footnote{The term $poly(m)$ represent a function of the type $f(x)=x^c$ with $c\in \mathbb{N}$.}. In contrast, for the Krawtchouk transform, the computational effort equivalent to determining the Fourier coefficients of the largest monomial is defined by applying a matrix of size $(\mathsf{deg}(f)+1) \times (\mathsf{deg}(f)+1)$. Furthermore, the entries of this matrix are determined in polynomial time, as proven by Lemma \ref{rfbound}. Thus, the process used to determine all the necessary Fourier coefficients decreases from an exponential to a polynomial magnitude.

Although we have found a significant reduction in one of the subprocesses necessary to determine the measurement assignments, there is still an unavoidable exponential scaling with respect to the degree of the Boolean function.

\begin{proposition}\label{polyconst}
For any Boolean function $f:\{0,1\}^n\rightarrow \{0,1\}$,  a measurement assignment for the \textnormal{NMQC}$_\oplus$ model can be determined with a computational complexity that scales exponentially with the degree of the Boolean function and polynomially with the number of monomials, $m=\sum_{s\in [n]}c_S$, in its ANF representation, i.e. $\mathcal{O}\Big(m*\Big((\mathsf{deg}(f)+1)^5+2^{\mathsf{deg}(f)}\Big )\Big)$.
\end{proposition}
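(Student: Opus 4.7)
The plan is to walk through the $\mathcal{KR}$ construction step by step, bounding the cost of each stage, and then sum over the $m$ monomials of the ANF. The key observation is that each monomial $\prod_{i\in S}x_i$ is a symmetric Boolean function on $|S|\le\mathsf{deg}(f)$ variables, so all per-monomial work is controlled by $\mathsf{deg}(f)$ rather than $n$.

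First, I would account for the cost of translating each monomial into its simplified value vector. This is immediate: for $\prod_{i\in S}x_i$ we simply output $(0,\dots,0,1)\in\{0,1\}^{|S|+1}$, so this step is $\mathcal{O}(\mathsf{deg}(f))$ per monomial. Next, I would invoke Lemma~\ref{rfbound} to bound the cost of applying the Krawtchouk transform to this vector of length at most $\mathsf{deg}(f)+1$, which yields the $\mathsf{deg}(f)+1$ distinct Fourier coefficients in $\mathcal{O}\bigl((\mathsf{deg}(f)+1)^5\bigr)$ operations. I would then count the work involved in constructing the multi-linear polynomial via Equation~\eqref{sympoly}: because the underlying symmetric semi-Boolean function depends only on the $|S|\le\mathsf{deg}(f)$ variables of the monomial, enumerating the required parity terms $\bigoplus_{l\subseteq S_j}x_l$ amounts to iterating over all subsets $S_j\subseteq S$, giving at most $2^{\mathsf{deg}(f)}$ terms per monomial.

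Putting these per-monomial costs together gives $\mathcal{O}\bigl((\mathsf{deg}(f)+1)^5+2^{\mathsf{deg}(f)}\bigr)$ per monomial. I would then multiply by the number of monomials $m=\sum_{S\subseteq[n]}c_S$, and argue that the final aggregation step—adding (with signs $(-1)^{t_S}$) the $m$ partial polynomials into a single multi-linear polynomial—is absorbed into the same bound, since each partial polynomial has at most $2^{\mathsf{deg}(f)}$ terms and the combination therefore requires $\mathcal{O}(m\cdot 2^{\mathsf{deg}(f)})$ additions of rational coefficients. Summing everything yields the claimed complexity $\mathcal{O}\bigl(m\cdot\bigl((\mathsf{deg}(f)+1)^5+2^{\mathsf{deg}(f)}\bigr)\bigr)$, and the translation from a multi-linear polynomial to the measurement assignment (as described in section~\ref{universality}) is linear in the sparsity of the resulting polynomial and thus does not change the bound.

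The main obstacle I anticipate is the per-monomial enumeration of parity terms: one must be careful that the sum in Equation~\eqref{sympoly} is effectively over subsets of $S$ rather than of $[n]$, since coefficients for $S_j\not\subseteq S$ vanish for a monomial symmetric only in its own variables; otherwise, a naive reading of the formula would give an $\mathcal{O}(2^n)$ cost per monomial and the proposition would fail. A secondary subtlety is to justify that the cost of selecting the sign pattern $(-1)^{t_S}$ (should any heuristic be applied for sparsification) is not counted here, since the construction as stated produces a valid assignment for any choice of signs, so the worst-case complexity bound is established with an arbitrary fixed choice and any additional optimisation can only decrease the sparsity of the output, not the running time of this construction itself.
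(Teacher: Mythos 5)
Your proposal is correct and follows essentially the same route as the paper's proof: it invokes Lemma~\ref{rfbound} for the $\mathcal{O}\big((\mathsf{deg}(f)+1)^5\big)$ per-monomial Krawtchouk cost, bounds the per-monomial polynomial by at most $2^{\mathsf{deg}(f)}$ parity terms, and multiplies by the $m$ monomials, with the final summation of the partial polynomials absorbed into the same $m\cdot 2^{\mathsf{deg}(f)}$ term. The only cosmetic difference is that you justify the $2^{\mathsf{deg}(f)}$ factor by noting that each monomial's polynomial is supported on subsets of its own variable set $S$ (which also correctly handles the subtlety that Equation~\eqref{sympoly} is applied over $S$ rather than $[n]$), whereas the paper cites the minimal-sparsity bound of the largest monomial; both yield the same count.
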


\begin{proof}
The Krawtchouk transform gives the number of operations required to determine all the necessary Fourier coefficients. This matrix has $(\mathsf{deg}(f)+1)^2$ entries, which can be computed with at most $(\mathsf{deg}(f)+1)^3$ operations. Therefore, in total, there are at most $m*(\mathsf{deg}(f)+1)^5$ operations, as proven by Lemma \ref{rfbound} to determine all the necessary Fourier coefficients. These coefficients build the multi-linear polynomials equivalent to the monomials $m$ in the ANF of the Boolean function, using Equation \eqref{sympoly}. Subsequently, there is an additional cost due to the need to sum the coefficients related to the multi-linear polynomials resulting from the various monomials. This specific process assembles the final multi-linear polynomial and has at most $m*2^{\mathsf{deg}(f)}$ operations. The exponential term $2^{\mathsf{deg}(f)}$ results from the minimal sparsity for the largest monomial in the Boolean function. That this is the minimal sparsity is proven by a theorem in  \cite{Mori2019} showing that the sparsity of any Boolean function $f:\{0,1\}^n\rightarrow \{0,1\}$ is at least $ 2^{\mathsf{deg}(f)}$. Combining the two terms, we arrive at the cost value described in the proposition.
\end{proof}

The bound proven in Proposition \ref{polyconst} shows that the $\mathcal{KR}$ construction is almost as efficient as it could be to determine the measurement assignments that result from this technique. This statement is justified by demonstrating that the computational effort of determining the multi-linear polynomials regarding their sparsity is at most polynomial. Therefore, exponential scalings only exist because the object we are constructing is of exponential size. In contrast, the Fourier transform introduces an additional exponential effort to determine each one of the Fourier coefficients. Thus, determining valid measurement assignments with this transformation implies a computational effort of exponential size with respect to the sparsity of the multi-linear polynomials. Furthermore, this result highlights the importance of finding and employing constructions that produce sparse multi-linear polynomials. These reduces the number of qubits for the \textnormal{NMQC}$_\oplus$ computation and, in principle, facilitates the process of determining their measurement assignments.

\vspace{0.5cm}
\noindent \textbf{Parameterized construction 1: The symmetric complements  ($\mathcal{SC}(\zeta)$)} This construction merges all the contributions we have presented so far. In particular, it incorporates the advantage of the $\mathcal{CSF}$ construction to determine sparse-multi-linear polynomials for symmetric Boolean functions, with the computational reductions obtained with the $\mathcal{KR}$ construction. 

The construction starts from the ANF of a Boolean function $f:\{0,1\}^n\rightarrow\{0,1\}$ and creates two additional Boolean functions based on a previously selected symmetrization function $\zeta$.

\begin{definition}\textnormal{ \textbf{(Symmetrization function).}}
A symmetrization function is a function $\zeta:\big(\{0,1\}^n\rightarrow\{0,1\}\big) \rightarrow \big ( \{0,1\}^n\rightarrow\{0,1\} \big)$ that receives an arbitrary Boolean function, $f:\{0,1\}^n\rightarrow \{0,1\}$, and maps it to a symmetric Boolean function, $f_{\mathsf{sym}}:\{0,1\}^n\rightarrow \{0,1\}$, such that $\zeta(f(x))=f_{\mathsf{sym}}(x)$, for all $x$.
\end{definition} 

\noindent The first Boolean function we need to define is the symmetrized function of $f$ respective to $\zeta$. This function is computed by applying the symmetrization function $\zeta$ to $f$ in order to obtain $f_{\mathsf{sym}}$. Then, the complement of $f$ with respect to its symmetrized function $f_{\mathsf{sym}}$, designated as $\thicktilde{f}$, has to be computed.

\begin{definition}\textnormal{ \textbf{($\thicktilde{f}(x)$).}}
The complement of $f$ with respect to the symmetrization function $\zeta$ is defined as, 
\begin{equation}
    \thicktilde{f}(x)= f(x) \oplus \zeta(f(x)) = f(x) \oplus f_{\mathsf{sym}}(x)\ .
\end{equation} 
\end{definition}

\noindent The construction translates all three Boolean functions $f(x),\ \zeta(f(x)),$ and $f(x)\oplus\zeta(f(x))$ to multi-linear polynomials. The $\zeta(f)$ function is translated using the $\mathcal{CSF}$ construction, given that it is a symmetric Boolean function, while for $f$ and $\thicktilde{f}$ the $\mathcal{KR}$ construction is employed to generate the following polynomials 

\begin{equation}
    f(x),\ \thicktilde{f}(x)\  \xmapsto{\ \mathcal{KR}\ }\  \mathsf{poly}_{f_1}(x),\  \mathsf{poly}_{\thicktilde{f}}(x);\ \ \ \   \zeta(f)(x)\  \xmapsto{\ \mathcal{CSF}\ }\  \mathsf{poly}_{\zeta(f)}(x)\ .
\end{equation}

\noindent From the resulting multi-linear polynomials, two valid solutions can be built.  The first is $\mathsf{poly}_{f_1}$ which derives directly from $f$ using the $\mathcal{KR}$ construction. The second solution results from the linear sum of $\mathsf{poly}_{\thicktilde{f}}(x)$ and $ \mathsf{poly}_{\zeta(f)}(x)$, provided that $\thicktilde{f}(x)\oplus \zeta(f)(x) = f(x)$. In the end, the sparsest of the two multi-linear polynomials will be used to generate the measurement assignment for the \textnormal{NMQC}$_\oplus$ computation,
\begin{equation} \mathsf{poly}_{f}= min \big \{\mathsf{sparsity}\big(\mathsf{poly}_{f_1}\big ),\ \mathsf{sparsity}\big(\mathsf{poly}_{\thicktilde{f}}(x)+\mathsf{poly}_{\zeta(f)}(x)\big )\big \} .
\end{equation}

In order to demonstrate how advantageous this construction could be, we will select a specific symmetrization function $\zeta_C$. This symmetrization function takes the Boolean function $f$ and introduces all the missing terms such that if there is a term of degree $k$ in its ANF, then the corresponding CSF $C^k$ will occur in its symmetrized version. 

\begin{definition}
\textnormal{ \textbf{($\zeta_C(x)$).}}
The symmetrization function $\zeta_C(x):\big(\{0,1\}^n\rightarrow\{0,1\}\big) \rightarrow \big ( \{0,1\}^n\rightarrow\{0,1\} \big)$ is defined as
\begin{equation}
    \zeta_C(f(x))= \bigoplus_{k\in T} c_k *C^k(x),\  with\ T=\{|S|\ |\ c_S \neq 0\}.
\end{equation}
\end{definition} 

\noindent We can, then, use this function $f$ to generate a multi-linear polynomial for an arbitrary  Boolean function. In particular, we shall consider a class of Boolean functions for which there is an asymptotical advantage of this new constriction over all the previously described. 

\begin{definition}\textnormal{ \textbf{($[aC^k]$).}}
The class of almost complete symmetric functions $[aC^k]$ is composed by families of Boolean functions $f$ such that, 
\begin{equation}
    f(x) \in [aC^k] \implies f(x) = C^k(x) \bigoplus_{i=0}^t \Bigg (  \prod_{j \in S_i} x_j  \Bigg )
\end{equation}
\noindent with $|S_i|=k$ for all $i\in [0,t]$, $x\in \{0,1\}^n$ and $k,t\in \mathbb{N}$.
\end{definition}

\noindent Function in this class miss only a fixed set of terms is missing to be precisely equal to a complete symmetric function $C^k$.

\begin{lemma}\label{acbound}
   All families of Boolean functions in the $[aC^k]$ class have multi-linear polynomials with sparsities that scale at most with the same asymptotical rate as its symmetrized function $C^k$. 
\end{lemma}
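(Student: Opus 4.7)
The plan is to run the $\mathcal{SC}(\zeta_C)$ construction on an arbitrary family in $[aC^k]$ and track the sparsity of its output. First I note that for $f\in[aC^k]$ every monomial in the ANF of $f$ has degree exactly $k$, so the index set in the definition of $\zeta_C$ satisfies $T=\{k\}$ and hence $\zeta_C(f)=C^k$. The complement is therefore
\begin{equation*}
\thicktilde{f}(x)=f(x)\oplus C^k(x)=\bigoplus_{i=0}^{t}\prod_{j\in S_i}x_j,
\end{equation*}
a linear sum of at most $t+1$ monomials of degree $k$ with both $t$ and $k$ fixed parameters of the family, independent of the input size $n$. The candidate $\mathcal{SC}(\zeta_C)$-polynomial for $f$ is $\mathsf{poly}_{\thicktilde{f}}+\mathsf{poly}_{C^k}$, and it suffices to bound each summand separately.

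For $\mathsf{poly}_{\thicktilde{f}}$, obtained via the $\mathcal{KR}$ construction, each monomial $\prod_{j\in S_i}x_j$ depends on only $k$ variables, so its Krawtchouk transform of order $k+1$ produces a polynomial supported on at most $2^k$ parity bases, all drawn from subsets of $S_i$. Summing the $t+1$ such polynomials gives $\mathsf{sparsity}(\mathsf{poly}_{\thicktilde{f}})\le(t+1)\cdot 2^k=O(1)$. For $\mathsf{poly}_{C^k}$ I would invoke Lemma \ref{polinomials} and Proposition \ref{decomp}: when $k$ is a power of two the explicit expression in Lemma \ref{polinomials} is dominated by the $\binom{n}{k/2}$ parity bases of cardinality $k/2$, giving sparsity $\Theta(n^{k/2})$; for general $k$ with binary-digit set $R_k$, Proposition \ref{decomp} writes $\mathsf{poly}_{C^k}=\prod_{r\in R_k}\mathsf{poly}_{C^{2^r}}$, and the sparsity of this product is bounded above by the product of the factor sparsities (since multiplying two parity bases on $\{-1,1\}^n$ yields the single parity basis indexed by their symmetric difference), yielding $O(n^{\lfloor k/2\rfloor})$.

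Combining the two bounds, $\mathsf{poly}_{\thicktilde{f}}+\mathsf{poly}_{C^k}$ has sparsity $O(n^{\lfloor k/2\rfloor})+O(1)=O(n^{\lfloor k/2\rfloor})$, matching the asymptotic rate of $\mathsf{poly}_{C^k}$; since $\mathcal{SC}(\zeta_C)$ selects the sparser of its two candidates, the output polynomial for $f$ inherits this bound. The main obstacle I anticipate is making the product-sparsity estimate for $\mathsf{poly}_{C^k}$ at non-power-of-two $k$ tight enough that the comparison is non-vacuous: the multiplicative upper bound is elementary, but one must also verify that the leading $\binom{n}{\lfloor k/2\rfloor}$ contribution persists in the expanded product rather than being lost to cancellations. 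A secondary technical point is that $\mathsf{poly}_{\thicktilde{f}}$ and $\mathsf{poly}_{C^k}$ may share parity bases, but this can only further reduce the combined sparsity and therefore preserves the claimed asymptotic upper bound.
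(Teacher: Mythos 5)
Your proposal is correct and follows essentially the same route as the paper's proof: apply the $\mathcal{SC}(\zeta_C)$ construction, observe that $\zeta_C(f)=C^k$ while the complement $\thicktilde{f}$ is a fixed (input-size-independent) collection of degree-$k$ monomials whose polynomial has constant sparsity (your bound $(t+1)\cdot 2^k$ via $\mathcal{KR}$), so the combined polynomial is asymptotically dominated by the polynomial for $C^k$. The only superfluous part is pinning the rate to $O(n^{\lfloor k/2\rfloor})$ — the lemma only requires the relative comparison with whatever polynomial is used for $C^k$, and that explicit rate is supported by Lemma \ref{polinomials} only for degrees built from powers of two up to $64$ (i.e.\ $k\leq 127$), so stating it for general $k$ overclaims without affecting the validity of the lemma itself.
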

\begin{proof}
 Suppose the $\mathcal{SC}$ construction is used to compute the multi-linear polynomials for a family of Boolean functions $f$ contained in the class of $[aC^k]$. In this case, these will result from combining two other multi-linear polynomials: the multi-linear polynomials corresponding to the symmetrized Boolean functions $\zeta_C(f)$ and  multi-linear polynomials corresponding to the complement of $f$ with regard to $\zeta_C(f)$, $\thicktilde{f}$. These particular functions will be by the definitions of $[aC^k]$ and $\zeta_C$ equal to some CSF $C^k$ for the former, and equal to some constant number terms ($\bigoplus_{i=0}^t (  \prod_{j \in S_i} x_j )$) for latter. Hence, the multi-linear polynomials resulting from the $\zeta_C(f)$ have sparsities equal to $C^k$, and the polynomials corresponding to the complement $\thicktilde{f}$ have a constant sparsity. Therefore, assymptotically, only the multi-linear polynomials corresponding to the symmetrized functions contribute to their sparsity.  
\end{proof}

The first observation that follows from Lemma \ref{acbound} is that for specific elements in the $[aC^k]$ class, one can take direct advantage of the sparse CSF polynomials described in Section \ref{symmetric} to lessen their upper bound. In Example, \ref{acexample}, a specific family of Boolean functions was defined for which the best qubit scaling went from $\Theta(n^{64})$ to $\Theta(n^{32})$ simply with the use of the $\mathcal{SC}$ construction with the $\zeta_C$ function and a CSF polynomial. Furthermore, similar reductions are feasible for the whole $[aC^k]$ class. In contrast, the $\mathcal{EF}$ construction and similar techniques were utterly blind to the possibility of inheriting the sparsity of SBFs.

\begin{example}\label{acexample}
Consider function $aC^{64}(x)= C^{64}(x)\oplus x_1x_2....x_{64}$, which is built using a CSF function and a constant term added to it. If one uses the $\mathcal{EF}$ construction, the resulting multi-linear polynomials have the following form, 
\begin{equation}
    poly_{aC^{64}(x)}= \Bigg(\sum_{i=0}^{64} c_i  \bigg ( \sum_{S \subseteq [n], |S|=i} \bigoplus_{j\in S} x_j \bigg ) \Bigg )- poly_{x_1x_2...x_{64}}(x)
\end{equation}
\noindent with a sparsity of $\mathsf{sparsity}(poly_{aC^{64}(x)})=\sum_{i=0}^{64}\binom{n}{i}-1$, that scales asymptotically with $\Theta(n^{64})$. Suppose the $\mathcal{SC}$ construction is used on the same function. In this case, it builds different multi-linear polynomials using the symmetrized function $\zeta_C(aC^{64})$, which is exactly equal to $C^{64}$ and the complement $\thicktilde{aC^{64}}$, which is equal to the term $x_1x_2...x_{64}$. Therefore the resulting multi-linear polynomials will have the following form

\begin{equation}
    poly_{aC^{64}(x)}= \Bigg(\sum_{i=0}^{32} c_i  \bigg ( \sum_{S \subseteq [n], |S|=i} \bigoplus_{j\in S} x_j - \sum_{S \subseteq [n] , |S|=n-i+1}  \bigoplus_{j \in S} x_j  \bigg ) \Bigg )+ poly_{x_1x_2...x_{64}}(x)\ ,
\end{equation}

\noindent which has a sparsity of $\mathsf{sparsity}(poly_{aC^{64}(x)})= 2\sum_{i=1}^{31} \binom{n}{i}+\binom{n}{32}+\sum_{j=33}^{64} \binom{64}{j}$, that scales asymptotically with $\Theta(n^{32})$, and which determines the number of qubits necessary to evaluate the Boolean function deterministically in the \textnormal{NMQC}$_\oplus$ model. 
\end{example}

In summary, the construction proposed entails the possibility of exploring SBF to generate more efficient \textnormal{NMQC}$_\oplus$ computations, not only for symmetric but also for general Boolean functions. From a practical point of view, we introduced a specific symmetrization function $\zeta_C$ to demonstrate the impact the construction could have on a whole class of Boolean families $[aC^k]$ by reducing the computational requirements of all their elements regarding an evaluation in the \textnormal{NMQC}$_\oplus$ model. However, there is plenty of space for improvement in selecting and creating other symmetrization functions. One could define more elaborated functions that symmetrize the arbritary Boolean functions so that classes larger than $[aC^k]$ are reduced. Consequently, further study of this construction is in order to illustrate its potential advantages.

\section{Bounding the resources for the \texorpdfstring{$\textnormal{NMQC}_\oplus$}{NQMC} model}\label{tbound}

This section discusses bounds on the resources needed for the deterministic evaluation of Boolean functions in the \textnormal{NMQC}$_\oplus$ model. For that,
we will analyze the Clifford hierarchy of the measurement operators, and the sparsity of the resource state.
Both measures follow naturally from the definition of the model, which performs computations via measurements applied to a resource state with auxiliary linear computations. In particular, these measures will be examined in light of the new constructions proposed in the previous section, producing mainly new upper bounds for the \textnormal{NMQC}$_\oplus$ model.

\subsection{Clifford hierarchy of the measurement operators}

The measurement operators required for deterministic computations in the \textnormal{NMQC}$_\oplus$ model are described by Equation \eqref{measurements} as single qubit rotation around the Bloch sphere. In particular, they can be decomposed into a unitary operation $U_i$ and a measurement on the computational basis,
 
\begin{center}
\begin{quantikz}
    \meter{\text{$M_i$}}  & = &  & \gate{U_i} & \meter{} & \text{.}
\end{quantikz}
\end{center}

\noindent Therefore, the corresponding unitaries $U_i$ capture the complexity of the measurement operations imposed by the Boolean function in the \textnormal{NMQC}$_\oplus$ model \footnote{Interestingly, determining the complexity of unitary operations has been an important research topic in quantum computing, with extensive applications from circuit synthesis to quantum error correcting codes \cite{Matthew13, Bullock06, Juha04, Zeng08}.}. Intuitively, operations are considered simple if they require a small number of gates and low-depth circuits to be executed. In contrast, other more complex unitaries require more computational resources \cite{Gary20,Zeng08,Nadish20}. In order to obtain a better characterization, Gottesman and Chuang proposed the Clifford hierarchy to define specific levels of complexity for unitary operations \cite{GottesChuang99}.

\begin{definition}
\textnormal{ \textbf{(Clifford hierarchy).}}
The Clifford hierarchy is defined recursively as
\begin{equation}
    Cl_k:=\{\ U\ |\ UPU^\dagger \in Cl_{k-1},\ \forall P \in \mathcal{P}\}
\end{equation}
\noindent for $\mathcal{P}$ the Pauli group. 
\end{definition}

\noindent This specific hierarchy has been connected to several fundamental results. For instance, the second level of the hierarchy, $Cl_2$, defines the Clifford group, which was proven to be efficiently simulable by classical computers, a result known as the Gottesman-Knill theorem \cite{Gottesman98}. It is also known that any element of the third level, $Cl_3$, together with the Clifford group generates a universal gate set for quantum computation. For instance, the gate set Clifford$+$T is often used as a universal gate set, denoting by T the unitary from the third level of the hierarchy \cite{Moska13, Zeng08}.

For the \textnormal{NMQC}$_\oplus$ model, it was proven that the level of the measurement operators in the Clifford hierarchy directly bounds the complexity of the Boolean functions that can be evaluated deterministically. More precisely, in \cite{Frembs22}, the authors prove that any measurement assignment with measurement operators 
belonging to at most the $k$th level of Clifford hierarchy cannot deterministically compute Boolean functions of degree greater than $k$. This result defines a lower bound on the hierarchy level required for deterministic computations. 

Reciprocal to this finding, we determine an upper bound for the highest level of the Clifford hierarchy required for a particular Boolean function. For that, we define the granularity of semi-Boolean functions, given that this quantity determines the type of measurement operators used in a measurement assignment (Equation \eqref{fmeasure} and Equation \eqref{measurements}).

\begin{definition}
\textnormal{ \textbf{(Granularity).}}
The granularity of a semi-Boolean function $f_{\ast}:\{0,1\}^n \rightarrow \mathbb{R}$ is given by
\begin{equation}
    \mathsf{gran}(f)= max_{S\subseteq [n]}  \ \mathsf{gran}\big(\widehat{f}(S)\big)\ 
\end{equation}
with the granularity of each Fourier coeficient or any rational number being equal to the smallest $k$ such that $\widehat{f}(S)*2^k\in \mathbb{N}.$
\end{definition}

\noindent Then, we prove that the $\mathcal{KR}$ construction always produces measurement assignments based on semi-Boolean functions that have a granularity bounded by the degree of the Boolean function. 

\begin{lemma}\label{coef}
All Fourier coefficients of a multi-linear polynomial resulting from the $\mathcal{KR}$ construction applied to a Boolean function $f$ have their granularity $\mathsf{gran}\big(\widehat{f}(S)\big)$ bounded by $\mathsf{deg}(f)$.
\end{lemma}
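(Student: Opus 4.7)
The plan is to trace how denominators arise in the $\mathcal{KR}$ construction and to observe that each contribution is controlled by the degree of the corresponding monomial. Concretely, for each monomial $\prod_{i \in S}x_i$ appearing in the ANF of $f$, the construction treats it as the symmetric AND on $|S|$ variables and applies a Krawtchouk transform of dimension $|S|+1$ with normalization $1/2^{|S|}$. Since the Krawtchouk matrix has integer entries (the Krawtchouk polynomials take integer values at integer arguments, as recalled in Appendix \ref{Krawtchouk}), every Fourier coefficient produced for this monomial has the form $m/2^{|S|}$ with $m \in \mathbb{Z}$. Hence the partial multi-linear polynomial $\mathsf{poly}_S$ attached to this monomial has granularity at most $|S| \le \mathsf{deg}(f)$.

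Next I would assemble these contributions. The $\mathcal{KR}$ construction sets
\begin{equation}
\mathsf{poly}_f(x) = \sum_{S \subseteq [n]} (-1)^{t_S}\, c_S \, \mathsf{poly}_S(x),
\end{equation}
so the Fourier coefficient $\widehat{f}(S')$ of any parity basis $\bigoplus_{i\in S'} x_i$ in $\mathsf{poly}_f$ is a signed sum of contributions, each inherited from a monomial of degree at most $\mathsf{deg}(f)$ and therefore of granularity at most $\mathsf{deg}(f)$. Using the elementary fact that rationals $a_1/2^{k_1} + \cdots + a_r/2^{k_r}$ share the common denominator $2^{\max_i k_i}$, the resulting coefficient has the form $m'/2^{\mathsf{deg}(f)}$ for some integer $m'$, giving $\mathsf{gran}\big(\widehat{f}(S')\big) \le \mathsf{deg}(f)$.

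The only point requiring care is the clean identification of the power of two appearing in the normalization of the Krawtchouk transform applied to a single monomial (which must match $|S|$, not $n$), together with the integrality of the matrix entries; both facts follow directly from the definition of the transform in Appendix \ref{Krawtchouk}. The remainder is elementary denominator tracking and does not need any structural information about $f$ beyond its ANF degree, which is exactly why the bound collapses to $\mathsf{deg}(f)$ uniformly over all Fourier coefficients.
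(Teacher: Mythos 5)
Your proposal is correct and follows essentially the same route as the paper's proof: per-monomial Fourier coefficients are dyadic rationals with denominator $2^{|S|}\le 2^{\mathsf{deg}(f)}$, and the signed sum over monomials keeps the denominator at most the largest power of two, hence granularity at most $\mathsf{deg}(f)$. Your justification via the integrality of the Krawtchouk matrix entries and the $1/2^{|S|}$ normalization is just a slightly more explicit version of the paper's appeal to the Fourier transform computing the same coefficients.
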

\begin{proof}
The Fourier transform applied to each monomial ($\prod_{k \in S_j} x_k$) in the ANF of the Boolean function generates a multi-linear polynomial $\mathsf{poly}_i$ with Fourier  coefficients $\widehat{f}(S_{j,1})$ of the type $\frac{l_{S_{j,k}}}{2^i}$, with $S_{j,k}\subseteq S_j$, $k\in [2^i]$, and  $i=\mathsf{deg}\big( \prod_{k \in S_j} x_k\big )$ by definition of the Fourier transformation \cite{Odonnel2021}. Therefore, the same is true for the Krawtchouk transform, which computes the same Fourier coefficients.

Then, all such individual polynomials are summed or subtracted based on the coefficients $c_i \in \{0,1\}$ to generate the final multi-linear polynomial, 
\begin{align}
&\underbrace{(-1)^{c_{1}}\bigg( \overbrace{\frac{l_{S_{1,1}}}{2^{i_1}}}^{\widehat{f}(S_{1,1})} \bigoplus_{k\in S_{1,1}} x_k + \frac{l_{S_{1,2}}}{2^{i_1}}\hdots + \hdots + \frac{l_{S_1,i_1}}{2^{i_1}}\hdots \bigg)}_{\mathsf{poly}_1}+ \\
&\underbrace{(-1)^{c_{2}} \bigg( \frac{l_{S_{2,1}}}{2^{i_2}}\hdots  + \frac{l_{S_{2,2}}}{2^{i_2}}\hdots + \hdots + \frac{l_{S_2,i_2}}{2^{i_2}}\hdots \bigg)}_{\mathsf{poly}_2} + \hdots = \mathsf{poly}_f \ .
\end{align}
\noindent Therefore the resulting Fourier coefficients ($\widehat{f}(S)$) for each 
of the parity bases ($\bigoplus_{k\subseteq S} x_k$) in the final multi-linear polynomial are equal to the sum of all non-zero Fourier coefficients that have the same parity base in the multi-linear polynomials resulting from each monomial, $\frac{l_S}{t}= \sum_{S_{j,k}=S} \frac{(-1)^{c_{j}}*l_{S_{j,k}}}{2^{i_{j}}}$. Consequently, the denominator $t$ of the resulting value is equal to the least common multiple between all denominators, given that we are summing and subtracting rational numbers. Additionally, all the denominators are of the form $\frac{1}{2^i}$, for which the $l.c.m(\frac{1}{2^{i_1}},\frac{1}{2^{i_2}},...,\frac{1}{2^{i_n}})=max(\frac{1}{2^{i_1}},\frac{1}{2^{i_2}},...,\frac{1}{2^{i_n}})$. This proves that, for any Fourier coefficient in the final multi-linear polynomial, the corresponding denominator is at most equal to the largest denominator present in the sum. Thus, the largest denominator is at most equal to the denominators of the monomials defining the degree of the Boolean function, which implies that the resulting Fourier coefficients have a denominator that is at most equal to $\frac{1}{2^{\mathsf{deg}(f)}}$, proving the lemma. 
\end{proof}

In addition, we relate the level of the Clifford hierarchy of measurement operators with the angles' granularity using the semi-Clifford hierarchy.

\begin{definition}\textnormal{ \textbf{(Semi-Clifford hierarchy).}}
A gate $U$ is in the semi-Clifford hierarchy of level-k if $U=A_1DA_2$, such that $A_1,A_2 \in Cl_2^N$ are Clifford gates and $D\in Cl_k^N$ is the diagonal. The set of k-th level Clifford gates is denoted by $SCl_k^N$.
\end{definition}

\begin{lemma}\label{operators}
All measurement operators of type $ M=\cos \big(\phi \big)\sigma_x +\sin \big(\phi \big) \sigma_y$ are in the level-$\mathsf{gran}\big(\phi\big)+1$ of the Clifford hierarchy. 
\end{lemma}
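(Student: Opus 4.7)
My plan is to reduce the measurement operator $M$ to a product of a single $Z$-rotation and a Pauli, and then bound the Clifford hierarchy level of that rotation inductively on its angle granularity.

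The first step is to observe that $M = \cos(\phi)\sigma_x + \sin(\phi)\sigma_y = R_z(\phi)\,\sigma_x\, R_z(-\phi)$, where $R_z(\phi) = e^{-i\phi \sigma_z/2}$. Using the anticommutation $\sigma_z \sigma_x = -\sigma_x \sigma_z$, one derives the identity $\sigma_x R_z(-\phi) = R_z(\phi)\sigma_x$, so the sandwich collapses to $M = R_z(2\phi)\,\sigma_x$. This is a semi-Clifford decomposition of the required form $A_1 D A_2$ with $A_1 = I \in Cl_2$, $A_2 = \sigma_x \in Cl_2$, and diagonal part $D = R_z(2\phi)$. It therefore suffices to control the Clifford hierarchy level of the $Z$-rotation.

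The core claim is $R_z(\phi) \in Cl_{\mathsf{gran}(\phi)+1}$, proved by induction on $\mathsf{gran}(\phi)$. For the base case $\mathsf{gran}(\phi) = 0$, the angle $\phi$ is an integer multiple of $\pi$, so $R_z(\phi)$ equals $\pm I$ or $\pm i\sigma_z$ up to global phase, both in the Pauli group $Cl_1$. In the inductive step, assume the claim for granularity $k-1$ and let $\mathsf{gran}(\phi) = k \geq 1$; one must show $R_z(\phi)\, P\, R_z(-\phi) \in Cl_k$ for each Pauli $P$. When $P = \sigma_z$ the rotation commutes through and the conjugate is $\sigma_z \in Cl_1 \subseteq Cl_k$. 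When $P = \sigma_x$, the identity above gives $R_z(\phi)\sigma_x R_z(-\phi) = R_z(2\phi)\sigma_x$; since cancelling one factor of two yields $\mathsf{gran}(2\phi) = k-1$, the induction hypothesis places $R_z(2\phi) \in Cl_k$, and a short computation shows that $Cl_k$ is closed under multiplication by Paulis, via $(U P_0)\, P\, (U P_0)^\dagger = \pm U P U^\dagger \in Cl_{k-1}$. The case $P = \sigma_y$ is symmetric.

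Combining the two pieces, applying the inductive conclusion to $2\phi$ places $R_z(2\phi) \in Cl_{\mathsf{gran}(\phi)}$, so the decomposition $M = I\cdot R_z(2\phi)\cdot \sigma_x$ puts $M$ in $SCl_{\mathsf{gran}(\phi)+1}^1$, which embeds into $Cl_{\mathsf{gran}(\phi)+1}$ because sandwiching a $Cl_k$ diagonal gate between two Cliffords preserves the hierarchy level (conjugation by a Clifford maps Paulis to Paulis and therefore maps $Cl_{k-1}$ to itself). The main obstacle I anticipate is the bookkeeping: cleanly tracking the granularity relation $\mathsf{gran}(2\phi) = \mathsf{gran}(\phi) - 1$ for $\mathsf{gran}(\phi)\geq 1$, and verifying the two structural closure properties of the Clifford hierarchy, under right-multiplication by Paulis and under conjugation by Cliffords, that drive the induction.
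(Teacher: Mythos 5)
Your proposal is correct, and it starts from exactly the decomposition the paper uses --- multiplying $M$ by a Pauli collapses it to a diagonal $Z$-rotation, $M=R_z(2\phi)\,\sigma_x$ --- but from that point the two arguments diverge. The paper treats this as a semi-Clifford decomposition and then leans on two external facts: that a diagonal rotation by a dyadic multiple of $\pi$ sits at the corresponding level of the diagonal Clifford hierarchy \cite{Gary20}, and that $SCl_k^1=Cl_k^1$ for a single qubit \cite{Zeng08}, which converts the semi-Clifford level into a Clifford level. You instead make the whole argument self-contained: you bound the level of $R_z(\phi)$ by induction on $\mathsf{gran}(\phi)$, using the identity $R_z(\phi)\sigma_x R_z(-\phi)=R_z(2\phi)\sigma_x$ together with the elementary closure facts that right-multiplying a $Cl_k$ element by a Pauli, or sandwiching a $Cl_k$ diagonal between Cliffords, does not raise its level. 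This buys independence from the cited classification results and in fact yields the slightly sharper conclusion $M\in Cl_{\mathsf{gran}(\phi)}$ when $\mathsf{gran}(\phi)\geq 1$ (which is also what the paper's own proof text ends up asserting); the stated bound $Cl_{\mathsf{gran}(\phi)+1}$ then follows from the nesting of the hierarchy. Two bookkeeping points to make explicit if you write this up: the induction tacitly assumes $\phi$ is a dyadic multiple of $\pi$, so that $\mathsf{gran}(\phi)$ is finite and $\mathsf{gran}(2\phi)=\mathsf{gran}(\phi)-1$ when $\mathsf{gran}(\phi)\geq 1$ (the only case in which the lemma is invoked), and verifying the conjugation condition on $\sigma_x,\sigma_y,\sigma_z$ alone suffices because the phases $\pm 1,\pm i$ carried by general Pauli-group elements are invisible under conjugation.
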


\begin{proof}
This type of operator can be characterized by the semi-Clifford hierarchy, given that it can be decomposed  as, 
\begin{align}
    I*M_i*\sigma_x &= I \big(\cos \big( \phi \big)\sigma_x +\sin \big( \phi \big) \sigma_y \big) \sigma_x \\
    &= \cos \big( \phi \big)I -i\sin \big( \phi \big) \sigma_z) \\
    &= R_Z\big( \phi \big) \ ,
\end{align}
\noindent noting that $R_Z\Big(\frac{\pi l}{2^j}\Big)$ is a diagonal unitary gate of level-$2^{j+1}$ in the Clifford hierarchy \cite{Gary20}, while $I,\sigma_x \in Cl_2$. We conclude that gates of the form $M$ are  in the $\mathsf{gran}\big(\phi_i\big)$-level of the semi-Clifford hierarchy. Additionally, in \cite{Zeng08} it was proven that $Cl_k^1=SCl_k^1$, such that all $M\in SCl_{2^{i+1}}^1= Cl_{2^{i+1}}^1$. Therefore we have proven that a measurement operator of the type of $M$ is in the $\mathsf{gran}\big(\phi\big)$-level of the Clifford hierarchy. 
\end{proof}

Finally, it is possible to define an upper bound on the level of the Clifford hierarchy of the measurement operators necessary for deterministic evaluations of Boolean functions.

\begin{theorem}\label{maxcliff}
For any Boolean function $f$, the $\mathcal{KR}$ construction generates deterministic \textnormal{NMQC}$_\oplus$ evaluations that use measurement operators belonging to at most the $\mathsf{deg}(f)$-level of the Clifford hierarchy 
\end{theorem}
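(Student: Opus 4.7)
The plan is to establish Theorem \ref{maxcliff} as essentially a one-line composition of the two preceding lemmas, with the only real content being careful bookkeeping of constants as one passes from Fourier coefficients to measurement angles and then to the Clifford hierarchy.

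First, I would fix an arbitrary Boolean function $f$ and apply the $\mathcal{KR}$ construction to obtain a multi-linear polynomial $\mathsf{poly}_f$. By Lemma \ref{coef}, every non-zero Fourier coefficient of $\mathsf{poly}_f$ has granularity bounded by $\mathsf{deg}(f)$, so each such coefficient can be written in the form $\widehat{f}(S) = l_S / 2^{\mathsf{deg}(f)}$ with $l_S \in \mathbb{Z}$. Then I would use the explicit recipe for translating $\mathsf{poly}_f$ into a measurement assignment, namely Equations \eqref{angles} and \eqref{fmeasure}, to compute the measurement angles $\theta_i$ and $\phi_i$. In particular, $\phi_i = -2\pi \widehat{f}(S_i) = -\pi l_{S_i} / 2^{\mathsf{deg}(f)-1}$, and $\theta_i$ inherits the same granularity since it is a rational combination of the $\widehat{f}(S)$ over the same denominator. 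Consequently, the angle $\alpha_i := \theta_i + \phi_i s_i$ that appears inside each measurement operator $M_i(s_i)$ is an integer multiple of $\pi/2^{\mathsf{deg}(f)-1}$.

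Second, I would apply Lemma \ref{operators} to each $M_i(s_i) = \cos(\alpha_i)\sigma_x + \sin(\alpha_i)\sigma_y$. Since the granularity of $\alpha_i$ is at most $\mathsf{deg}(f)-1$, the lemma places $M_i(s_i)$ at level at most $\mathsf{deg}(f)$ of the Clifford hierarchy, which is precisely the conclusion of the theorem. A minor case to address separately is $\mathsf{deg}(f) = 1$, where $f$ is affine and the resulting measurements reduce to Pauli $\sigma_x$ or $\sigma_y$, sitting in the first level of the hierarchy and thus trivially satisfying the bound.

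The main point that requires care is the constant-shuffling between the rational representation of $\widehat{f}(S)$, the factor of $-2\pi$ that turns it into the physical angle $\phi_i$, and the conversion from angle granularity to Clifford level via the $R_Z$ identification used in the proof of Lemma \ref{operators}. Beyond this accounting step, the theorem is a direct corollary of Lemma \ref{coef} and Lemma \ref{operators}, and no new structural argument is needed.
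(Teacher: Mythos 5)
Your overall strategy coincides with the paper's: apply Lemma \ref{coef} to bound the granularity of the Krawtchouk-generated Fourier coefficients by $\mathsf{deg}(f)$, pass to the angles via Equation \eqref{fmeasure} so that each $\phi_i=-2\pi\widehat{f}(S_i)$ is an integer multiple of $\pi/2^{\mathsf{deg}(f)-1}$, and then invoke Lemma \ref{operators}. The gap is in your treatment of $\theta_i$. You assert that $\theta_i$ ``inherits the same granularity since it is a rational combination of the $\widehat{f}(S)$ over the same denominator'', but Equation \eqref{angles} defines $\theta_i=\frac{\pi}{k}\sum_{S\subseteq[n]}\widehat{f}(S)$, where $k$ is the sparsity, i.e.\ the number of qubits, and $k$ is in general not a power of two. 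Hence $\theta_i$ need not be a dyadic multiple of $\pi$ at all, in which case $\mathsf{gran}(\theta_i+\phi_i s_i)$ is not even finite, Lemma \ref{operators} does not apply, and $M_i(s_i)$ sits in no finite level of the Clifford hierarchy. A concrete failure: for $f(x)=1\oplus x_1x_2$ (NAND, $\mathsf{deg}(f)=2$) the $\mathcal{KR}$ construction yields $k=3$ and $\sum_S\widehat{f}(S)=\pm1$, so Equation \eqref{angles} gives $\theta_i=\pm\pi/3$, and your claim that $\alpha_i=\theta_i+\phi_i s_i$ is an integer multiple of $\pi/2^{\mathsf{deg}(f)-1}$ breaks down.

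The paper inserts an extra step exactly at this point: it first argues that $\mathsf{gran}(\theta_i)$ can be made zero by a linear shift of the Boolean function, a free operation absorbed into the classical side processing (in the NAND example, evaluate $x_1x_2$ and flip the output in post-processing), and only then concludes $\mathsf{gran}(\theta_i+\phi_i)\leq \mathsf{deg}(f)-1$ before applying Lemma \ref{operators}. Your argument is patchable by the same move (or by any redistribution of the constant term that keeps each offset an integer multiple of $\pi$), but as written the $\theta_i$ step is incorrect rather than mere constant bookkeeping, and it is the one place where the theorem requires an idea beyond the composition of the two lemmas.
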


\begin{proof}
With Lemma \ref{coef} it is possible to bound the granularity of the Fourier coefficients determined by the Krawtchouk transform for an arbitrary Boolean function. These coefficients then define the angle of the measurement operator based on Equation \eqref{fmeasure}. Therefore, it is equally possible to bound the granularity of all the angles used in the measurement assigned by the degree of the Boolean function. In addition, the granularity of $\theta_i$ from Equation \eqref{measurements} can be made equal to zero with a linear shift of the Boolean function (a free operation). Consequently, the granularity $\mathsf{gran}(\theta_i+\phi_i)$ of the total angle defining the measurement operator can be bounded by $  \mathsf{deg}(f)-1$. Now, it suffices to use Lemma \ref{operators} to bound the level of the Clifford hierarchy of the selected measurement operators. This finishes the proof given that the Krawtchouk transform always generates measurement assignments that evaluate the intended Boolean function deterministically. 
\end{proof}

This result demonstrates that minimizing the Clifford hierarchy of the measurement operators is always possible and that the degree of the Boolean function alone defines the minimum and maximum level required. For the upper bound, we used the granularity of the Fourier spectrum, which abstractly was the same concept used in \cite{Frembs22} for the lower bound. Furthermore, in Boolean analysis, the granularity of the Fourier spectrum is also used to prove lower bounds \cite{chistopolskaya2018parity}. However, in those cases, only Boolean functions are considered. In contrast, for the \textnormal{NMQC}$_\oplus$ model, due to the symmetry of the cosine function, there is a large class of equivalent semi-Boolean functions that compute the intended Boolean functions correctly. Therefore, these results can not be translated to the \textnormal{NMQC}$_\oplus$ model. Nevertheless, it was with the concept of the minimum possible granularity of all the possible semi-Boolean functions that the author of \cite{Mori2019} proved a lower bound for the sparsity of arbitrary Boolean functions. In the end, we highlight the importance of these measures to bound the necessary resources for the deterministic evaluation of Boolean functions in \textnormal{NMQC}$_\oplus$.

\subsection{Lower bound conjecture for symmetric Boolean functions}

The previous section introduced several methods to determine measurement assignments that deterministically compute a given Boolean function in the \textnormal{NMQC}$_\oplus$ model. Moreover, all these solutions consider the general GHZ state as the resource, which was proven to be the optimal resource \cite{Hoban2011a, Wolf2001}. Nevertheless, no construction guarantees that the sparsest polynomials are created. This indicates that determining the minimum number of qubits necessary to evaluate a Boolean function is a nontrivial problem. Additionally, it is possible to show that each Boolean function in the model is equivalent to a Bell inequality \cite{Hoban2011a}. Therefore, one could suspect that the difficulty of the former problem is associated to the known hardness of the latter one \cite{Brunner12}\footnote{Determining the maximal violation of a Bell inequality obtained from  measurements on a quantum state was proven to be NP-hard even in the tripartite case \cite{Kempe11}}. Nonetheless,  finding lower bounds for specific subsets of Boolean functions was proven possible in \cite{Hoban2011a}, and it would be exciting to extend these results further. 

Finding the smallest possible resource is generally a complex problem because there are an infinite number of valid measurement assignments for a particular Boolean function. In particular, all these solutions exist because the phase introduced after the measurement process, which defines the result of the computation, is invariant with respect to a $2\pi$ rotation (see Equation \eqref{cosmod}). Therefore, various multi-linear polynomials of type $\mathsf{poly}:\{0,1\}^n\rightarrow \pi\mathbb{Z}$ are equivalent in the model when all their outputs correspondingly map to the same equivalence class of the quotient set $\pi(\mathbb{Z}/2\mathbb{Z})$. Consequently, the minimum number of qubits necessary to evaluate a Boolean function in the \textnormal{NMQC}$_\oplus$ model can be defined by the sparsest element in this set.

\begin{definition}\label{minspar}\textnormal{ \textbf{(Minimal sparsity).}}
 The minimum number of qubits necessary for the deterministic evaluation of a Boolean function $f:\{0,1\}^n\rightarrow\{0,1\}$ in the \textnormal{NMQC}$_\oplus$ model can be defined as the sparsest multi-linear polynomial equivalent to the function, i.e.
 \begin{equation}
     min \Big  \{ \mathsf{sparsity} \big(\mathcal{F}(g)\big) \Big |\ g : \{0,1\}^n \rightarrow \mathbb{Z}\ s.\ t.\  \forall_{x\in \{0,1\}^n}\ f(x)\equiv g(x)\mod 2 \Big \}\ .
 \end{equation}
\end{definition}

\noindent Fortunately, while the number of multi-linear polynomials equivalent to the Boolean functions is infinite, the effective search space is finite \cite{Odonnel2021}. Nevertheless, the size of the search space is double exponential with the size of the input string \cite{Frembs22}. Thus, an exhaustive search for the sparsest multi-linear polynomials is not feasible.

Despite this tremendously ample search space, there are some restrictions that can make the problem considerably simpler for the set of SBF. More precisely, we will impose the constraint that only symmetric measurement assignments are allowed for their evaluation. 

\begin{definition}\textnormal{ \textbf{(Symmetric measurement assignment).}} A symmetric measurement assignment for an \textnormal{NMQC}$_\oplus$ computation is composed of a set of linear functions $L$, which is limited to having either all parity bases of a specific size or none,
\begin{equation}
L=\Big\{\bigoplus_{l\in S} x_l\  \Big |\ S \subseteq [n],\ |S|\in T \Big \},\ with\ T\subseteq [n]\ ,
\end{equation}

 \noindent and the corresponding dichotomic measurement operators, which have to be equal for all parity bases of the same size, i.e. $|S_i|=|S_j| \implies M_i(x) = M_j(x)\ .$
 
\end{definition}

\noindent Furthermore,  all symmetric measurement assignments for an \textnormal{NMQC}$_\oplus$ computation  are equivalent to a symmetric multi-linear polynomial ($\mathsf{poly}_{\mathsf{sym}}:\{0,1\}^n\rightarrow\mathbb{R}$) of the following form:
  \begin{equation}
\mathsf{poly}_{\mathsf{sym}}(x)=\pi*\bigg(\sum_{i\in T}\widehat{f}(|S|=i)*\sum_{S_j\subseteq [n],|S_j|=i}g^{-1}\bigg(  \bigoplus_{l\subseteq S_j} x_l  \bigg ) \bigg),\ with \ T\subseteq [n]\  \ . 
 \end{equation}

\noindent Therefore, with this restriction, the multi-linear polynomials that describe SBF have at most $n+1$ different Fourier coefficients concerning the size of the input string  $n$. Unfortunately, the set of solutions is still of an exponential size to perform a search. Nevertheless, knowing the particular polynomials with the smallest sparsities is not necessary to determine 
lower bounds. It is sufficient to independently learn the sparsities of the polynomials with the smallest sparsities. Hence, one could define a decision problem that only questions if there is a multi-linear polynomial equivalent to the SBF with a particular sparsity. Fortunately, the combinations of this restriction with the $\mathcal{KR}$ transform allow us to define an algorithm able of answering this decision problem in polynomial time.   

\begin{lemma}\label{ltest}
An algorithm with polynomial scaling runtime can decide whether there exists a symmetric multi-linear polynomial with a selected subset of non-zero Fourier coefficients, $F=\big\{ \widehat{f}(|S|=i)\ \big|\ i\in T,\ T \subseteq [n] \big\}$, that computes a particular symmetric Boolean function deterministically in the \textnormal{NMQC}$_\oplus$ model.
\end{lemma}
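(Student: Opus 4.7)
The plan is to express the existence question as an integer-lattice feasibility problem over the rationals, and then solve it with standard polynomial-time linear algebra and Hermite normal form.

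By the symmetry restriction and the Krawtchouk representation used in the $\mathcal{KR}$ construction, every candidate polynomial with Fourier support in $T$ has the form $\mathsf{poly}(x)=\pi\sum_{i\in T} c_i\,K_{n,i}(|x|)$, where $K_{n,i}(w)=\sum_{|S|=i}(-1)^{\bigoplus_{l\in S}x_l}$ is the integer Krawtchouk value at Hamming weight $w$. By Equation \eqref{cosmod}, deterministic evaluation of $f$ is equivalent to the $(n{+}1)$ congruences
\begin{equation*}
\sum_{i\in T}c_i K_{n,i}(w)\equiv v_f(w)\pmod 2,\qquad w=0,\dots,n.
\end{equation*}
Introducing integer slacks $m_w$ to remove the mod-$2$ condition rewrites this as the Diophantine system $K[T]\,c-2m=v_f$, where $K[T]$ is the $(n{+}1)\times|T|$ integer Krawtchouk submatrix, $c\in\mathbb{R}^{|T|}$ and $m\in\mathbb{Z}^{n+1}$.

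Feasibility of this system is equivalent to $v_f\in\mathrm{col}_{\mathbb{Q}}(K[T])+2\mathbb{Z}^{n+1}$: the matrix and right-hand side are integral, so if any real $c$ works then a rational $c$ also works. Projecting to the rational quotient $Q=\mathbb{Q}^{n+1}/\mathrm{col}_{\mathbb{Q}}(K[T])$ reduces the question to deciding whether the image $\bar v_f\in Q$ lies in the $\mathbb{Z}$-span of the images of $2e_0,\dots,2e_n$, a finitely generated subgroup of a finite-dimensional rational vector space. The membership test is then standard: (i) compute a basis of $\mathrm{col}_{\mathbb{Q}}(K[T])$ by Gaussian elimination; (ii) express $v_f$ and each $2e_w$ in a chosen basis of $Q$; (iii) decide membership by Hermite normal form.

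Runtime is polynomial because the Krawtchouk entries satisfy $|K_{n,i}(w)|\leq\binom{n}{i}\leq 2^n$, so all rationals appearing have bit length polynomial in $n$, and both Gaussian elimination and Hermite normal form run in polynomial time in the matrix dimensions and the entry bit length. The one point I expect to require care is controlling denominators when projecting into $Q$; this is handled by clearing denominators once and working within a rescaled integer lattice, so I do not foresee a genuine obstacle to the polynomial-time bound.
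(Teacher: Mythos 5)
Your proposal is correct, and it reaches the same polynomial-time conclusion through a dual formulation of the same underlying reduction. Both arguments hinge on the two facts that do the real work: symmetry collapses the problem to $n+1$ level-coefficients whose values depend only on Hamming weight through the integer Krawtchouk numbers (entries of bit length $O(n)$), and, by Equation \eqref{cosmod}, deterministic evaluation is equivalent to the polynomial taking integer values of the correct parity at every weight. The paper's proof (Appendix \ref{test}) parametrizes the solution space the other way around: it writes any valid polynomial as $g=f+2h$ with $h$ an integer-valued symmetric function, takes the simplified value vector $v_h\in\mathbb{Z}^{n+1}$ as the unknown, imposes that the Krawtchouk (Fourier) coefficients of $g$ vanish on the complement $E=[n]\setminus T$, and tests solvability of the resulting purely integer linear Diophantine system via the Smith normal form of the corresponding Krawtchouk submatrix. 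You instead take the rational coefficients $c_i$, $i\in T$, as unknowns together with integer slacks $m_w$, which yields the mixed feasibility question $v_f\in\mathrm{col}_{\mathbb{Q}}(K[T])+2\mathbb{Z}^{n+1}$, resolved by quotienting out the rational column space and testing lattice membership via Hermite normal form. The paper's route buys a single, purely integer system and a one-shot SNF divisibility check; yours makes the role of the allowed support $T$ explicit, avoids the $g=f+2h$ parametrization, and cleanly separates the rational degrees of freedom from the integrality constraints, at the cost of having to argue (as you do) that real solutions can be taken rational and that denominators in the quotient are cleared by a single common rescaling so that the image of $2\mathbb{Z}^{n+1}$ is indeed a discrete lattice amenable to HNF. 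Both yield the claimed polynomial runtime.
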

\begin{proof}
The proof is presented in Appendix \ref{test}.
\end{proof}
\noindent Furthermore, this algorithm can be employed to identify the order of magnitude of the sparsity of an arbitrary SBF in polynomial time (For more details, see Appendix \ref{conjecture}). Subsequently, applying these processes to the set of CSF functions led to the following conjecture,

\begin{conjecture}\label{lowerbound}
The number of qubits in a GHZ state necessary for the deterministic evaluation of a complete symmetric function $C^k$, of constant degree $k$, with an $n$ bit input string and a symmetric measurement assignment, scales as $\Omega(n^{\floor*{k/2}-1})$.
\end{conjecture}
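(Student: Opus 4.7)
The plan is to recast the problem in the Krawtchouk basis, exploit the reflection identity $K_{n-i}(w;n)=(-1)^w K_i(w;n)$ to split $f_\ast$ into two low-degree polynomial pieces, and then derive a finite-difference obstruction via Lucas's theorem. Any symmetric measurement assignment for $C^k$ corresponds to an integer-valued function $f_\ast:\{0,\dots,n\}\to\mathbb{Z}$ with $f_\ast(w)\equiv C^k(w)\pmod 2$, where $C^k(w)=\binom{w}{k}\bmod 2$. Expanding $f_\ast(w)=\sum_{i\in T} c_i K_i(w;n)$, the sparsity of the assignment equals $\sum_{i\in T}\binom{n}{i}$. Since $\binom{n}{i}$ is polynomially small only for indices near $0$ or $n$, the task reduces to showing that $T$ must contain some index $i^\ast$ with $\min(i^\ast,n-i^\ast)\geq\lfloor k/2\rfloor-1$, contributing $\binom{n}{\lfloor k/2\rfloor-1}=\Theta(n^{\lfloor k/2\rfloor-1})$ to the sparsity.

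The reflection identity $K_{n-i}(w;n)=(-1)^w K_i(w;n)$ follows from the substitution $j\mapsto w-j$ in the defining sum of $K_i$. Assume, for contradiction, that $T\subseteq\{0,\dots,d\}\cup\{n-d,\dots,n\}$ for some $d$. Splitting the Krawtchouk expansion according to low and high indices and applying the reflection yields
\begin{equation}
f_\ast(w)=p(w)+(-1)^w q(w),
\end{equation}
with $p,q$ polynomials in $w$ of degree at most $d$. Restricting to even $w=2m$ gives $\tilde A(m):=p(2m)+q(2m)$, a polynomial of degree at most $d$ in $m$ that is integer-valued on $m=0,\dots,\lfloor n/2\rfloor$; restricting to odd $w=2m+1$ gives $\tilde B(m):=p(2m+1)-q(2m+1)$, likewise of degree at most $d$. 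Both inherit the mod-$2$ constraint from $f_\ast$.

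Next invoke Lucas's theorem. For even $k$, $\binom{2m}{k}\equiv\binom{m}{k/2}\pmod 2$, so $\tilde A(m)\equiv C^{k/2}(m)\pmod 2$; for odd $k$, $\binom{2m}{k}\equiv 0\pmod 2$ while $\binom{2m+1}{k}\equiv\binom{m}{\lfloor k/2\rfloor}\pmod 2$, so the constraint passes to $\tilde B(m)\equiv C^{\lfloor k/2\rfloor}(m)\pmod 2$. In either parity the relevant polynomial $\tilde F$ has degree at most $d$, so $\Delta^{d+1}\tilde F=0$. However, iterating $\Delta\binom{m}{j}=\binom{m}{j-1}$ gives
\begin{equation}
\Delta^{d+1}C^{\lfloor k/2\rfloor}(m)\equiv\binom{m}{\lfloor k/2\rfloor-d-1}\pmod 2,
\end{equation}
which equals $1$ at $m=\lfloor k/2\rfloor-d-1$ for any $n\geq k$. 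This contradicts $\Delta^{d+1}\tilde F\equiv 0\pmod 2$ for every $d\leq\lfloor k/2\rfloor-1$, so $T$ must contain some index $i$ with $\min(i,n-i)\geq\lfloor k/2\rfloor$. The sparsity is therefore at least $\binom{n}{\lfloor k/2\rfloor}=\Omega(n^{\lfloor k/2\rfloor})$, which in fact exceeds the conjecture by one polynomial order.

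The main technical delicacy I anticipate is the careful handling of the rational Krawtchouk coefficients: the $c_i$ are not integers in general, so the mod-$2$ arguments must be phrased at the level of the integer values $f_\ast(w)$ and the derived polynomials $\tilde A,\tilde B$, not at the level of the coefficients themselves. A secondary subtlety concerns the overlap $\{0,\dots,d\}\cap\{n-d,\dots,n\}$ when $n<2d$, which only affects constants. With these caveats the plan reduces the conjecture to a sequence of algebraic identities consistent with the numerical evidence obtained via Lemma \ref{ltest}, and in fact shows that the upper bound of Lemma \ref{polinomials} would be tight up to constants for even $k\leq 64$.
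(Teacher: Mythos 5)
The statement you are addressing is only a \emph{conjecture} in the paper: the authors give no proof, and their support consists of the polynomial-time feasibility test of Lemma \ref{ltest} (a Krawtchouk/Smith-normal-form check for the existence of symmetric polynomials with prescribed support) run numerically on all CSFs up to 345-bit inputs, together with consistency with the known bounds for $C^2$ and $\mathsf{AND}$; they merely suggest a future proof via a general Smith decomposition of the Krawtchouk matrix. Your proposal is therefore a genuinely different route, and as far as I can verify it is sound: identifying a symmetric assignment with an integer-valued $f_\ast(w)\equiv\binom{w}{k}\pmod 2$ expanded in Krawtchouk polynomials matches the paper's setting exactly; the reflection identity $K_{n-i}(w;n)=(-1)^w K_i(w;n)$, the even/odd restriction, Lucas's theorem, and the annihilation of degree-$\leq d$ polynomials by $\Delta^{d+1}$ are all correct; and, crucially, you carry out the mod-$2$ reasoning on the integer \emph{values} rather than on the (real, non-integer) Fourier coefficients, which is where a naive coefficient-level argument would break and which correctly absorbs the freedom of adding $2h(x)$ as well as the constant offsets $\theta_i$, $a_{i_0}$. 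Note that what you prove is stronger than both your announced intermediate target ($\min(i^\ast,n-i^\ast)\geq\lfloor k/2\rfloor-1$) and the conjecture itself: the contradiction forces an index with $\min(i,n-i)\geq\lfloor k/2\rfloor$, hence sparsity at least $\binom{n}{\lfloor k/2\rfloor}=\Omega\big(n^{\lfloor k/2\rfloor}\big)$. This stronger bound is consistent with the paper's evidence (the numerics find symmetric solutions at scaling $\mathcal{O}(n^{k/2})$ and none at $\mathcal{O}(n^{k/2-1})$) and makes the upper bounds obtained from Lemma \ref{polinomials} tight in the exponent; what your sketch buys is an unconditional analytic argument replacing finite-size numerics, at the modest cost of some remaining bookkeeping (the window $m,\dots,m+d+1$ must lie in $[0,\lfloor n/2\rfloor]$, which only needs $n\geq k$; the degenerate overlap of $\{0,\dots,d\}$ and $\{n-d,\dots,n\}$ is irrelevant for constant $k$). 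Written up carefully, this would upgrade Conjecture \ref{lowerbound} to a theorem for symmetric measurement assignments, so it deserves a careful independent check precisely because it establishes more than the paper claims.
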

\noindent\textit{Support:} This conjecture is supported empirically by a numerical verification process which was applied to all CSF that can be defined with an input size up to 345 bits. The details of this numerical verification process are presented in Appendix \ref{conjecture}. Additionally, the conjecture is consistent with the lower bound for the $C^2$ function and the $\mathsf{AND}$ function. In particular, for the latter, each CSF function is exactly equal to the function for a specific input size, as for all $k\in\mathbb{N}$ and $|x|=k$ , $C^k(x)=\mathsf{AND}(x)$.
\vspace{0,2cm}

\noindent Remember that the set of CSFs behaves as a building block for all SBFs, Equation \eqref{symmetricrep}. Therefore, the correctness of this conjecture implies an equivalent lower bound for families of SBF with a constant degree.

\begin{corollary}
Any family of symmetric Boolean functions defined with a constant degree and evaluated with a symmetric measurement assignment in the \textnormal{NMQC}$_\oplus$ model requires a number of qubits which is asymptotically bounded by $\Omega(n^{\floor*{\mathsf{deg}(f)/2}-1})$.
\end{corollary}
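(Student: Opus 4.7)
The plan is to derive the corollary from Conjecture~\ref{lowerbound} via the CSF decomposition of symmetric Boolean functions in Equation~\eqref{symmetricrep}. Given any SBF $f$ of constant degree $d$, I would write $f = \bigoplus_{k \in T} c_k C^k$ with $d = \max T$ and $c_d = 1$. This rearranges to $C^d = f \oplus g$, where $g := \bigoplus_{k \in T,\, k<d} c_k C^k$ is an SBF of degree at most $d-1$. The strategy is then a contradiction argument: any symmetric multi-linear polynomial $p_f$ representing $f$ in \textnormal{NMQC}$_\oplus$ can be combined with a polynomial $p_g$ representing $g$ via pointwise addition modulo $2\pi$ to yield a valid polynomial $p_{C^d}$ for $C^d$, satisfying $\mathsf{sparsity}(p_{C^d}) \leq \mathsf{sparsity}(p_f) + \mathsf{sparsity}(p_g)$.

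First I would verify that this addition preserves the symmetric-measurement-assignment structure: since both $p_f$ and $p_g$ depend only on parity bases of each fixed size with identical coefficients across all parities of that size, their sum inherits the same symmetric form, so $p_{C^d}$ is a bona fide symmetric measurement assignment for $C^d$. Then, applying Conjecture~\ref{lowerbound} to $C^d$ gives $\mathsf{sparsity}(p_{C^d}) = \Omega(n^{\floor*{d/2}-1})$, so $\mathsf{sparsity}(p_f) \geq \Omega(n^{\floor*{d/2}-1}) - \mathsf{sparsity}(p_g)$, and it remains to bound $\mathsf{sparsity}(p_g)$ asymptotically below the target rate so that the first term dominates.

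The main obstacle is precisely controlling $\mathsf{sparsity}(p_g)$. Since $g$ is itself an SBF but of strictly smaller degree, a strong induction on $d$ is natural: assume, for each $k<d$, that the optimal symmetric polynomial for $C^k$ has sparsity matching the conjectured rate $\Theta(n^{\floor*{k/2}-1})$. Summing over the at most $d = O(1)$ terms appearing in $g$ and invoking the $\mathcal{CSF}$ construction of Proposition~\ref{decomp} to combine them, one obtains $\mathsf{sparsity}(p_g) = O(n^{\floor*{(d-1)/2}-1})$, which for constant $d$ is asymptotically strictly smaller than $n^{\floor*{d/2}-1}$. This closes the inductive step and forces $\mathsf{sparsity}(p_f) = \Omega(n^{\floor*{d/2}-1})$. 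The base case $d=2$ reduces to $\Omega(1)$, which is vacuous.

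I expect the hard part to be precisely the upper bound ingredient: to make the argument unconditional modulo Conjecture~\ref{lowerbound}, one needs tight upper-bound constructions for the lower-degree CSFs matching the conjectured rate. Lemma~\ref{polinomials} only yields $O(n^{k/2})$ for $C^k$, which for $k=d-1$ and even $d$ is already $O(n^{(d-1)/2})$, exceeding the target $n^{\floor*{d/2}-1}$; the induction therefore either requires sharper constructions (an independent open problem, naturally fitting within an extension of Lemma~\ref{polinomials} to all $k$), or must be stated conditionally as a compound conjecture. Modulo this gap, the rest of the argument is routine verification that sparsities add under pointwise combination of symmetric polynomials and that the resulting assignment still lies in the symmetric MA class.
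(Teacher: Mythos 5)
Your reduction's bookkeeping is fine --- symmetric assignments are closed under levelwise addition, congruences mod $2$ add, and sparsity is subadditive --- but the quantitative step cannot be closed, and the obstruction is worse than the one you concede. The argument needs $\mathsf{sparsity}(p_g)=o\big(n^{\floor*{d/2}-1}\big)$ for the lower-degree remainder $g$, and for odd $d$ with $c_{d-1}=1$ such a bound cannot exist conditionally on the very conjecture you invoke. Take $f=C^5\oplus C^4$: the target is $\Omega(n)$ (since $\floor*{5/2}-1=1$), the remainder is $g=C^4$, and Conjecture~\ref{lowerbound} itself asserts that any symmetric assignment for $C^4$ already requires $\Omega(n^{\floor*{4/2}-1})=\Omega(n)$ qubits. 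So $\mathsf{sparsity}(p_g)$ necessarily lives at the same asymptotic scale as the target, the inequality $\mathsf{sparsity}(p_{C^5})\le \mathsf{sparsity}(p_f)+\mathsf{sparsity}(p_g)$ yields no information about $p_f$, and no ``sharper construction'' can rescue this case because the ingredient you would need contradicts the conjecture/inductive hypothesis. For even $d$ the clash of exponents disappears, but there you would need symmetric constructions for arbitrary degree-$(d-1)$ SBFs at rate $o(n^{\floor*{d/2}-1})$; the paper's constructions (Lemma~\ref{polinomials} combined via Proposition~\ref{decomp} and Lemma~\ref{sSBF}) only give $O(n^{\floor*{(d-1)/2}})$, i.e.\ the same exponent as the target, and the numerics of Appendix~\ref{conjecture} (no solutions for $C^k$ below the $n^{k/2}$ scale) suggest the tight $\Theta(n^{\floor*{k/2}-1})$ constructions your induction presupposes are false rather than merely open. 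So the add-and-subtract route does not establish the corollary, even granting Conjecture~\ref{lowerbound}.

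For comparison, the paper does not supply a derivation either: the corollary carries only the one-line support that it ``follows directly from Conjecture~\ref{lowerbound} and the definition of SBF's'', i.e.\ from the presence of the top term $C^{\mathsf{deg}(f)}$ in the decomposition of Equation~\eqref{symmetricrep}, and like the conjecture it is offered as a conditional claim, not a theorem. Your attempt is valuable precisely because it shows this ``directness'' is not innocent; but the sentence in your proposal claiming the rest is ``routine verification'' is the part to retract, since the reduction from general constant-degree SBFs to CSFs is exactly the non-routine content. A more promising route than splitting off $C^d$ and subtracting is to run the level-based argument behind the conjecture directly on $f$: the decision procedure of Lemma~\ref{ltest} and the analysis in Appendix~\ref{conjecture} use only the simplified value vector and ask which Fourier levels can simultaneously be zeroed, and they apply verbatim to any symmetric $f$; what must then be argued (or explicitly conjectured) is that $c_{\mathsf{deg}(f)}=1$ already forbids zeroing all levels $i$ with $\min(i,n-i)\ge \floor*{\mathsf{deg}(f)/2}-1$, independently of the lower-order coefficients.
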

\noindent\textit{Support:} This corollary follows directly from Conjecture \ref{lowerbound} and the definition of SBF's.
\vspace{0.3cm}

From an abstract point of view, the conjecture and the respective numerical support further strengthen the idea that the degree of the Boolean function and the size of the input string determine the number of qubits necessary for deterministic computations in the \textnormal{NMQC}$_\oplus$ model. Moreover note that the symmetry condition was not imposed on the lower bounds obtained previously. However, the optimal measurements assignments for the SBF $\mathsf{AND}$ and $C^2$ are symmetric. It remains to be proven, but it could be possible that the optimal measurement assignments for SBFs concerning the minimum number of qubits are always symmetric. Supposing that this last hypothesis and the conjecture were both proven true, then the multi-linear polynomials presented in Lemma \ref{polinomials} would be optimal and generate the most resource-efficient measurement assignments for the respective SBFs in the \textnormal{NMQC}$_\oplus$ model. 

\subsection{Qubit count for symmetric Boolean functions}\label{asym}

In the previous subsection, we conjectured a lower bound for the number of qubits in a GHZ state required for an \textnormal{NMQC}$_\oplus$ evaluation of symmetric Boolean functions. 
Complementary to this result, comparing this bound with the existing upper bound is particularly interesting. For this, it will be necessary to define 
new upper bounds based on our proposed constructions 
and their corresponding multi-linear polynomials.

Formerly, the upper bound for SBF resulted from the $\mathcal{EF}$ construction presented in \cite{Mori2019}. In particular, this construction generates multi-linear polynomials that had a sparsity equal to $\sum_{i=0}^{\mathsf{deg}(f)} \binom{n}{i}$, which entails an asymptotical scaling of $\mathcal{O}\big(n^{\mathsf{deg}(f)}\big)$. Recently, Frembs \textit{et al}. \cite{Frembs22} presented a sparser description for the multi-linear polynomials representing CSFs that directly reduces the sparsity of the multi-linear polynomials used for SBFs. More precisely, the proposed multi-linear polynomials have a sparsity of $\mathsf{sparsity}(C^k(x))=\sum_{i=0}^{k-1} \binom{n}{i}$, which translates to an asymptotical scaling for SBF equivalent to $\mathcal{O}\big(n^{\mathsf{deg}(f)-1}\big)$. Futhermore, this last upper bound was the best known up to now, to the best of our knowledge. 

In order to find new upper bounds, it is necessary to define the impact of the $\mathcal{CSF}$ construction on the sparsity of the multi-linear polynomials generated for SBFs. For this purpose, we must consider that this construction creates multi-linear polynomials using the subset of CSFs with a degree equal to a power of two exclusively. Therefore, it can be proven that the sparsity of the resulting multi-linear polynomials also depends uniquely on the same set. 

\begin{lemma}\label{sSBF}
The sparsity of any multi-linear polynomial equivalent to a symmetric Boolean function $f$ is bounded as
\begin{equation}
    \mathsf{sparsity}\big(f(x)\big) \leq \sum_{i=0}^n c_i \prod_{r \in R_i} \Big (  \mathsf{sparsity}\big(C^{2^r} (x)\big)+1 \Big )  ,  
\end{equation}
\noindent with $R_i \subseteq [\ceil*{\log(i)}]$, such that $\sum_{r\in R_i} 2^{r}=i$.
\end{lemma}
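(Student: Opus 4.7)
The plan is to start from the decomposition given by Proposition \ref{decomp}, namely
\begin{equation*}
\mathsf{poly}_{f}(x) =\sum_{i=0}^n c_i \prod_{r \in R_i} \mathcal{GC}\big(C^{2^r}(x)\big),
\end{equation*}
and then bound the sparsity of the right-hand side by controlling how sparsity behaves under sums and products of multi-linear polynomials. The bound to be proven has exactly the shape one obtains from (i) subadditivity of sparsity under sums, and (ii) a multiplicative bound under products that includes the constant term, which explains the ``$+1$'' inside the product.

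First I would show that, working in the $\{-1,1\}$ representation of Definition \ref{multipoly} where the basis functions satisfy $\chi_S(x)\chi_T(x)=\chi_{S\triangle T}(x)$, the product of two multi-linear polynomials $p$ and $q$ satisfies
\begin{equation*}
\widetilde{\mathsf{sparsity}}(pq)\;\leq\;\widetilde{\mathsf{sparsity}}(p)\cdot \widetilde{\mathsf{sparsity}}(q),
\end{equation*}
where $\widetilde{\mathsf{sparsity}}$ denotes the total number of nonzero Fourier coefficients \emph{including} the $S=\emptyset$ coefficient. This holds because expanding $pq$ yields at most one term for every pair $(S,T)$ with $a_S b_T\neq 0$, and symmetric differences preserve this counting. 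Since by definition $\widetilde{\mathsf{sparsity}}(g)\leq \mathsf{sparsity}(g)+1$ for any multi-linear polynomial $g$, I can iterate this product bound over the factors indexed by $r\in R_i$ to obtain
\begin{equation*}
\mathsf{sparsity}\Big(\prod_{r\in R_i}\mathcal{GC}\big(C^{2^r}(x)\big)\Big)\;\leq\;\widetilde{\mathsf{sparsity}}\Big(\prod_{r\in R_i}\mathcal{GC}\big(C^{2^r}(x)\big)\Big)\;\leq\;\prod_{r\in R_i}\big(\mathsf{sparsity}(C^{2^r}(x))+1\big).
\end{equation*}

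Next I would invoke the subadditivity of sparsity under sums, which follows immediately from the triangle inequality applied coefficient by coefficient: if $p=\sum_j q_j$, then any Fourier coefficient of $p$ is the sum of the corresponding Fourier coefficients of the $q_j$, and can only be nonzero if at least one summand is nonzero. Applied to Proposition \ref{decomp}, this yields
\begin{equation*}
\mathsf{sparsity}\big(\mathsf{poly}_f(x)\big)\;\leq\;\sum_{i=0}^n c_i\,\mathsf{sparsity}\Big(\prod_{r\in R_i}\mathcal{GC}\big(C^{2^r}(x)\big)\Big)\;\leq\;\sum_{i=0}^n c_i\prod_{r\in R_i}\big(\mathsf{sparsity}(C^{2^r}(x))+1\big),
\end{equation*}
which is the stated inequality. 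Note that the claim depends only on the existence of \emph{some} multi-linear polynomial with the required sparsity, so using the particular one produced by the $\mathcal{CSF}$ construction suffices.

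The main subtlety, and the only real obstacle, is the treatment of the constant term. The definition of sparsity used throughout the paper excludes $S=\emptyset$, but constant terms in the factors can combine with non-constant terms to produce new non-constant terms in the product; conversely, non-constant terms $\chi_S\chi_S=\chi_\emptyset$ can collapse into the constant coefficient. Conflating these two effects is exactly what would break the bound, so the key bookkeeping step is to pass from $\mathsf{sparsity}$ to the ``total'' sparsity $\widetilde{\mathsf{sparsity}}$ before multiplying, and to come back to $\mathsf{sparsity}$ only after the iterated product has been bounded. This is precisely what produces the $+1$ inside the product in the statement of the lemma.
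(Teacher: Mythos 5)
Your proposal is correct and follows essentially the same route as the paper: the paper's proof simply invokes the decomposition of Proposition \ref{decomp} and asserts that the expression directly yields the sparsity relation, which is exactly what you do. Your version merely fills in the bookkeeping the paper leaves implicit — subadditivity under sums, multiplicativity of the coefficient count under products of characters ($\chi_S\chi_T=\chi_{S\triangle T}$), and the passage to the count including the $S=\emptyset$ coefficient that accounts for the $+1$ inside the product.
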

\begin{proof}
A valid multi-linear polynomial can be generated for any SBF based on Equation \eqref{symmetricrep} and the general decomposition defined in Proposition \ref{decomp}. Therefore, the sparsity of SBF is always upper bounded by the sparsity of the polynomials built from this representation. In particular, the same expression directly defines the relation between the resulting polynomial's sparsity and that of the CSFs used.   
\end{proof}

\noindent Consequently, for families of SBF with a constant degree, the asymptotic upper bound depends only on the subset of CSFs with degrees that are powers-of-two,

\begin{equation}
\mathsf{sparsity}\big(f(x)\big) = \mathcal{O}\bigg(  \prod_{r \in R_d}\mathsf{sparsity}\big(C^{2^r}(x) \big) \bigg)    
\end{equation}
\noindent \noindent with $R_d \subseteq [n]$, such that $\sum_{r\in R_d} 2^{r}=\mathsf{deg}\big(f\big)$.
\noindent The fact that only the CSF function with the largest degree is considered can be explained by the fact that the CSFs polynomials have monotonically increasing sparsity. 

Subsequently, this bound has to be combined with the minimal sparsities known for the CSF with a degree equal to a power of two. In particular, for instances with a degree equal to or smaller than 64, the most compact multi-linear polynomials are the ones identified in Lemma \ref{polinomials}, and, for higher degree functions, are the ones described in \cite{Frembs22},
\begin{equation}
\mathsf{sparsity}\Big( C^{2^r}(x)\Big)= sparsity\Big( C^{k}(x)\Big)=
\begin{cases}
\mathcal{O}\big(n^{2^{r-1}}\big) =\mathcal{O}\big(n^{k/2}\big)  	\Leftarrow & r=2,3,4,5,6 \\
\mathcal{O}\big(n^{2^{r}-1}\big) =\mathcal{O}\big(n^{k-1}\big) 	\Leftarrow    & r>6
\end{cases} ,
\end{equation}

\noindent with $2^r=k$. Therefore when these instances are used to determine the asymptotic sparsities of CSF with arbitrary degrees, we obtain the following bounds, 
\begin{equation}\label{newup}
     \mathsf{sparsity}\Big ( C^k(x) \Big )= \mathcal{O}\big( n^{\rho} \big),\ with  \ \rho=\Bigg ({\sum_{r\in R_k , r\leq 6} 2^{r-1} +\sum_{r\in R_k , r> 6} 2^r-1}\Bigg),
\end{equation}

\noindent with $R_k \subseteq [\ceil*{\log(k)}]$, such that $\sum_{r\in R_k} 2^{r}=k$. Likewise, the same bounds will be valid for for SBFs with a constant degree,
\noindent       
\begin{equation}
    \mathsf{sparsity} \big(f(x)\big )=\mathsf{sparsity}\big ( C^k(x) \big ),\ with \ k=\mathsf{deg}(f(x)) \ .
\end{equation}

Notice that any CSF that resorts uniquely to the multi-linear polynomials from Lemma \ref{polinomials} in their decomposition (Equation \eqref{decomC}) has an upper bound of $\mathcal{O}(n^{k/2})$. This bound holds for all CSF with a degree equal to or smaller than 127. For these instances the upper bound is tight to the conjectured lower bound. This indicates that if the general form for multi-linear polynomials identified in  Lemma \ref{polinomials} is proven correct, then all CSF functions would have an upper bound tight to the conjectured lower bound. Yet, without this proof, the best upper bound is the one described by Equation \eqref{newup}. In Figure \ref{fig:scaling} we represent this newly obtained upper bound, together with the lower bound.

In summary, the CSF construction, in combination with the use of sparse multi-linear polynomials, considerably lowers the number of qubits required to evaluate SBF in the \textnormal{NMQC}$_\oplus$, as well as the respective upper bounds. Furthermore, the sparsities for SBF functions can be described as depending on a subset of CSFs. This suggests that the effort to reduce the sparsities of SBFs should be guided towards the subset of CSF.

\begin{figure}[H]
    \centering
    \includegraphics[scale=0.53]{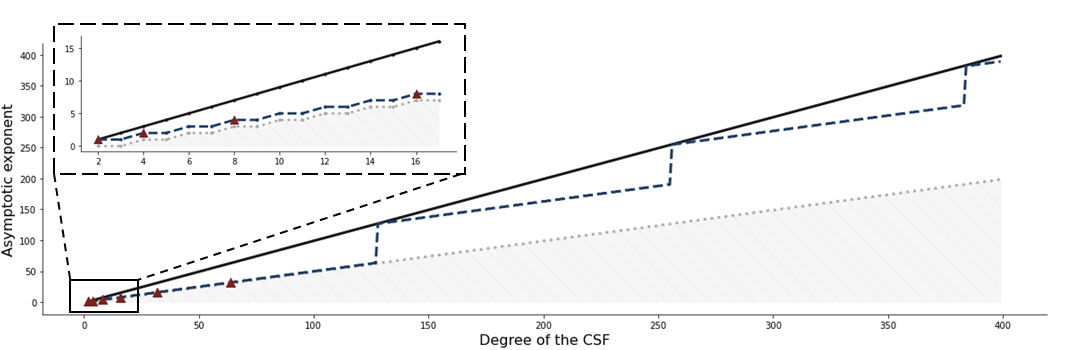}
    \caption{Representation of the relations between the individual CSF (ordered by degree) and the asymptotic growth rate of the GHZ state necessary for deterministic evaluation with the \textnormal{NMQC}$_\oplus$ model. The continuous black line represents the most compact polynomials identified in \cite{Frembs22}. The dashed dark blue line represents the most compact polynomials constructed using the decomposition from Proposition \ref{decomposition} and the multi-linear polynomials from Lemma \ref{polinomials}. Additionally, the red triangles are the CSF computed exactly with the polynomials from Lemma \ref{polinomials}. Finally, the dotted grey line represents our conjectured lower bound for the set of CSF (Conjecture \ref{lowerbound}).  }
    \label{fig:scaling}
\end{figure}

\section{Circuit realization of the \texorpdfstring{$\textnormal{NMQC}_\oplus$}{NQMC} model}\label{circuits}

\subsection{\texorpdfstring{$\textnormal{NMQC}_\oplus$}{NQMC} circuits}\label{NMQCcircuit}

This subsection translates the measurement assignments for the \textnormal{NMQC}$_\oplus$ model discussed in the paper into realizable computations. This analysis is pertinent because the previous works related to this precise model have discussed these computations at an abstract level only. In other words, they do not describe them as circuits composed of elementary operations from a fixed gate set. Here we provide a more complete picture of the computational process, describing the necessary operations to create the GHZ states, characterizing the circuits that compute the auxiliary linear functions, and synthesizing or approximating the required measurement operators, keeping in mind that the latter are rarely part of the fixed gate set. 

This will enable a more direct comparison with fully classical circuits for Boolean function evaluation, in particular with respect to depth, which is associated with the temporal extension of these circuits.

\begin{definition}\textnormal{ \textbf{(Depth).}} The depth of an \textnormal{NMQC}$_\oplus$ circuit is equal to the size of the longest sequence of elementary gates, quantum or classical, in it.
\end{definition}

\noindent The width of the circuits, which describes the maximal computational space occupied during the execution of the circuits, will be equally important. Generally, parallelizable processes can have a smaller depth but at the cost of a larger width.

\begin{definition}\textnormal{ \textbf{(Width).}} 
The width of an \textnormal{NMQC}$_\oplus$ circuit is equal to the largest number of elementary gates, quantum or classical, executed simultaneously in it.
\end{definition}

\noindent The last property that we will account for is the number of gates that the circuits contain. This value relates to previous properties and is also an essential element for general computations as it defines the cost of an evaluation in absolute terms. 

\subsubsection{Linear pre- and post-processing circuits}

The subroutine for the linear pre-processing computes all the linear functions that are part of a specific measurement assignment for a Boolean function. In particular, it contains a number of linear functions equal to the number of necessary qubits 
in the GHZ state. This number is defined by the sparsity of the multi-linear polynomial that generates the measurement assignment. Furthermore, each of these linear functions, $L_i=\oplus_{i\in S}x_i$, can be highly parallelized, creating a circuit composed of XOR gates with a depth of $\log_2(|S|)$, a width of $|S|$, and a number of gates equal to $|S|-1$ (Figure \ref{fig:linearcirc} (a)). Consequently, the total subroutine is the parallel composition of all linear functions (Figure \ref{fig:linearcirc} (b)). Thus, it has a depth of 
$max\{|S|\ \big |\ \widehat{f}(S)\neq 0\}$, a width of
$\sum_{S\subseteq [n]} |S|* \delta\big(\widehat{f}(S)\big)$, and a number of gates equal to $\sum_{S\subseteq [n]} \big(|S|-1\big )* \delta\big(\widehat{f}(S)\big)$, based on the Fourier coefficients $\widehat{f}(S)$ of the multi-linear polynomial $\mathsf{poly}_f(x)$ equivalent to the measurement assignment. 

Likewise, the linear post-processing circuit of an \textnormal{NMQC}$_\oplus$ computation is a very simple subroutine. It solely sums all the measurement outcomes linearly, computing the parity of the bit string. The computation of this linear function can also be parallelized such that the circuit will have a depth of $\log_2(k)$ with $k=\mathsf{sparsity}(\mathsf{poly}_f)$\footnote{ We will label $k$ as the sparsity of the multi-linear polynomial that defines the measurement assignment repeatedly throughout this section. }, a width of $k$, and $k-1$ gates. 
\begin{figure}[H]
\centering
\begin{subfigure}[b]{0.53\textwidth}
         \centering
\includegraphics[scale=0.22]{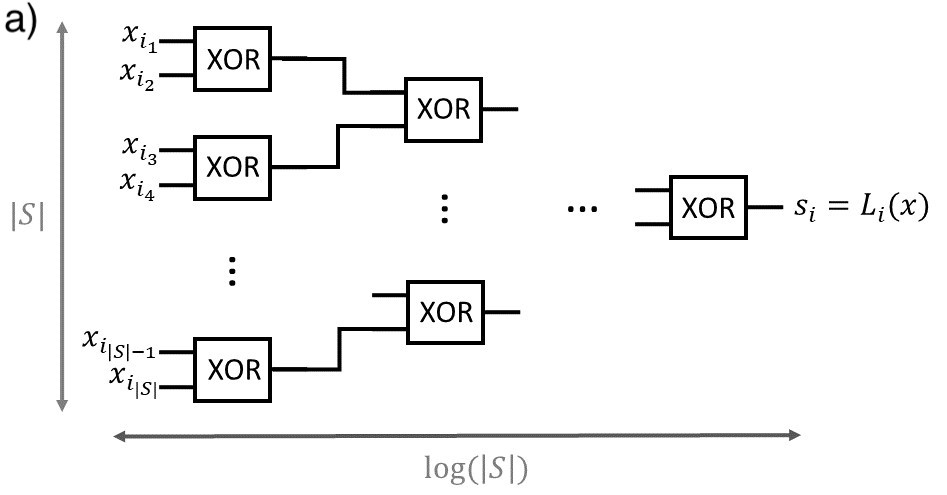}
         \label{fig:linear}
     \end{subfigure}
     \begin{subfigure}[b]{0.46\textwidth}
         \centering
\includegraphics[scale=0.12]{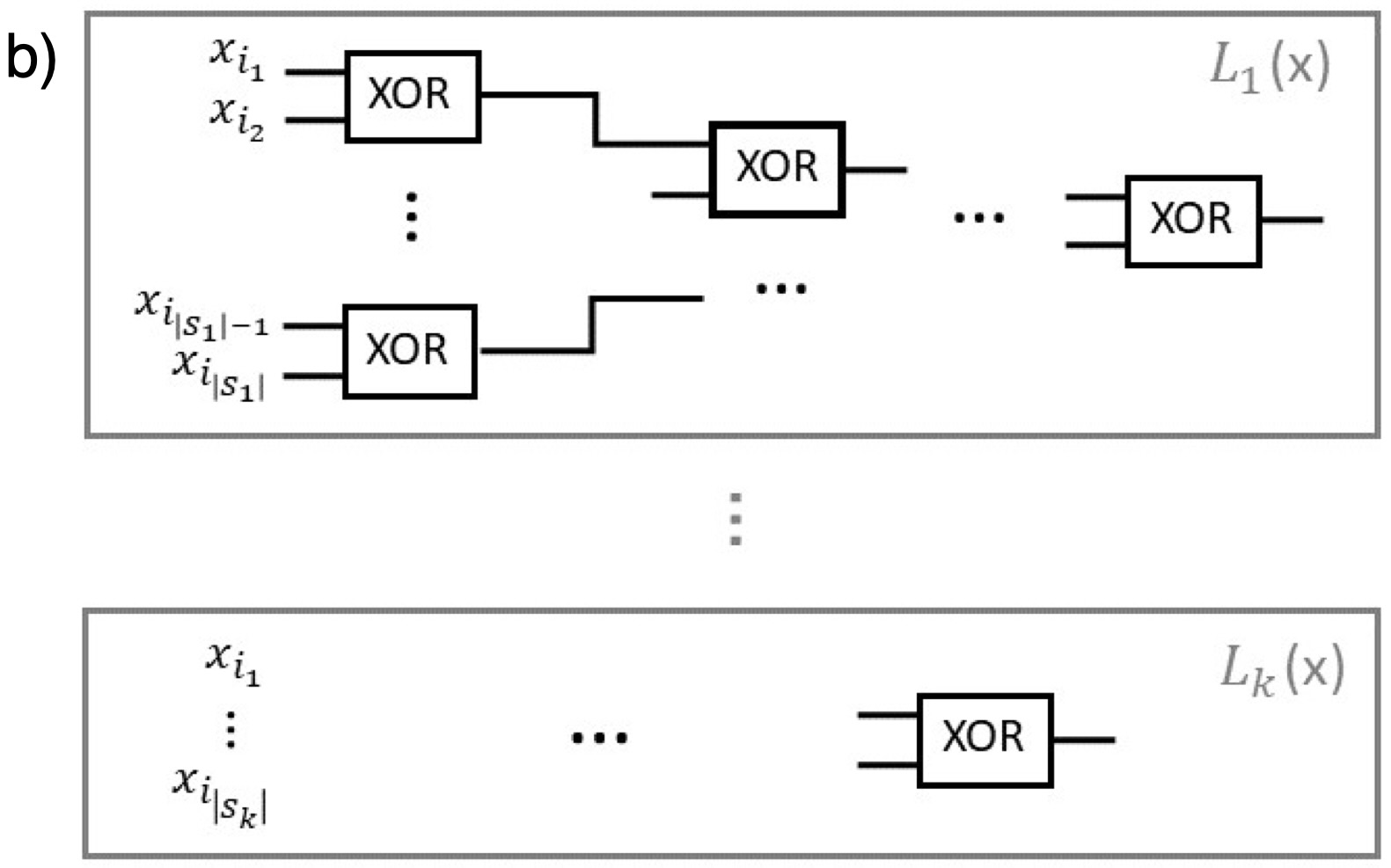}
         \label{fig:paralellinear}
     \end{subfigure}
     \caption{a) The general form and dimensions of a circuit computing a linear function with $|S|$ inputs. b) Parallel composition of the linear functions in the linear pre-processing routine.   }
        \label{fig:linearcirc}
\end{figure}

\subsubsection{GHZ state creation}\label{ghzcreation}

Let us review two different circuit families used to create GHZ states. The first one is more efficient considering the number of gates while having a simpler process. However, the second compensates for the complexity and the additional operations with the capacity to prepare a GHZ state with any number of qubits in constant quantum depth.

\vspace{0,3cm}
\noindent \textbf{(a) Construction in logarithmic quantum depth}
 \vspace{0,3cm}
 
 \noindent The first construction, defined in \cite{Cruz19}, creates a $n$-qubit GHZ state from $n$ qubits initially in the $\ket{0}$ state using only CNOT and H gates. For that, it applies a single H gate to the first qubit and a sequence of CNOT gates in a cascade pattern afterward. In particular, the first CNOT gate takes the first qubit as the control qubit and the qubit in position $n/2$ as the target. Then, in the second layer, two CNOT gates are applied, the first qubits and the qubit in position $n/2$ action as control, and the qubits in positions $n/4$ and $3n/4$ as the target qubits of these gates. This pattern repeats layer after layer, applying twice the number of CNOT gates of the previous layer while using the already entangled qubits as the control qubits and non-entangled qubits as targets. For an illustration of this process, see Figure \ref{fig:ghzcreation} (a). 

It is interesting to characterize this process for a specific \textnormal{NMQC}$_\oplus$computation. Given a GHZ state with a number of qubits equal to the sparsity $k$ of the polynomial is necessary, the corresponding circuit will have a depth of $\log_2(k)$, a width of $k$, and the circuit contains $\mathcal{O}\big(k\big)$ gates.

\vspace{0.5cm} 
\noindent \textbf{(b) Construction in constant quantum depth} \vspace{0,3cm}
 
 \noindent The second technique considered to create GHZ states is particularly interesting because it uses only a constant number of quantum layers to generate a GHZ with an arbitrary number of qubits \cite{Quek22,Watts19}. Nevertheless, these fixed quantum layers have to be interleaved with classical computations of logarithmic size, which results in a total depth - adding quantum and classical processing - similar to the previous one. Also, the method relies on adaptively chosen quantum operations, which depend on previously obtained measurement outcomes. This adaptivity could be a problem as we are considering a non-adaptive measurement-based quantum computation model. However, as this adaptivity does not depend on the input string or the specific computation that will be performed, we consider this a valid alternative. 

The specific protocol presented in \cite{Quek22} creates a $n$-qubit GHZ state from $n$ qubits in the $\ket{0}$ state using six layers of quantum gates. The first quantum layer applies $H$ gates to odd-numbered qubit wires. Then, the second quantum layer applies $n/2$ CNOT gates between neighboring qubit pairs, with odd-numbered control qubit wires and even-numbered target qubit wires. After this process, $n/2$ Bell pairs were created. The third quantum layer applies $n/2-1$ CNOT gates between the qubits of neighbouring Bell pairs, such that one qubit of the Bell pair is a target qubit for the previous Bell pair, and the other is the control qubit for the next Bell pair. Afterward, the next quantum layer (the fourth) measures all the qubits which were target qubits of the CNOT gates applied in the previous process. Then, the results of the previous measurements define an $X$ correction that will be applied in the fifth quantum layer. In particular, such $X$ corrections will be applied to the qubits directly below the measured qubits, and they depend on the parity of all the measurement outcomes previous to that specific qubit. Simultaneously, in the fifth layer, the measured qubits are reset. Finally, the sixth and final quantum layer entangles the qubits recycled to the $\ket{0}$ state after the measurement, with the resulting quantum state. An illustration of the whole process is depicted in Figure \ref{fig:ghzcreation} (b).

In the end, for an \textnormal{NMQC}$_\oplus$ computation, this process produces a quantum state with a number of qubits defined by the sparsity of the respective multi-linear polynomial. This implies that the depth of the respective circuit will be $6+\log_2(k/2-1)$,  with a width of $k$ and a total number of gates bounded above by $\frac{k^2}{8}+\frac{11}{4}k-4$.

\begin{figure}[H]
\centering
\begin{subfigure}[b]{0.50\textwidth}
         \centering
\includegraphics[scale=0.200]{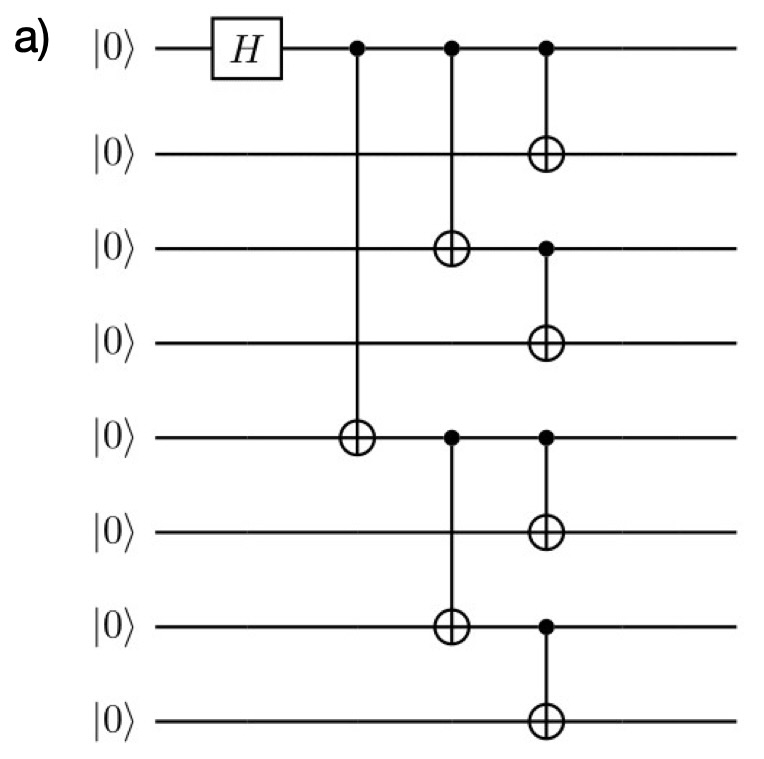}
         \label{fig:logghz}
     \end{subfigure}
     \begin{subfigure}[b]{0.49\textwidth}
         \centering
\includegraphics[scale=0.22]{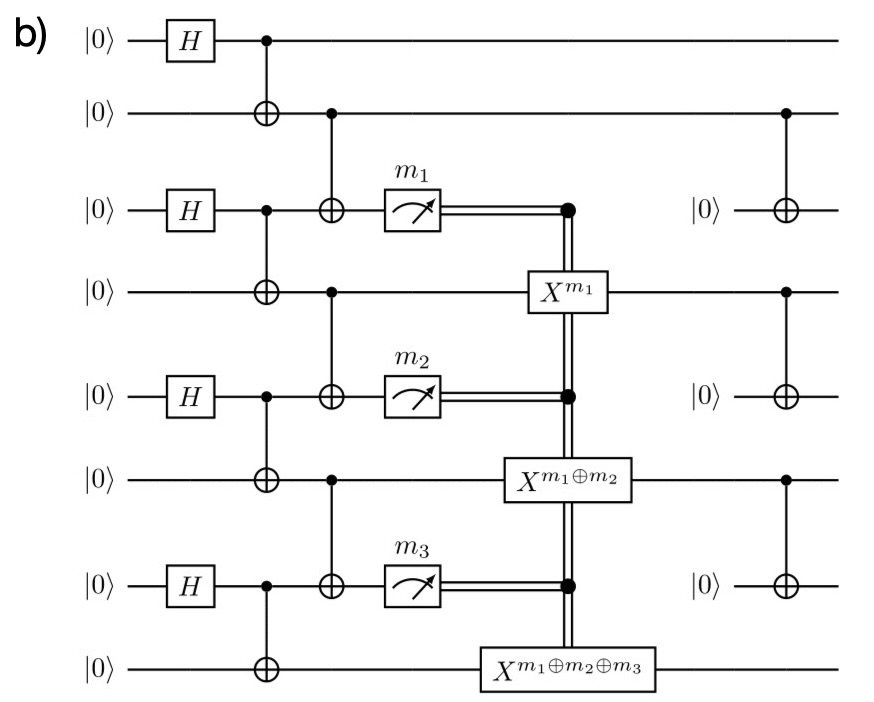}
         \label{fig:constghz}
     \end{subfigure}
     \caption{a) Logarithmic depth circuit for the creation of an 8-qubit GHZ state with a recursive CNOT application pattern. b) Constant quantum depth circuit with extra linear measurement post-processing the creation of an 8-qubit GHZ state.}
        \label{fig:ghzcreation}
\end{figure}

\subsubsection{Measurement operators}

The measurement operators imposed by the measurement assignment are generally not part of the gate set. Therefore, creating these operators from the available gate set will be necessary. As is common in the literature, we will consider the Clifford+$T$ gate-set. In particular, we need to generate single qubit $Z$-rotations because these are always equal to the measurement operators used by the model after applying a Pauli $X$ operator. Therefore, the optimal circuit for the measurement operator has at most one extra elementary gate, followed by the optimal circuit for a particular $Z$-rotation (Figure \ref{fig:measdecomp}). This observation allows us to use several established results on the circuit synthesis of $Z$-rotations \cite{Selinger15,mosca16,Neil15,Gary20}.

\begin{figure}[H]
\begin{center}
\begin{quantikz}
    \meter{\text{$M_i$}}  & = & & \gate{X} & \gate{R_Z\big(\frac{t_i\pi}{2^{n_i}}\big)} & \meter{}\\
     &\approx &  &\gate{X} & \gate{H} & \gate{T} & \qw & \hdots & & \gate{S} & \meter{}
\end{quantikz}
\end{center}
    \caption{Exact and approximated circuit synthesis of the measurement operator. The former is based on arbitrary $R_Z$ rotations, the latter on the Clifford+ $T$ gate set. }
    \label{fig:measdecomp}
\end{figure}
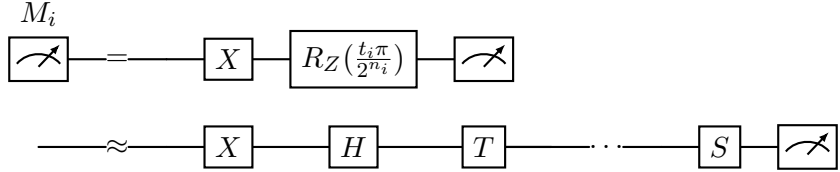

 The first significant result is that no $R_Z(\frac{\pi}{2^n})$ operator for $n \geq 2$ can be exactly synthesized using the Clifford+$T$ gate set \cite{Selinger15,Neil15}. Therefore, using this gate set, Boolean functions with a degree greater than two cannot be computed deterministically in the \textnormal{NMQC}$_\oplus$ model.

\begin{corollary}\label{cliffordlimit}
    No Boolean function $f:\{0,1\}^n\rightarrow \{0,1\}$ with $\mathsf{deg}(f)\geq 3$ can be computed deterministically with the \textnormal{NMQC}$_\oplus$ model realized with the Clifford+$T$ gate set.
\end{corollary}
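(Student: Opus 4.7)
The plan is to chain two results already in place: the lower bound on the Clifford hierarchy level required by the measurement operators for a degree-$k$ function (cited from Frembs et al.\ in Section \ref{tbound}), and the exact-synthesis obstruction for $R_Z(\pi/2^n)$ rotations in Clifford$+T$ stated just above the corollary.

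First I would appeal to the Frembs et al.\ lower bound: any measurement assignment deterministically computing a Boolean function $f$ must contain at least one single-qubit measurement operator $M_i$ whose corresponding unitary lies at level $\geq \mathsf{deg}(f)$ of the Clifford hierarchy. Consequently, if $\mathsf{deg}(f)\geq 3$, every valid measurement assignment for $f$ uses some operator $M_i$ at level $\geq 3$. Note that this is purely a property of the measurement assignment, independent of the gate set used to realize it.

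Next I would translate this Clifford-hierarchy statement into a statement about concrete rotations by combining Equation~\eqref{measurements} with Figure~\ref{fig:measdecomp} and Lemma~\ref{operators}. Any $M_i$ can be written, up to a $Z$-basis measurement, as $X\cdot R_Z(\alpha_i)$ with $\alpha_i = \theta_i+\phi_i s_i$. By Lemma~\ref{operators} the operator being at level $\geq 3$ of the Clifford hierarchy forces $\mathsf{gran}(\alpha_i)\geq 2$, i.e.\ $\alpha_i = t_i\pi/2^{n_i}$ with $n_i\geq 2$ and $t_i$ odd. Multiplying $R_Z(\alpha_i)$ by any Clifford $R_Z$-rotation (angle a multiple of $\pi/2$) only shifts $t_i$ by an even amount, so it cannot lower $n_i$; hence the implementation of $M_i$ requires the exact synthesis of some $R_Z(\pi/2^{n_i})$ with $n_i\geq 2$.

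Finally, I would invoke the synthesis result cited immediately before the corollary (from Selinger, Kliuchnikov--Maslov--Mosca), according to which $R_Z(\pi/2^n)$ cannot be realized exactly by a finite Clifford$+T$ circuit for any $n\geq 2$. Combining this with the previous step, no measurement assignment for $f$ admits an exact Clifford$+T$ implementation, which rules out deterministic computation of $f$ in the $\textnormal{NMQC}_\oplus$ model restricted to that gate set. The only delicate step is the reduction of the required level-$\geq 3$ operator to a canonical $R_Z(\pi/2^{n})$ with $n\geq 2$; this is handled by observing that the $X$ factor and any Clifford pre-/post-multiplication remain in Clifford$+T$, so the synthesis obstruction transfers to $M_i$ itself.
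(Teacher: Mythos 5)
Your proof follows essentially the same route as the paper's own argument: the Frembs \emph{et al.} lower bound forcing at least one measurement operator beyond the Clifford group whenever $\mathsf{deg}(f)\geq 3$, combined with the cited impossibility of exactly synthesizing the corresponding $R_Z\big(\pi/2^{n}\big)$ rotations, $n\geq 2$, over the Clifford$+T$ gate set. The only difference is that you make explicit the intermediate reduction (via Lemma \ref{operators} and the observation that Clifford pre-/post-multiplications cannot lower the denominator of the angle), a step the paper asserts without detail, so the two proofs are essentially identical.
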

\begin{proof}
The proof follows from the fact that all the measurement operators necessary for evaluating Boolean functions with a degree greater than two are in third level or higher of the Clifford hierarchy, as proven in \cite{Frembs22}. Therefore, each one of these measurement operators is equivalent to a $Z$-rotation of magnitude $\frac{\pi}{2^n}$ with $n\geq 2$. Thus, it is impossible to synthesize these gates exactly with the Clifford+$T$ gate set \cite{Selinger15,Neil15}, and the whole \textnormal{NMQC}$_\oplus$ computation will have a non-zero error term. 
\end{proof}

\noindent This critical implication imposes that higher degree Boolean functions can only be computed with a certain accuracy that depends on the approximation of the measurement operators. 

For our purpose, we will need to define approximations that guarantee a certain level of accuracy $\epsilon \in [0,1]$ of the \textnormal{NMQC}$_\oplus$ computations. For that, a relationship between the approximations of the $Z$-rotations and the whole computation is required. From \cite{Selinger15}, it is possible to define the approximation of a $Z$-rotation concerning the operator norm for an arbitrary distance $\bigtriangleup$,
\begin{equation}
    \bigtriangleup \geq \Big|\Big|R_{Z}^\ast\Big(\frac{\pi*t}{2^{n}} \Big)\ket{\Psi} -R_{Z}\Big(\frac{\pi*t}{2^{n}}\Big )\ket{\Psi}   \Big|\Big|  \ .
\end{equation}
\noindent with $R_{Z}^\ast$ representing the approximated $Z$-rotation, using a circuit with approximately $K+4\log_2(1/\bigtriangleup)$ \footnote{$K$ is approximately 10, and this value was proven to be lower bounded by -9. Therefore, the algorithm from \cite{Selinger15} produces approximations that are at most an additive constant from the optimal $Z$-approximations. } $T$-gates. Therefore, the approximation of each single measurement operator $M_i$, to be applied to one of the $k$ qubits in GHZ state, is bounded by
\begin{align}
    \bigtriangleup_i \geq\  &\Big|\Big|\bigotimes_{j=0}^{i-1} I_j \otimes M_i \otimes \bigotimes_{l=i+1}^{k} I_l \ket{\Psi} -\bigotimes_{j=0}^{i-1} I_j \otimes M_i^{\ast} \otimes \bigotimes_{l=i+1}^{k} I_l \ket{\Psi}  \Big|\Big |\\
    \geq\  &\Big|\Big| M_i \ket{\Psi} - M_i^\ast \ket{\Psi}   \Big|\Big| \\ \geq\  &\Big|\Big|\sigma_x R_{Z}\Big(\frac{\pi*t_i}{2^{n_i}} \Big)\ket{\Psi} -\sigma_x R_{Z}^\ast\Big(\frac{\pi*t_i}{2^{n_i}}\Big )\ket{\Psi}   \Big|\Big| \\
    \geq\  & \Big|\Big|R_{Z}^\ast\Big(\frac{\pi*t_i}{2^{n_i}} \Big)\ket{\Psi} -R_{Z}\Big(\frac{\pi*t_i}{2^{n_i}}\Big )\ket{\Psi}  \Big|\Big| \ .
\end{align}

\noindent Then, we assume that all the measurement operators have the same error bound $\bigtriangleup$. Also, as all these measurement operators commute, one can interpret that they are applied sequentially, allowing us to bound the error after all the measurements by the sum of all the individual errors \cite{Vazirani93},

\begin{align}
    k*\bigtriangleup =\sum_{i=0}^k \bigtriangleup_i  \geq\   &\Big|\Big|\bigotimes_{i=0}^{k}  M_i  \ket{\Psi} -\bigotimes_{i=0}^{k}  M_i^{\ast} \ket{\Psi}  \Big|\Big |\\    \ .
\end{align}
Finally, we can  use the fact that the total deviation distance between the resulting probability distributions between the ideal quantum state that computes the Boolean function deterministically and the resulting quantum state is at most four times the norm error defined for the quantum state \cite{Vazirani93}, such that, 
\begin{align}
    \epsilon &=(-1)^{f(x)}- \bra{\Psi_{GHZ}^k}\otimes_{i=1}^k M_i(s_i) \ket{\Psi_{GHZ}^k} \\
            & \geq 4*k*\bigtriangleup
\end{align}

After determining the relation between the total error of the computations and the local errors due to the measurement operators, one can define a specific number of $T$ gates necessary for a certain precision $\epsilon$ of the \textnormal{NMQC}$_\oplus$ computation. In particular, the corresponding circuit will have a depth of $4*c*\log_2(\frac{4k}{\epsilon})=4*c*(2+\log_2(k)+\log_2(\frac{1}{\epsilon}))$ with $c$ equal to a constant that accounts for all the additional gates in the circuit, which are generally proportional to the number of $T$ gates \cite{Moska13} \footnote{In reference \cite{Moska13}, this constant appears to be close to 2.5. However, for our analysis, we will leave it undefined}. Additionally, we end up with a width of $k$ and a total number of gates that is bounded by $4*c*k*\log_2(\frac{4k}{\epsilon})$.

\subsubsection{Depth, width and number of gates}

To characterize the \textnormal{NMQC}$_\oplus$ circuits, we will combine the previously described subroutines. In particular, we will combine them to minimize the circuit depth. Therefore, the linear pre-processing and the GHZ state generation will be performed simultaneously, as they have no dependencies. Then, the results of the linear functions define the set of operations that will be realized on the GHZ-state qubits in the measurement process. Then, the linear post-processing will compute the parity of the measurement outcomes to obtain the output of the Boolean function. The whole process is represented in Figure \ref{fig:arquiteture}.

With the architecture of the circuits defined, it is possible to characterize the relevant properties resulting from specific measurement assignments. In particular, the multi-linear polynomials associated with the measurement assignments completely determine the circuits' depth, width, and number of gates.

\begin{lemma}\label{realization}
Any Boolean function $f:\{0,1\}^n\rightarrow\{0,1\}$ with $\mathsf{deg}(f)>2$ can be computed with  accuracy $\epsilon \in [0,1]$ based on a valid measurement assignment using a circuit composed of the Clifford+T gate set with a depth of 
\begin{equation}
2*\log_2(k)  + 4*c*\Big(2+\log_2(k)+\log_2\Big(\frac{1}{\epsilon}\Big) \Big),
\end{equation}
\noindent a width of 
\begin{equation}
    k+ \sum_{S\subseteq [n]} \big(|S|-1\big )* \delta\big(\widehat{f}(S)\big) \ ,
\end{equation}
\noindent and a number of gates bounded by 
\begin{equation}
     \sum_{S\subseteq [n]} \big(|S|-1\big )* \delta\big(\widehat{f}(S)\big)+ k*\Big(2+4*c*\Big(2+\log_2(k)+\log_2\Big(\frac{1}{\epsilon}\Big) \Big)\Big)\ .
\end{equation}
\noindent W.l.o.g $\widehat{f}(S)$ are the Fourier coefficient and $k$ the sparsity of the multi-linear polynomial corresponding to the measurement assignment that will be realized. 
\end{lemma}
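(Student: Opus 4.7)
The plan is to prove Lemma~\ref{realization} by simply aggregating the per-subroutine cost analyses that were carried out in the four preceding subsections (linear pre-processing, GHZ state creation, measurement approximation, linear post-processing) and showing that the architecture in Figure~\ref{fig:arquiteture} assembles them in a way that yields the claimed bounds. Since each subroutine has already been characterized individually, the heart of the argument is bookkeeping plus a single quantitative choice for the local approximation error.

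First I would recall the four building-block bounds established above. The linear pre-processing routine parallelizes one XOR tree per non-zero Fourier coefficient, contributing at most $\log_2(k)$ depth (every tree has at most $k$ leaves), $\sum_{S} (|S|-1)\,\delta(\widehat{f}(S))$ gates, and width $\sum_S |S|\,\delta(\widehat{f}(S))$. The GHZ creation construction (a) of Section~\ref{ghzcreation} uses $\mathcal{O}(k)$ gates, width $k$, and depth $\log_2(k)$. The post-processing XOR tree on the $k$ measurement outcomes contributes depth $\log_2(k)$, width $k$, and $k-1$ gates. The only non-trivial piece is the measurement layer, which requires approximating each $R_Z(\pi t_i/2^{n_i})$ in the Clifford$+T$ gate set.

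For the measurement layer, the plan is to fix a common local accuracy $\Delta$ for every approximated rotation and use the triangle inequality bound $\epsilon \geq 4k\Delta$ derived earlier in the subsection on measurement operators. Setting $\Delta = \epsilon/(4k)$ and plugging into Selinger's synthesis count $K + 4\log_2(1/\Delta)$ yields per-measurement depth (and $T$-count) of order $2 + \log_2(k) + \log_2(1/\epsilon)$; absorbing the auxiliary Clifford gates into the constant $c$ then gives exactly $4c(2+\log_2(k)+\log_2(1/\epsilon))$ gates per measurement wire. Since there are $k$ such wires and the measurements are executed in parallel, this contributes that same amount to the depth but multiplies the gate count by $k$.

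Finally I would assemble the totals. Depth: linear pre-processing runs in parallel with GHZ creation (both cost $\log_2(k)$), then the measurement layer, then the post-processing tree ($\log_2(k)$), summing to $2\log_2(k) + 4c(2+\log_2(k)+\log_2(1/\epsilon))$. Width: the peak occurs when the $k$ GHZ qubits coexist with the classical XOR trees, yielding $k + \sum_S (|S|-1)\,\delta(\widehat{f}(S))$ once the shared input wires are identified with the first layer of each XOR tree to avoid double-counting. Number of gates: add the linear pre-processing count $\sum_S (|S|-1)\delta$, the $\mathcal{O}(k)$ GHZ-creation gates, the $k$ measurement circuits of the size above, and the $k-1$ post-processing gates, which collect into $\sum_S (|S|-1)\delta + k\bigl(2 + 4c(2+\log_2(k)+\log_2(1/\epsilon))\bigr)$. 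The main obstacle I expect is the width accounting, where one must justify why shared input bits and the $-1$ appearing in each term are consistent with the definition of width as peak simultaneous computational space; everything else reduces to arithmetic on the already-established per-subroutine bounds.
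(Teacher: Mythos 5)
Your proposal is correct and follows essentially the same route as the paper, whose proof of this lemma is literally just the aggregation of the per-subroutine bounds (linear pre-/post-processing, GHZ creation, and the Clifford$+T$ approximation of the measurement rotations with local error $\Delta=\epsilon/(4k)$) assembled according to the architecture of Figure \ref{fig:arquiteture}. Your explicit bookkeeping, including the remark on how the width count absorbs the shared input wires, simply spells out what the paper leaves as ``follows from the arguments above.''
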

\begin{proof}
    Follows from the arguments above.
\end{proof}

\noindent It is essential to mention that we used the GHZ construction of Subsection \ref{ghzcreation} , with logarithmic depth. If constant quantum depth methods were used, a slight difference in the depth and the number of gates will be introduced. Nevertheless, the order of magnitude of the depth would not change. Moreover, the depth necessary to approximate the measurement operators can, in principle, be reduced with other circuit synthesis techniques \cite{Bocharov15,Bocharov2015}. However, the width of the circuits would increase with the use of auxiliary qubits. In addition, we recall that the proposed solution does not guarantee the optimal realization of a given measurement assignment concerning each one of the metrics for the circuits.

Although there is a large variability in the possible realizations of a measurement assignment, we consider that the particular translations presented unravel essential details and questions about the \textnormal{NMQC}$_\oplus$ model. For instance, one could question if there is another gate set that guarantees the exact synthesis of the measurement operators. That existence would allow deterministic computations for Boolean functions with an arbitrary degree. However, in \cite{Gary20}, various gate sets were considered, and none had a particular advantage, nor could they exactly synthesize arbitrary $Z$-rotations. This suggests that computing Boolean functions with an arbitrary degree could be a general limitation of the \textnormal{NMQC}$_\oplus$ model that only adaptivity can bypass.

\begin{figure}[H]
\centering
\includegraphics[scale=0.40]{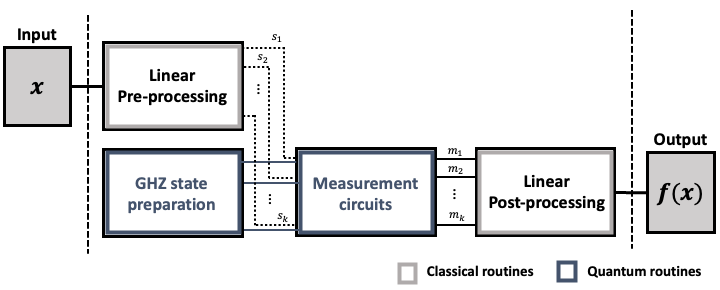}
\caption{A schematic representation of the circuit realization proposed for the \textnormal{NMQC}$_\oplus$ model. Vertically aligned processes are performed in parallel, while horizontally aligned processes are performed sequentially.}
\label{fig:arquiteture}
\end{figure}

\subsection{Quantum vs. classical circuits}

In this last subsection, we compare the known upper and lower bounds on classical circuits for Boolean function evaluation with quantum circuits evaluating the same functions in the \textnormal{NMQC}$_\oplus$ model. In particular, this comparison will focus on symmetric Boolean functions. This choice is justified by the contributions to the case of symmetric Boolean functions discussed in this paper and the many results on symmetric functions that can be found in the classical circuit complexity literature \cite{Wegener87,Vollmer10,Smolensky87,radhakrishnan1991,Stockmeyer76}.

\subsubsection{Degree two symmetric Boolean functions}

The translation of \textnormal{NMQC}$_\oplus$ computations from an abstract description to explicit circuits has shown that degree two Boolean functions are particularly interesting as they evaluate deterministically. Thus, their circuits' depth and number of gates 
do not depend on a prescribed level of approximation accuracy. Furthermore, we were fortunate to observe that some of these functions stand on the edge of the best lower bounds known for their classical circuit size \cite{Fischer82}. In particular, this allows us to notice that there is at least one family of symmetric Boolean functions of degree two with a logarithmic advantage over any conceivable classical circuit.   The advantage exists in terms of width and number of gates, considering that the classical circuits can be formed by Boolean operations from the full basis $\{\mathsf{AND}, \mathsf{OR},\neg, \mathsf{XOR},\mathsf{NAND}, etc. \}$. 

\begin{theorem}\label{separation}
Any classical circuit with unary and binary Boolean operators with single fan out computes the complete symmetric function of degree two $C^2$ with $\Theta(n*\log_2(n))$ gates and circuit width. In contrast, realizations of the \textnormal{NMQC}$_\oplus$ model compute this function with $\Theta(n)$ gates and circuit width.
\end{theorem}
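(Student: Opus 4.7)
The plan is to establish the quantum and classical bounds independently, then combine them for the stated separation.

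\textbf{Quantum upper bound $O(n)$.} I would specialize Lemma \ref{polinomials} to $k=2$, obtaining the multi-linear polynomial
\begin{equation*}
    \mathsf{poly}_{C^2}(x) \;=\; -\tfrac{\pi}{2}\Big(\sum_{i=1}^n x_i \;-\; \bigoplus_{i=1}^n x_i\Big),
\end{equation*}
whose sparsity is $n+1$ (the $n$ singletons plus the single full parity). Feeding this into Lemma \ref{realization} with $\mathsf{deg}(C^2)=2$, the measurement operators lie in the Clifford group, so the synthesis is exact and the $\epsilon$-dependent terms drop out. Each block then contributes $\Theta(n)$: the pre-processing computes a single $n$-bit parity in $n-1$ XOR gates, the GHZ preparation of $n+1$ qubits uses $\Theta(n)$ Clifford gates (Subsection \ref{ghzcreation}), the $n+1$ Clifford measurements cost $\Theta(n)$ gates, and the post-processing adds another $n$ XORs. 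The width is bounded by $k+\sum_{S}(|S|-1)\delta(\widehat{f}(S)) = (n+1)+(n-1) = \Theta(n)$.

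\textbf{Quantum lower bound $\Omega(n)$.} This is immediate: any circuit that depends on all $n$ input bits must contain at least $n$ wires, so both the width and the gate count are at least $\Omega(n)$.

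\textbf{Classical upper bound $O(n\log_2 n)$.} Since $C^2(x)$ equals the second least-significant bit of the binary expansion of $|x|$, a standard balanced binary tree of half-adders computes $|x|\bmod 4$ with $O(n\log_2 n)$ binary gates in a tree-shaped formula. A formula is a single fan-out circuit, and since its underlying tree must have $\Theta(n\log_2 n)$ gates it also has $\Theta(n\log_2 n)$ input leaves, giving the matching width.

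\textbf{Classical lower bound $\Omega(n\log_2 n)$.} This is the main obstacle of the proof and the step I would not reprove from scratch; instead I would invoke the classical formula-size lower bound of Fischer, Meyer and Paterson \cite{Fischer82}, which establishes $\Omega(n\log_2 n)$ formula length for a broad family of symmetric Boolean functions over any complete binary basis, with $C^2$ among them. The restriction to single fan-out in the theorem statement is exactly what reduces the circuit to a formula (tree), so the lower bound applies. For the width, note that in a single-fan-out tree with $G$ internal gates there are $G+1$ leaves, each of which is a separate input wire; hence the width of the circuit is also $\Omega(n\log_2 n)$. The conceptual point worth emphasizing is that disallowing fan-out prevents the classical circuit from reusing intermediate sums, whereas in the $\textnormal{NMQC}_\oplus$ realization the GHZ resource effectively plays the role of distributed, correlated memory, collapsing the $\log_2 n$ factor.
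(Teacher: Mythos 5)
Your overall strategy coincides with the paper's: the quantum $\Theta(n)$ bound via the sparsity-$(n+1)$ polynomial of Lemma \ref{polinomials} fed into the circuit realization of Lemma \ref{realization} (with exact Clifford synthesis since $\mathsf{deg}(C^2)=2$), and the classical $\Theta(n\log_2 n)$ bound via an explicit $O(n\log_2 n)$ formula together with the Fischer--Meyer--Paterson bound \cite{Fischer82}. The one substantive divergence is also where your argument has a gap: you invoke \cite{Fischer82} as directly giving $L(C^2)=\Omega(n\log_2 n)$, asserting that $C^2$ is "among" the functions covered. The bound the paper actually uses from \cite{Fischer82} is $L(\mathsf{Count}(x,4))\geq \varsigma\, n\log_2(n/4)$ for the mod-$4$ counting function, and $C^2$ is not itself one of these counting functions (it accepts the two residues $2,3$ modulo $4$), so the lower bound does not transfer for free. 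The paper bridges this with a short reduction you are missing: $\mathsf{Count}(x,4)=\neg C^2(x)\wedge\neg C^1(x)$, hence by subadditivity of formula length $L(\mathsf{Count}(x,4))\leq L(C^2)+L(C^1)$, and since parity has $L(C^1)=n$ over the full binary basis, $L(C^2)\geq \varsigma\, n\big(\log_2(n/4)-1\big)=\Omega(n\log_2 n)$. You should either include this two-line reduction or verify that the precise statement in \cite{Fischer82} covers $C^2$ itself; as written, "with $C^2$ among them" is exactly the claim that needs proof.

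Two minor points. First, your classical upper bound via a "balanced binary tree of half-adders" is not literally a single-fan-out circuit: at each node the low-order bit of the partial count feeds both the next-level XOR and the carry/AND for the high-order bit, so subformulas must be duplicated; the resulting recursion (or the paper's explicit formula $C^2(x)=\bigoplus_{i=1}^{\log_2 n}\bigwedge_{j}\bigoplus_{k} x_k$) still gives $O(n\log_2 n)$, but the justification should acknowledge the duplication. Second, your trivial $\Omega(n)$ quantum lower bound (the function depends on all $n$ inputs) is sufficient for the circuit-level claim and is simpler than the paper's appeal to the optimality of the $(n+1)$-qubit resource from Hoban et al., so that part is fine.
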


\begin{proof}
It will be necessary to show that the upper and lower bounds have the same scaling rate for both functions in both circuit classes. Considering first the classical circuits, we can relate the $C^2$ function with the counting functions described in \cite{Fischer82}. This is possible because these functions are defined as $\mathsf{Count}(x,k)=1\ $if $\sum_{i=0}^n x_i\ mod\ k=0$. Therefore  
\begin{equation}\label{countdecomp}
    \mathsf{Count}(x,4)= \neg \big(C^2(x)\big )\ \mathsf{AND}\ \neg \big( C^1(x)\big)
\end{equation}
\noindent provided that $C^2(x)$ and $C^1(x)$ represent the two least significant bits of the Hamming weight $|x|$. Then it is possible to relate the length of the Boolean formulas of $\mathsf{Count}(x,4)$ and $C^1(x)$ with the length of the formulas for $C^2(x)$,    
\begin{align}
L(\mathsf{Count}(x,4))&\leq L(C^2(x)) + L(C^1(x)) \\
L(\mathsf{Count}(x,4))- L(C^1(x))&\leq L(C^2(x)) \ .
\end{align}

\noindent From these relations between the length of the formula and the lower bound of $ \varsigma n \log_2(\frac{n}{4})$ with $\varsigma \in [0,1]$ for $L(\mathsf{Count}(x,4))$ and the exact scaling of $n$ for $L(C^1(x))$, we obtain that $L(C^2)\geq \varsigma n \big(\log_2(\frac{n}{4})-1\big)$ \cite{Fischer82}. Thus, $\Omega\big(L(C^2(x)\big)= n*log(n)$.  

The upper bound of $L(C^2(x))$ is simply obtained by considering the following expression 
\begin{equation}
C^2(x)= \bigoplus_{i=1}^{log_2(n)} \Big( \bigwedge_{j=0}^{2^i-1} \Big(\bigoplus_{k=j*\frac{n}{2^i}}^{(j+1)\frac{n}{2^i}} x_{k}\Big) \Big)\ ,
\end{equation}

\noindent from which we obtain that $L(C^2(x)) \leq \mathcal{O}\big(n*\log_2(n)\big )$. Consequently, we conclude that $L(C^2(x)= \Theta\big(n*\log_2(n)\big)$. Finally, if circuits are composed of unary and binary operators with a single fan-out, the number of gates equals the Boolean formula's length \cite{Wegener87}. In addition, the circuit width also scales with the size of the formula and the depth with $\log_2(L(C^2(x))$.

For the \textnormal{NMQC}$_\oplus$ circuit realization of the $C^2(x)$ function, one can consider the multi-linear polynomials from Lemma \ref{polinomials}. These have a sparsity of $n+1$ and have been proven optimal in \cite{Hoban2011a}. Therefore, the process described in section \ref{NMQCcircuit} translates these to quantum circuits with a number of gates and width of $\Theta(n)$. Also, the computations are deterministic, given that the measurement operators can be synthesized exactly for Boolean functions with a degree equal to two. 
\end{proof}

\noindent This separation is based on a single family of symmetric Boolean functions. However, this can be enlarged to all the functions in the $[aC^k]$ class for which the symmetrized function is $C^2$. The explanation is that all these functions will have an \textnormal{NMQC}$_\oplus$ realization with a number of gates scaling with $n$ based on Lemma \ref{acbound}. On the other hand, classical circuits for those functions scale with $n*\log_2(n)$ in width and number of gates.

Moreover, suppose one argues that this circuit separation is still not strong enough because it misses to account for the necessary repetitions of the input strings. For that matter, we can show that the process that generates these repetitions for the \textnormal{NMQC}$_\oplus$ circuits will always be more efficient. In particular, it is sufficient to verify that the classical circuits require more copies of the input string than the \textnormal{NMQC}$_\oplus$ circuits. The classical circuits require a number of copies scaling with $\log_2(n)$, while the quantum solution requires only a single copy of the input string. This single copy is enough because it can be used to compute the parity while the given input string provides all the single bits that control the rotations necessary to evaluate the function (Lemma \ref{expr12}). Therefore, if the cost related to this process was included, the gap between the quantum and the classical circuits would further increase.

Another possible counter-argument relates to the depth of the circuits. We obtained a quantum circuit family for the specific $C^2$ function whose depth scales with $2\log_2(n)$, while the classical circuit scales only with $\log_2(n)+\log_2(\log_2(n))$. Consequently, they scale with the same asymptotical rate of $\mathcal{O}(\log(n))$. Although one could argue that there will be an asymptotical constant advantage equal to two in the classical case, this constant depth advantage is insignificant compared to the logarithmic width and gate advantage, which grows infinitely. Also, one could exclude the GHZ state preparation from the evaluation process and address it as polynomial size advice. This assumption is justified by state preparation is independent of the input. In that case, the $C^2$ function could be evaluated with a depth of $\log_2(n)+7$, producing a small additive advantage of $\log_2(\log_2(n))$ in comparison to the optimal classical circuits for which we do not know any equivalent advice.

\begin{lemma}
    Suppose an $n+1$-qubit GHZ state is provided for an \textnormal{NMQC}$_\oplus$ computation. Then, the complete symmetric function of degree two $C^2$ can be evaluated with a depth equal to $\log_2(n)+7$. 
\end{lemma}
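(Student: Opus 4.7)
The plan is to instantiate the measurement assignment for $C^2$ guaranteed by Lemma~\ref{polinomials}, which produces a multi-linear polynomial of sparsity $n+1$. The resulting assignment uses exactly the $(n+1)$-qubit GHZ state provided as advice, $n$ trivial linear pre-processing functions $L_i(x)=x_i$ for $i=1,\ldots,n$, one ``global'' function $L_{n+1}(x)=\bigoplus_{i=1}^n x_i$, and post-processing given by the parity of all $n+1$ measurement outcomes. The strategy is then to schedule these ingredients aggressively in parallel and track the critical path.

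Because the GHZ state is supplied as advice, its preparation contributes depth $0$. The $n$ local pre-processing functions are just wires, so the measurements on the corresponding $n$ qubits can fire as soon as the input bits arrive; their outcomes $m_1,\dots,m_n$ can then immediately be fed into a balanced XOR post-processing tree of depth $\lceil \log_2 n\rceil$. In parallel, we compute the global parity $L_{n+1}$ with a balanced XOR tree of depth $\lceil \log_2 n\rceil$, perform the measurement on qubit $n+1$, and XOR the resulting outcome $m_{n+1}$ into the root of the post-processing tree with a single additional gate. By Theorem~\ref{maxcliff}, every measurement operator lies in the second level of the Clifford hierarchy, so each $M_i(s_i)$ decomposes as a constant-depth Clifford block followed by a computational-basis measurement; no $T$-gate approximation is needed and the whole circuit is deterministic.

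The critical path is therefore the global branch: one $\lceil \log_2 n\rceil$-deep XOR tree for $L_{n+1}$, an $O(1)$ Clifford block plus measurement, and one final XOR into the output root. The parallel local branch finishes at essentially the same moment, so merging the two branches costs only a single extra XOR. Summing the small constants contributed by the classical switching between the two rotations selected by $s_i$, the exact Clifford synthesis of the degree-$2$ measurement operator, the measurement itself, and the final XOR yields total depth $\log_2(n)+7$.

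The main obstacle is the precise bookkeeping of the small constants. One must verify that the local branch (classical control on $s_i$, Clifford measurement block, and an $\lceil \log_2 n\rceil$-deep XOR tree of outcomes) ends no later than the global branch (pre-processing XOR tree, Clifford block, measurement), so that their merger indeed costs only one additional gate and the additive constant is exactly $7$ rather than a slightly larger number. The parallelization scheme is the conceptual core; the constant then follows by direct counting of the Clifford decomposition of the specific degree-$2$ measurement operators dictated by the polynomial of Lemma~\ref{polinomials}.
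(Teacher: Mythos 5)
Your proposal is correct and takes essentially the same route as the paper: the same sparsity-$(n+1)$ assignment from Lemma~\ref{polinomials}, the same parallel scheduling in which the $n$ input-dependent measurements and the summation of their outcomes run concurrently with the $\log_2(n)$-depth XOR tree computing $\bigoplus_i x_i$, followed by the last measurement and a single merging XOR. The only piece you leave open (the exact additive constant) is handled in the paper simply by asserting that the degree-two measurement block can be synthesized in $6$ Clifford$+T$ layers and that the post-processing is bounded by $\log_2(n)+1$, yielding $\log_2(n)+7$.
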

\begin{proof}
Considering the measurement assignment that results from the multi-linear polynomial of Lemma \ref{polinomials}, there are $n$ measurements that depend directly on the input bits. These can be measured and their output linearly summed while the linear side computer determines $\oplus_{i=0}^{n-1} x_i$. When the linear side computer finishes, the last qubit is measured in the determined basis, and the outcome will be included in the linear post-processing. Therefore, the depth of the computations will be equal to or smaller than the depth of operations associated with the measurements, plus the linear post-processing. The measurement process for degree two Boolean function can be executed in 6 layers using only gates of the Clifford+$T$ gate set, given that the control Pauli $X$ and $Y$ gates can be synthesized within that length. Thus, this value is summed to the linear post-processing, which has a depth bounded by $\log_2(n)+1$, obtaining a total depth bounded by $\log_2(n)+7$.
\end{proof}

   We believe this is an interesting quantum-classical separation result because 
   it does not depend on any conjecture. Also, it is proven at the level of circuits \footnote{Boolean formulas are a subclass of circuits because they lack memory. Therefore, they are believed to be weaker than circuits that could reuse sub-computations. However, it is still open if the class of poly-size circuits is larger than the class of poly-size Boolean formulas \cite{Boaz09}. }, which is typically very difficult to do based on the difficulty of proving classical or quantum circuit lower bounds \cite{Scott22}. A related and stronger finding was the breakthrough result of \cite{Bravyi188} that identified a class of search problems with a more significant quantum-classical separation on the circuit level. Other separations at the circuit level were found in cases where the computational space was reduced \cite{ABLAYEV2005,Maslov21,BARRINGTON1989}. These demonstrated a width advantage for quantum circuits at a small and fixed size. In comparison, our result does seem to indicate that this width advantage could scale for computations without limited space. Other separations of larger magnitude have been found \cite{Arkhipov11,Shor1999,Grover98,simons97}. However, most of these are proven under the assumption of black boxes or query complexity \cite{Scott22}. Thus, they are not proven under the more physically realistic assumptions we consider here.

Furthermore, 
we observe that Theorem \ref{separation} implies that there exists a dimension of the input string above which the \textnormal{NMQC}$_\oplus$ evaluation of the $C^2$ functions starts to be more efficient than any classical circuit evaluation based on Boolean formulas. This point is particularly interesting because one can relate this \textnormal{NMQC}$_\oplus$ evaluation to the generalized Svetlichny inequalities \footnote{The connection of \textnormal{NMQC}$_\oplus$ evaluations and Bell inequalities was established in a one-to-one relation by the authors of \cite{Hoban2011a}.}. Therefore, by doing this, we conclude that starting from some specific number of qubits, a maximal violation of the generalized Svetlichny inequalities, with a quantum state and measurement operators that can be prepared with a linear number of operations, simultaneously proves a circuit separation. This observation connects separations related to communication problems to circuit problems. Also, the magnitude of the separations is intriguing as one is exponential, and the other is only logarithmic for the function at hand.

\subsubsection{ Higher degree symmetric Boolean functions}
To analyze higher degree symmetric Boolean functions, we isolate first the functions with degree three. This differentiation occurs because we identified a family of symmetric Boolean functions that presents a logarithmic advantage between the size of their measurement assignments and their optimal Boolean formulas. In particular, this family of Boolean functions refers to the $\mathsf{Count}(x,4)$ function described in the proof of Theorem \ref{separation}. It is possible to find measurement assignments for these functions with sparsity equal to $2n+2$, based on Equation \eqref{countdecomp}, Lemma \ref{polinomials} and Lemma \ref{sSBF}. In addition, their Boolean formulas are lower bounded by $\varsigma n\log_2(n)$, with $\varsigma \in [0,1]$. 

Although we identified a separation at an abstract level, it almost disappears when these measurement assignments are translated to circuits. The depth of the circuit has an additional term for functions with a degree equal to or greater than three (Lemma \ref{realization}). This implies that the depth of the resulting circuits is at least twice the classical depth. Also, the number of gates will scale as $n*\log_2(n)$, similarly to what happens in classical circuits. Therefore, the remaining advantage presents itself with the width of the quantum circuits being logarithmically smaller. However,  the computation has an error term that does not exist in the classical case. A fair comparison would require the consideration of probabilistic classical circuits. We leave the analysis between the probabilistic case and circuits with different depth and width relations for future work.

For higher degree functions, we use the bounds of the Boolean formulas identified for the Counting functions, with the modular division fixed to a power of two, to define upper bounds for the complete symmetric functions.

\begin{lemma}
The Boolean formula of any CSF $L\big(C^k\big )$ is bounded by $\mathcal{O}\big(n\log_2(n)^{ \floor*{\log_2(k)}}\big)$. 
\end{lemma}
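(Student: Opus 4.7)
The plan is to reduce the bound on $L(C^k)$ to bounds on the ``power-of-two'' complete symmetric functions $L(C^{2^r})$ via Lemma~\ref{decomposition}, and then prove the latter by induction on $r$ using a divide-and-conquer formula based on the mod-$2$ Vandermonde identity.

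First, I would apply Lemma~\ref{decomposition}: writing $C^k(x) = \bigwedge_{r \in R_k} C^{2^r}(x)$ with $R_k$ indexing the $1$-bits of the binary expansion of $k$, so that $\max R_k = \lfloor \log_2 k \rfloor$, yields immediately
\[
L(C^k) \leq \sum_{r \in R_k} L(C^{2^r}) + |R_k| - 1 .
\]
Since $|R_k| \leq \lfloor \log_2 k \rfloor + 1$ is an $n$-independent constant, it suffices to show $L(C^{2^r}) = \mathcal{O}_r\bigl(n(\log n)^r\bigr)$ for each fixed $r$, and then the dominant term in the sum above is $L(C^{2^{\lfloor \log_2 k \rfloor}})$.

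Second, I would establish this inductive bound on $L(C^{2^r})$. The base case $r=0$ is immediate since $C^1(x) = \bigoplus_i x_i$ has formula size $n-1$. For the inductive step, I would partition the inputs into two halves $x_L, x_R$ of size $n/2$ each and apply the mod-$2$ Vandermonde identity (obtained by reducing the integer identity $\binom{|x|}{2^r} = \sum_{j=0}^{2^r} \binom{|x_L|}{j}\binom{|x_R|}{2^r-j}$ modulo $2$, using that $C^m(y) = \binom{|y|}{m} \bmod 2$):
\[
C^{2^r}(x) \;=\; \bigoplus_{j=0}^{2^r} \bigl(C^j(x_L) \wedge C^{2^r-j}(x_R)\bigr) .
\]
The two endpoint terms ($j=0$ and $j=2^r$) contribute $C^{2^r}(x_L) \oplus C^{2^r}(x_R)$, yielding the recursive part. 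The $2^r - 1$ ``middle'' terms have $1 \leq j \leq 2^r - 1$, so both $j$ and $2^r - j$ are strictly less than $2^r$; by Lemma~\ref{decomposition}, each $C^j(x_L)$ and $C^{2^r-j}(x_R)$ is a conjunction of at most $r$ functions $C^{2^i}(\cdot)$ with $i \leq r-1$, and by the inductive hypothesis each such factor has formula size $\mathcal{O}_{r-1}\bigl((n/2)(\log n)^{r-1}\bigr)$. Collecting these estimates gives the recurrence
\[
T(r,n) \;\leq\; 2\,T(r, n/2) + \mathcal{O}_r\!\bigl(n (\log n)^{r-1}\bigr),
\]
which by the master theorem solves to $T(r,n) = \mathcal{O}_r\bigl(n (\log n)^r\bigr)$, closing the induction and proving the claim.

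The main obstacle is bookkeeping rather than an idea: the constant hidden in $\mathcal{O}_r$ inevitably grows with $r$ (roughly as $2^r \cdot r$, reflecting the $2^r - 1$ middle terms and the $r$-ary decomposition from Lemma~\ref{decomposition}), so one must be explicit that the constant implicit in the final $\mathcal{O}$-bound depends on $k$. A secondary, less routine point is justifying that the mod-$2$ Vandermonde identity holds uniformly for all $j$; this follows immediately once one observes $C^m(y) = \binom{|y|}{m} \bmod 2$ and reduces the integer Vandermonde identity termwise modulo $2$. Alternatively, one could try to invoke directly known upper bounds for Boolean formulas of the counting functions $\mathsf{Count}(x, 2^j)$, exploiting the fact (used already in Theorem~\ref{separation}) that $\mathsf{Count}(x, 2^j) = \bigwedge_{i=0}^{j-1} \neg C^{2^i}(x)$ relates the two families, but the self-contained Vandermonde argument above avoids reliance on external bounds.
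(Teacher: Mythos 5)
Your proposal is correct, but it reaches the power-of-two bound by a different route than the paper. Both proofs finish the same way: Lemma~\ref{decomposition} gives $L(C^k)\leq\sum_{r\in R_k}L(C^{2^r})$ (plus constantly many connectives), and the largest term $r=\lfloor\log_2 k\rfloor$ dominates. The difference is in how $L(C^{2^r})=\mathcal{O}(n\log_2(n)^{r})$ is obtained. The paper imports the known formula-size upper bounds $L(\mathsf{Count}(x,2^p))=\mathcal{O}\big(n\log_2(n)^{p-1}\big)$ for the modular counting functions and uses the identity $\mathsf{Count}(x,2^p)=\bigwedge_i\neg C^{2^i}(x)$ to relate $L(C^{2^{p-1}})$ recursively to $L(\mathsf{Count}(x,2^p))$ and the lower-degree CSFs. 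You instead prove the power-of-two case from scratch: using $C^m(y)=\binom{|y|}{m}\bmod 2$, the termwise mod-$2$ reduction of Vandermonde's identity gives the split $C^{2^r}(x)=C^{2^r}(x_L)\oplus C^{2^r}(x_R)\oplus\bigoplus_{j=1}^{2^r-1}\big(C^{j}(x_L)\wedge C^{2^r-j}(x_R)\big)$, the middle terms are handled by Lemma~\ref{decomposition} plus the induction hypothesis on $r$, and the recurrence $T(r,n)\leq 2T(r,n/2)+\mathcal{O}_r\big(n\log^{r-1}n\big)$ yields $\mathcal{O}_r\big(n\log^{r}n\big)$; all steps check out (the disjoint-variable halves respect the single fan-out restriction, and the $j=0$, $j=2^r$ endpoints indeed give exactly the two recursive calls since $C^0\equiv 1$). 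What each approach buys: the paper's argument is shorter because it leans on the classical counting-function constructions of \cite{Fischer82}, whereas yours is self-contained, gives an explicit formula construction, and avoids the step of extracting an upper bound on $L(C^{2^{p-1}})$ from the conjunction identity, which the paper states rather tersely. Your remark that the hidden constant grows with $k$ (roughly like $2^r\cdot r$) is appropriate and consistent with the paper's implicit treatment of $k$ as a constant.
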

\begin{proof}
First, the relation between the Counting functions and the CSFs for powers of two can be established by generalizing Equation \eqref{countdecomp} to $\mathsf{Count}(x,2^p)= \bigwedge_{i=1}^{p-1} \ \neg C^{2^i}$. Then, this allows us to relate Boolean formulas such that 
\begin{align}
L(C^{2^{p-1}})&\leq L\big(Count(x,2^p)\big)+ L\Big(\bigwedge_{i=1}^{p-2} \ \neg C^{2^i} \Big) \\
&\leq L \big(Count(x,2^p)\big)+ \sum_{i=1}^{p-2} L\big(C^{2^i}\big) 
\end{align}
\noindent Then, one can establish the upper bounds recursively for all CSF of the form $C^{2^p}$ by using these relations, starting from the fact that $L(C^1)=n$, and sequentially use the upper bounds of $\mathsf{Count}(n,2^p)$. In particular, the upper bounds of such Counting functions are $L(\mathsf{Count}(n,2^p))=\mathcal{O}\big(n*\log_2(n)^{p-1}\big)$. Therefore, we obtain $L\big(C^{2^p}\big)=\mathcal{O}\big(n*\log_2(n)^p\big )$ as the upper bound for the corresponding Boolean formulas. Finally, using Lemma \ref{decomposition}, one can generalize the bound to arbitrary degrees of the CSFs given that
\begin{align}
L\big(C^k\big )\leq  \sum_{r\in R_k} L\big (C^{2^r} \big ), 
\end{align}
\noindent with $R_k \subseteq [\ceil*{\log(k)}]$, such that $\sum_{r\in R_k} 2^{r}=k$. This relation demonstrates that the largest CSFs present in its decomposition into powers of two will define the upper bound, combining these results with the bounds established for those values, we conclude the proof. 
\end{proof}

 In comparison, our lower bound conjecture indicates that no symmetric measurement assignment in the \textnormal{NMQC}$_\oplus$ model can compute the CSF $C^k$ with less than $n^{\floor*{k/2}-1}$ qubits (Conjecture \ref{conjecture}). Therefore, unless our conjectured lower bound deviates immensely when non-symmetric measurement assignments are considered, the \textnormal{NMQC}$_\oplus$ computations have a polynomial disadvantage compared to the classical circuits. In particular, these would materialize in constant depth, polynomial width, and polynomial gate advantages for the classical circuits.

Moreover, given the lower bounds known from \cite{Hoban2011a, Mori2019}, and the previous analysis, the \textnormal{NMQC}$_\oplus$ model is suspected to be computationally inefficient for higher degree Boolean functions. Actually, it cannot even evaluate those functions deterministically with the Clifford+$T$ gate set (Corollary \ref{cliffordlimit}). These results point to the importance of adaptivity for deterministic computations for an MBQC model. Remember that if an adaptive process is allowed, then a circuit composed of individual \textnormal{NMQC}$_\oplus$ computations can replicate any classical and deterministic circuit \cite{AndersBrown2009}. Similarly, the results of \cite{Danos06} and \cite{Browne_2007} constructively describe the process of reaching determinism via adaptive measurement patterns for universal quantum computations.

\section{Discussion and conclusions} \label{conclusion}

We investigated the problem of evaluating deterministically Boolean functions in the \textnormal{NMQC}$_\oplus$ model, providing new constructions and a more precise understanding of the necessary instructions for each stage of the model. We characterized the complexity of two resources in this model: the number of required qubits in a GHZ state, and the required level of the Clifford hierarchy for single-qubit measurements. Regarding the number of qubits in a GHZ state, we lowered the upper bound for the number of qubits required to evaluate the entire set of symmetric Boolean functions and conjectured a lower bound for this same set, assuming symmetries between the instructions of the process. Regarding the complexity of the measurements required, we proved an upper bound on the level of the Clifford hierarchy. In particular,  this bound is proven to be tight to the boundary identified in \cite{Frembs22}, which limits the degree of the Boolean functions that a set of operators can evaluate. In the end, we translated the examined \textnormal{NMQC}$_\oplus$ evaluations to possible circuit realizations. This translation motivated our main result demonstrating a circuit separation for a family of degree two SBFs. Moreover, we analyzed possible separations for higher-degree functions. Although these functions have not shown any prospect for advantage under the \textnormal{NMQC}$_\oplus$ model, they guide the way to other computational models with potential advantages. Also, they can still solve non-local games for which classical circuit analogues fail. Therefore, their computation using the \textnormal{NMQC}$_\oplus$ model
could still be an exciting element for the study of non-locality and contextuality.

Furthermore, we highlight that proving the conjectured lower bound for the number of qubits necessary to evaluate an SBF on the \textnormal{NMQC}$_\oplus$ model would be an exciting result. It directly affects the understanding of the model, and it would be an interesting result for understanding non-local correlations. We state this because each Boolean function relates to a specific Bell inequality \cite{Hoban2011a}. Thus, results on one front relate to the other, and optimal violations of Bell inequalities are very difficult to define for arbitrary system sizes \cite{BrunnerNicolas2014}. In addition, such a result would be valuable to test the conjecture presented by \cite{Pawlowski09,Popescu94,Barret05}, which attempts to axiomatize quantum mechanics from information theoretical assumptions. For instance, in \cite{Brassard06,VanDam13} the authors define axioms based on communication complexity problems. In particular, their axiom states that correlations that would allow our physical universe to solve these problems trivially are forbidden. Then, if this particular hypothesis is correct, one should retrieve quantum correlations from the maximum allowed by the non-triviality constraint. Interestingly, in \cite{Reznik08}, the $C^2$ was used as a non-local black box for which the authors recovered the maximal allowed efficiency at a constant distance from the corresponding maximal quantum efficiency. Therefore, introducing lower bounds for other SBFs would provide more tests for this concept, given that one can analyze any non-local box and compare the respective maximal quantum correlations to their maximal efficiency.

We showed that a circuit advantage in the \textnormal{NMQC}$_\oplus$ model is possible for evaluating certain Boolean functions with a degree equal to two. However, these computations were proven inefficient for particular higher degree functions compared to the best-known classical circuits. This limitation motivates the extension of the algebraic tools used for the non-adaptive case to temporally-structured computations. In particular, one could potentially connect this computational framework to other circuit synthesis techniques \cite{Mosca19}, classical algebraic techniques \cite{burgisser2013algebraic}, or some existing quantum signal processing techniques \cite{Low17,Haah2019}. The possible results from using these new techniques could be the discovery of compact quantum circuits with potential advantages in both deterministic and probabilistic settings. Also, they could describe possible trade-offs between space and computational time.

For both extreme cases of minimal depth or width, it would be interesting to understand if circuits exhibit some quantum advantage. For instance, for the depth necessary to evaluate the Boolean $\mathsf{OR}$ function defined on any possible size of the input string, one could expect an advantage \cite{scott21,Huang19,Ambainis17}. In particular, in \cite{Hoyer02}, the use of a single adaptive step demonstrated an exponential reduction of the number of qubits for the same function. The same separation was then generalized equally with a single adaptive step for arbitrary symmetric Boolean functions in \cite{Mori2019} and with a constant number of adaptive steps in \cite{Austin22} resorting to the previously mentioned quantum signal processing techniques. Identically, suppose one does try to minimize the computational space. In that case, the quantum circuits could have some width advantage. Interestingly, for a small computational space, a separation was already identified in \cite{ABLAYEV2005,Maslov21,BARRINGTON1989}, and experimentally explored on IBM's quantum devices \cite{Maslov21}. Nevertheless, we question whether the width separations for quantum and classical computations hold for computations with unbounded width and arbitrary input size. Our results seem to indicate that this could indeed be the case.

Another type of analysis would be focused on an intermediate class of quantum circuits. For that, we consider the circuits class composed of quantum layers that can only pass classical information between them. Remember that for MBQC computations, the various quantum layers are entangled. Therefore, there is a quantum and a classical channel between each layer \footnote{Some initial ideas on how to describe a quantum wire that transmits information between layers have been proposed in \cite{Paul22}.}. In addition, the computational capacity of this intermediate class is not entirely understood \cite{Mori2019,anand2022power}. However, if they present any advantage, they are easier to implement in Noisy, intermediate-scale quantum devices (NISQ), being therefore attractive candidates for experimental realizations \cite{Preskill2018}. That reduction in the hardware requirements relates to the fact that these circuits demand less connectivity between the qubits and coherence of the gates. In the end, describing such intermediate complexity classes could be a very appealing task to understand further the importance of a temporal structure in general quantum computations and other particular circuits with advantages \cite{Bravyi188, Courdron18}.

Finally, we hope this work will inspire more experiments like the ones described in \cite{Demirel2021, Austin22, Swain19}, to verify the computational capabilities of quantum correlations with realizations equivalent to the \textnormal{NMQC}$_\oplus$ model. Moreover, as other works have shown, these realizations could work as a device benchmarking tool \cite{Bo22,Baccari20,Wei20,Huang20,Sheffer22}. Independently, the ideas explored in this paper are also connected to some secure multi-party computation protocols \cite{Dunjko2016}, which consider clients with limited computational power. Therefore, we believe that the techniques developed here could allow the design of new and better protocols. These could then potentially motivate exciting experiments, such as the ones reported in  \cite{Barz15,Clementi17}.

\vspace{0.5cm}
\noindent\textbf{Acknowledgments}
We acknowledge helpful feedback from Dan Browne, Pedro Patricio, and Matty Hoban on an early version of these results. This work is financed by National Funds through the FCT - Fundação para a Ciência e a Tecnologia, I.P. (Portuguese Foundation for Science and Technology) within the project IBEX, with reference PTDC/CCI-COM/4280/2021, and via CEECINST/00062/2018 (EFG). This work was supported by the H2020-FETOPEN Grant PHOQUSING (GA no.:899544).

\bibliographystyle{quantum}
\bibliography{library}

\onecolumn \newpage
\appendix

\section{Mathematical transformations}

\subsection{Walsh-Hadamard transform}\label{fourier}
The Walsh-Hadamard transform \cite{ahmed1975walsh,nielsen_chuang_2010} $\mathcal{F}$ is a Fourier transform that has their characters $X_{S}(x)$ defined as
\begin{equation}
  X_{S}(x) =
\left\{
\begin{array}{ll}
     \prod_{i\in S} x_i ,  & S \subseteq ([n]\setminus \emptyset) \\
    1,    & S = \emptyset
\end{array} 
\right. \ ,
\end{equation}
\noindent with $x\in \{1,-1\}^n$ and $S\subseteq [n]$. These characters are the real parity functions that are used to define the multi-linear polynomials (Definition \ref{multipoly}). Moreover, the matrix that defines the transformation is composed of all possible characters $X_{S}(x)$ that can be computed based on the size of the input string $S\subseteq [n]$ evaluated over all possible input values  $x\in \{1,-1\}^n$,
\begin{equation}
\mathcal{F}=\begin{bmatrix}
    X_{S_1}(x_1) &  X_{S_2}(x_1) &  X_{S_3}(x_1) & ... &  X_{S_{2^n}}(x_1) \\
   X_{S_1}(x_2) &  X_{S_2}(x_2) &  X_{S_3}(x_2) & ... &  X_{S_{2^n}}(x_2) \\
    X_{S_1}(x_3) &  X_{S_2}(x_3) &  X_{S_3}(x_3) & ... &  X_{S_{2^n}}(x_3) \\
    \vdots  & \vdots & \vdots & \hdots  & \vdots \\
     X_{S_1}(x_{2^n}) &  X_{S_2}(x_{2^n}) &  X_{S_2}(x_{2^n}) & ... &  X_{S_{2^n}}(x_{2^n}) 
\end{bmatrix}   \ .
\end{equation}

 Furthermore, the transposed matrix of the  Walsh-Hadamard transform applied to the column vector containing the truth table of the Boolean function generates the Fourier coefficients, 
\begin{equation}
\frac{1}{2^n}\begin{bmatrix}
    X_{S_1}(x_1) &  X_{S_1}(x_2) &  X_{S_1}(x_2) & ... &  X_{S_1}(x_{2^n}) \\
   X_{S_2}(x_1) &  X_{S_2}(x_2) &  X_{S_2}(x_2) & ... &  X_{S_2}(x_{2^n}) \\
    X_{S_3}(x_1) &  X_{S_3}(x_2) &  X_{S_3}(x_2) & ... &  X_{S_3}(x_{2^n}) \\
    \vdots  & \vdots & \vdots & \hdots  & \vdots \\
     X_{S_{2^n}}(x_1) &  X_{2^n}(x_2) &  X_{2^n}(x_2) & ... &  X_{2^n}(x_{2^n})
\end{bmatrix}   *
\begin{bmatrix}
    f(x_1) \\
    f(x_2)\\
    f(x_3)\\
    \vdots \\
    f(x_{2^n})
\end{bmatrix} \\ = 
\begin{bmatrix}
    \widehat{f}(S_1) \\
    \widehat{f}(S_2) \\
    \widehat{f}(S_3) \\
    \vdots \\
    \widehat{f}(S_{2^n}) \\
\end{bmatrix} \ .
\end{equation}

\noindent Given that, every element of the resulting column vector is derived from the equation 
\begin{equation}
    \widehat{f}(S)=\sum_{x_i\in \{0,1\}^n} f(x_i)X_S(g^{-1}(x_i)) \ ,
\end{equation}
\noindent which computes the Fourier coefficient of the Boolean function \cite{Odonnel2021}.

\subsection{Krawtchouk transform}\label{Krawtchouk}

The Krawtchouk transform \cite{feinsilver2005krawtchouk,feinsilver2018krawtchouk,stobinska2019quantum},  $\mathcal{K}$ is defined by a matrix with entries specified by the following expressions,

\begin{equation}
 K_{i,j}=
\begin{cases}
    \sum_{k=0}^{i} (-1)^{k}\binom{j}{k} \binom{n-j}{i-k},  &  i<j\ and\ i+j<=n\\
    \sum_{k=0}^{j} (-1)^{k}\binom{j}{k} \binom{n-j}{i-k},    &  i>=j \ and \ i+j<n \\
      \sum_{k=j+i-n}^{j} (-1)^{k}\binom{j}{k} \binom{n-j}{i-k},  &  i>=j\  and\ i+j>=n \\
       \sum_{k=0}^{n-j} (-1)^{i-k}\binom{j}{i-k} \binom{n-j}{k},  &  i<j\  and\ i+j>n \\
\end{cases} 
\ .
\end{equation}

\noindent The matrix has, therefore, the following form,
\begin{equation}
  \mathcal{K}^{[n]}=  \begin{bmatrix}
   \binom{n}{0} &  \binom{n}{1}& ... & \binom{n}{i} & ...   & \binom{n}{n} \\
   \binom{n-1}{0} &  -\binom{n-1}{0}+\binom{n-1}{1}& ... &-\binom{n-1}{1}+\binom{n-1}{2} & ...    & -\binom{n-1}{n-1} \\
   \binom{n-2}{0} &  -2\binom{n-2}{0}+\binom{n-2}{1}& ... &\binom{n-2}{0}-2\binom{n-2}{1}+\binom{n-2}{2} & ...  & \binom{n-2}{n-2} \\
   \vdots & \vdots & \vdots & \vdots  & \vdots & \vdots \\
    \binom{n-j}{0}  & \hdots & \hdots & rf_{i,j}=\sum_{k=0}^{j} (-1)^{j-k}\binom{j}{k} \binom{n-j}{i-k}  & \hdots & (-1)^{j}\binom{n-j}{n-j} \\
   \vdots & \vdots & \vdots & \vdots  & \vdots & \vdots \\
    (-1)^0\binom{n}{n} &  (-1)^1\binom{n}{n-1} &\hdots  & (-1)^i\binom{n}{n-i} &  \hdots & (-1)^{i=j}\binom{n}{0} \\
\end{bmatrix} 
\end{equation}
\noindent with dimension $n+1\times n+1$.

\section{Proof of Lemma \ref{decomposition} }\label{lemma2proof}

This section proves that any complete symmetric function $C^k$ can be decomposed into a conjunction formed uniquely with CSF with a degree equal to a power of two $C^{2^r}$. We obtain the following lemma for single conjunction.

\begin{lemma}\label{lemma4}
 The composition of two complete symmetric functions with degrees $p$ and $q$ equals the complete symmetric function with degree $p+q$, 
\begin{equation}
    C^p \wedge C^q = C^{p+q}
\end{equation}
\noindent if $q$ is a power of two and $p$ is a number composed of powers two larger than $q$, i.e. $p=\sum_{i\in N} 2^{n_i},\  q= 2^{m},\forall i \in N \  (n_i>m)$
\end{lemma}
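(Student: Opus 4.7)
The plan is to reduce the identity $C^p \wedge C^q = C^{p+q}$ to a purely arithmetic statement about binomial coefficients modulo 2, and then dispatch it via Lucas' theorem together with the disjointness of the binary supports of $p$ and $q$.

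First, I would record the well-known fact that the complete symmetric function (in the elementary-symmetric sense used here) depends only on the Hamming weight of its argument: for every $x \in \{0,1\}^n$,
\begin{equation}
C^k(x) \;=\; \binom{|x|}{k} \pmod 2,
\end{equation}
since $\prod_{i \in S} x_i = 1$ iff $S \subseteq \mathrm{supp}(x)$, and hence the sum mod 2 counts the $k$-subsets of $\mathrm{supp}(x)$. With this, both sides of the claimed identity become functions of $w := |x|$ only, and it suffices to prove that for every nonnegative integer $w$,
\begin{equation}
\binom{w}{p}\binom{w}{q} \equiv \binom{w}{p+q} \pmod 2.
\end{equation}

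Next I would apply Lucas' theorem in its $\mathbb{F}_2$ form: $\binom{w}{k} \equiv 1 \pmod 2$ iff the set of bit positions where $k$ has a $1$ is contained in the set of bit positions where $w$ has a $1$ (equivalently, $k \,\mathsf{AND}\, w = k$). The key structural input is the hypothesis on $p$ and $q$: since $q = 2^m$ and $p = \sum_{i \in N} 2^{n_i}$ with all $n_i > m$, the binary expansions of $p$ and $q$ have disjoint supports. Consequently the addition $p+q$ involves no carries, so the bit support of $p+q$ is exactly the disjoint union of the bit supports of $p$ and $q$.

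Combining these observations, the support of $p+q$ lies in the support of $w$ iff both the support of $p$ and the support of $q$ lie in the support of $w$, which by Lucas' theorem translates to
\begin{equation}
\binom{w}{p+q} \equiv 1 \iff \binom{w}{p} \equiv 1 \text{ and } \binom{w}{q} \equiv 1 \pmod 2,
\end{equation}
and $\binom{w}{p+q} \equiv 0$ otherwise. This is exactly the product identity above. Translating back via the first step, $C^{p+q}(x) = C^p(x) \cdot C^q(x) = C^p(x) \wedge C^q(x)$ for all $x$, which proves the lemma. The only subtle point is ensuring that the disjointness of bit supports is really what lets Lucas' theorem factor the binomial; everything else is bookkeeping.
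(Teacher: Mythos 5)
Your proposal is correct, but it takes a genuinely different route from the paper. The paper proves the lemma by strong induction: it expands both sides in ANF, distributes the conjunction over the XOR sums, and counts the multiplicity of each resulting monomial modulo $2$, invoking in the induction step the fact that $\binom{2^{m+1}}{j}$ is even for $0<j<2^{m+1}$ (expressed there through the representation $C^{2^r}(x)=\binom{|x|}{2^r}\bmod 2 = Bit^r(|x|)$). You instead collapse everything to the Hamming weight at the outset via $C^k(x)\equiv\binom{|x|}{k}\pmod 2$ — a fact the paper itself uses, citing Mori — and then reduce the lemma to the single congruence $\binom{w}{p}\binom{w}{q}\equiv\binom{w}{p+q}\pmod 2$, which follows from Lucas' theorem once you observe that the hypothesis forces the binary supports of $p$ and $q$ to be disjoint, so that $p+q$ is a carry-free addition. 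Your argument is shorter, avoids the delicate term-multiplicity bookkeeping and the induction entirely, and in fact proves a slightly stronger statement: the identity holds for any $p,q$ with disjoint binary expansions, of which the paper's hypothesis ($q=2^m$ and $p$ a sum of strictly higher powers of two) is a special case; this stronger form still yields Lemma \ref{decomposition} immediately. What the paper's approach buys in exchange is self-containedness at the level of the ANF (the representation in which CSFs are defined there), at the cost of length and of implicitly relying on the same parity-of-binomials facts your Lucas argument makes explicit.
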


\begin{proof}
We proceed by strong induction. 
\vspace{0.2cm}

\noindent \textit{Base case:} 
\begin{equation}
    C^p \wedge C^1 = C^{p+1}, with\ p=\sum_{i\in N} 2^{n_i},\forall_{i \in N} \  (n_i>0)
\end{equation}
\noindent To prove this equality, we resort to its ANFs,
 \begin{equation}\label{basecase}
  \underbrace{\Big ( \underset{1\leq i_1 < i_2 < ... \leq k}{\bigoplus} x_{i_1} \wedge x_{i_2} \wedge ... \wedge x_{i_p} \Big)}_{Exp_1}\ \wedge\  \underbrace{  \overset{k}{\underset{i=0}{\bigoplus} }\ x_i }_{Exp_2} = \underset{1\leq i_1 < i_2 < ... \leq k}{\bigoplus} x_{i_1} \wedge x_{i_2} \wedge ... \wedge x_{i_{p+1}}  
 \end{equation} 
\noindent  and apply the AND operator to verify that LHS equals the RHS. However, the expression will contain terms that are not only simple linear sums of AND operators composed of single bits. The following rule will reduce the terms with linear sums inside the AND operators possible

\begin{equation}\label{rule}
    (a \oplus b) \wedge c = (a \wedge c) \oplus (b \wedge c)  \tag{R1}
\end{equation}

\noindent By applying repeatedly Rule \ref{rule} to the LHS of Equation \eqref{basecase} we obtain an expression with a term for each element of $Exp_2$ composed with $Exp_1$

\begin{align}
\begin{split}
   & C^p \wedge C^1=  \\ & \Big (\Big (\underbrace{\underset{1\leq i_1 < i_2 < ... \leq k}{\bigoplus} x_{i_1} \wedge x_{i_2} \wedge ... \wedge x_{i_p} \Big )}_{Exp1} \wedge x_1 \Big ) \oplus \Big (\Big(\underbrace{  \underset{1\leq i_1 < i_2 < ... \leq k}{\bigoplus} x_{i_1} \wedge x_{i_2} \wedge ... \wedge x_{i_p} \Big ) }_{Exp1} \wedge x_2 \Big) \\ & \oplus \ \ \ \ \ \ \ \ \ \ \ \ \  \ \ \ \ \ \ \   \cdot \cdot \cdot \ \ \ \ \ \ \ \ \ \ \ \ \ \ \ \ \ \ \ \ \ \ \ \ \  \oplus \Big( \Big ( \underbrace{ \underset{1\leq i_1 < i_2 < ... \leq k}{\bigoplus} x_{i_1} \wedge x_{i_2} \wedge ... \wedge x_{i_p} \Big ) }_{Exp1} \wedge x_k  \Big)
\end{split}
\end{align}
\noindent The same rule is applied to each of the terms in $Exp_1$ with the single term of $Exp_2$. At the end, each term of $Exp_1$ is composed with all elements contained in $Exp_2$
\begin{equation}
\begin{split}
&C^p(x) \wedge C^1(x) =\\ & \ \ \ \mathbf{x_{1}} \wedge x_{2} \wedge \hdots \wedge x_{p} \wedge \mathbf{x_1} \  \oplus\ \ \  x_{1} \wedge \mathbf{x_{2}} \wedge \hdots \wedge x_{p} \wedge \mathbf{x_2} \ \ \ \oplus ... \oplus   x_{1} \wedge x_{2} \wedge \hdots \wedge x_{p} \wedge \mathbf{x_k} \oplus \\ & \mathbf{x_{1}}  \wedge ... \wedge x_{p-1} \wedge x_{p+1} \wedge \mathbf{x_1}\oplus x_{1}  \wedge \hdots \wedge x_{p-1} \wedge x_{p+1} \wedge \mathbf{x_2}\oplus  ... \oplus x_{1}  \wedge ... \wedge x_{p+1} \wedge \mathbf{x_k}\oplus \\ & \hspace{7cm} \cdots \\ &  x_{k-p}  \wedge ... \wedge x_{k-1} \wedge x_{k} \wedge \mathbf{x_1} \oplus x_{k-p}  \wedge ... \wedge x_{k-1} \wedge x_{k} \wedge \mathbf{x_2} \oplus ... \oplus  x_{k-p}  \wedge ... \wedge x_{k-1} \wedge \mathbf{x_{k}} \wedge \mathbf{x_k}
\end{split}
\end{equation}
\noindent In order to reduce this expression to a unique ANF, it will be necessary  first to remove repetitions of bits in each term. For instance, the first term of $Exp_1$ composed with the first term of $Exp_2$ repeats the $x_1$ bit.

Afterward, it will be necessary to count the number of times each term does repeat itself. If a term occurs an odd number of times, it remains and has an exclusive existence in the ANF; if it occurs an even number of times, it disappears. In particular, there will be only terms of size $p$, for which the unique element introduced from $Exp_2$ was already part of the term, and the terms of size $p+1$ for which this term was new. The terms of size $p$ occur once each time that the new bit is equal to an existing one
\begin{equation}
    \#\big(\big|x_i\wedge x_j\wedge ... \wedge x_n\big|=p\big)= \binom{p}{1} =p
\end{equation}
\noindent and as  $p$ is by definition an even number. This implies that no term of size $p$ exists in its final ANF, which is consistent with the RHS of Equation \eqref{basecase}.  

Finally, to verify the equality, it has to be guaranteed that each element of RHD of the equation exists and appears an odd number of times in the LHS of \eqref{basecase}. The first condition is verified because all terms of size $p$ exist in $Exp_1$, and the product with $Exp_2$ produces all possible combinations of terms with size $p+1$. It remains to prove that the terms with $p+1$ arguments appear an odd number of times. For that, let us consider a unique term,
\begin{equation}
     x_{1} \wedge x_{2} \wedge ... \wedge x_{p} \wedge \mathbf{x_i},\ i \notin \{1,2,...,p\}
\end{equation}
\noindent  with $p+1$ different indexes can occur repeatedly due to the composition of another product term that initially shared $p$ indexes and was composed with the missing bit. Therefore each term with $p+1$ different bits is created as
\begin{equation}
     \#\big(\big|x_i\wedge x_j\wedge ... \wedge x_n\big|=p+1\big)= \binom{p+1}{p} = p+1
\end{equation}
\noindent which is an odd number by definition, proving the base case. 

\vspace{0.2cm}
\noindent \textit{Induction step:}
\begin{equation}
 \forall_{y<m+1}\  C^p  \wedge C^{2^y}  = C^{p+2^y}   \implies  C^p \wedge C^{2^{m+1}}  =  C^{p+2^{m+1}}
  \ ,\ with\ p=\sum_{i\in N} 2^{n_i},\ \forall_{i \in N}  \ m+1>n_i
\end{equation}

To prove that the RHS of the consequent equality is equal to its LHS, we rewrite all these functions in the ANF. The procedure is the one used in the base case resorting to Rule \ref{rule} repeatedly. Subsequently, it is necessary to count the number of times this process creates each term to verify its inclusion in the ANF. However, this process is more complex for these cases than for the base case because there are significantly more repetitions, and the decomposition generates product terms with a mixture of lengths. Once more, we resort to the symmetry argument to analyze only one product term for each size. Therefore, we obtain potential terms from the size $2^{m+1}$ to $2^{m+1}+p$, given that they could repeat from $p$ to $0$ times in the expression. Furthermore, the number of times each of these terms repeats is equal to
\begin{align}
    &\#\big ( \big |x_{i_1} \wedge x_{i_2} \wedge ... \wedge x_{i_p}\big | = 2^{m+1} \big )= \binom{2^{m+1}}{p}\ repetitions \\
     &\#\big ( \big |x_{i_1} \wedge x_{i_2} \wedge ... \wedge x_{i_p} \wedge \mathbf{x_{j_1}} \big | = 2^{m+1}+1 \big ) = \binom{2^{m+1}}{p-1}*\binom{1}{1}\ repetitions \\
    &\hspace{5cm} \hdots \nonumber
    \\
       &\#\big ( \big |x_{i_1} \wedge x_{i_2} \wedge ... \wedge x_{i_p} \wedge \mathbf{x_{j_1} \wedge ... \wedge x_{j_{q-1}}}\big |= 2^{m+1}+p-1 \big ) =\binom{2^{m+1}}{1}*\binom{p-1}{p-1}\ repetitions \\
     & \#\big ( \big |x_{i_1} \wedge x_{i_2} \wedge ... \wedge x_{i_p} \wedge \mathbf{x_{j_1} \wedge ... \wedge x_{j_{q}}} \big | = 2^{m+1}+p\big )= \binom{2^{m+1}}{0}*\binom{p}{0}\ repetitions . \\
\end{align} 
It is easy to verify that terms of size $2^{m+1}+p$ exist only once. However, to verify the equality, it is necessary to prove that all other terms of smaller dimensions exist an even number of times. In order to achieve that, it helps to write the modular division by two of these number of combinations with two additional description presented in \cite{Mori2019},
\begin{equation}
  C^{2^r}(x)=  \binom{|x|}{2^r}\ mod\ 2  =Bit^{r}\Big(|x|\Big)
\end{equation}
\noindent showing that the CSF with a degree equal to a base of two has two equivalent representations. One as a modular division by two of a combinatorial term involving the Hamming weight of the input string, and another as a specific bit $r$ of the binary representation of the Hamming weight $|x|$ which will be designated $Bit^r(|x|)$. With these descriptions and the hypothesis for the induction step, we obtain that
\begin{equation}\label{even}
    \binom{2^{m+1}}{p}\ mod\ 2= \bigwedge_{i\in N} Bit^{n_i}(2^{m+1}) =0
\end{equation}
\noindent with $p=\sum_{i\in N} 2^{n_i}$ and as the binary representation of $2^{m+1}$ has only the $(m+1)$th bit equal to $1$, and all other bits and combinations of bits of the binary representation, smaller than $m+1$, are all equal to $0$. This proves that the expression in Equation \eqref{even} is always zero by the definition of the number $p$, and therefore the terms of length $2^{m+1}$ vanish. The same argument can be used for all terms of length $2^{m+1}-1$ to $2^{m+1}+p-1$, as the the modular division by two of their number of repetitions will be always an even number. This statement holds because
\begin{equation}\label{oddd}
    \binom{2^{m+1}}{\{1,2,...,p\}}\ mod\ 2= \bigwedge_{i\in S\subseteq N} Bit^{n_i}(2^{m+1}) =0
\end{equation}
\noindent which concludes the proof that all terms of length smaller then $2^{m+1}+p$ vanish, verifying the induction step.
\end{proof}

Finally to prove Lemma \ref{decomposition}, it is only necessary to obtain the binary representation of $k$ and, with Lemma \ref{lemma4} compose all the base of two components of $k$ sequentially in increasing size order. 
\begin{equation}
   \bigg(\Big(\big( C^{2^{i_1}} \wedge C^{2^{i_2}}\big) \wedge C^{2^{i_3}}\Big) ... \wedge  C^{2^{i_n}} \bigg)=  C^{2^{i_1}+2^{i_2}+2^{i_3}+...+2^{i_n}} = C^{k},\ with\  i_1<i_2<...<i_n
\end{equation}

\noindent obtaining exactly the decomposition stated by Lemma \ref{decomposition}.

\section{Proof of Lemma \ref{polinomials}}\label{polynomials}

In order to verify that the first instances of the general form presented in Proposition \ref{decomp} compute the corresponding CSF for any possible input, it helps to manipulate the general expression given by Equation \eqref{expr12}. The goal is to rewrite this expression such that it can be expressed in function of the Hamming weight $\sum_{i=1}^n x_i = |x|$, which is sufficient to determine the result of the function, and the size of the input string $n$ that is necessary to determine the specific polynomial. In particular, the polynomials are defined, for all ${x\in \{0,1\}^n}$, as
 \begin{align}
 \begin{split}
     \mathsf{poly}_{C^{k}}(x)=\frac{\pi}{2^{k-1}}\Bigg (\sum_{j=1}^{k/2} \binom{n-k/2-j}{k/2-j}  (-1)^j \bigg( \sum_{S_i\subseteq [n],|S_i|=j} \bigoplus_{i \in S_i} x_i    - \sum_{S_i \subseteq [n] , |S_i|=n-j+1}  \bigoplus_{i \in S_i} x_i  \ \bigg)  \Bigg)\ .
\end{split}
 \end{align}
\noindent For this transformation, the coefficients remain equal, but the transformed parity bases ($\bigoplus_{i \in S} x_i$) change to a different expression that evaluates to the same values based on the newly selected parameters,
\begin{equation}
    \sum_{S_i\subseteq [n]\wedge |S_i|=j} \bigoplus_{i \in S_i} x_i = \sum_{k=0}^{\frac{2j+(-1)^j+1}{2}} \binom{|x|}{2k+1}*\binom{n-|x|}{j-2k-1}\ .
\end{equation}
\noindent  It will also be useful to distinguish the sum of the transformed parity bases described with sets of size $n-j+1$, which are defined as 
 \begin{align}
    \sum_{S_i \subseteq [n] \wedge |S_i|=n-j+1}  \bigoplus_{i \in S_i} x_i  = \left\{
\begin{array}{ll}
     \sum_{l=0}^{\frac{2j+(-1)^j+1}{2}} \binom{|x|}{2l+1}*\binom{n-|x|}{j-2l-1},  & |x|\ mod\ 2 =0 \\
    -\sum_{l=0}^{\frac{2j+(-1)^j+1}{2}} \binom{|x|}{2l+1}*\binom{n-|x|}{j-2l-2},    & |x|\ mod\ 2 =1
\end{array} 
\right. \ .
\end{align}

\noindent Using this transformation and considering the case of $|x|\ mod\ 2 =0$, the general expression in Equation \eqref{expr12} becomes, for all  $|x|\in \{0,2,4,...,n\}$ and $n\in \mathbb{N}$,  

\begin{align}\label{expr3}
 \begin{split}
    \mathsf{poly}_{C^{k}}(n,|x|)=\frac{\pi}{2^{k-1}}\Bigg (\sum_{j=0}^{k/2} \binom{n-k/2-j}{k/2-j}& (-1)^j \Bigg( \sum_{l=0}^{\frac{2j+(-1)^j+1}{2}} \binom{|x|}{2l+1}*\binom{n-|x|}{j-2l-1} \\ &-\sum_{l=0}^{\frac{2j+(-1)^{j-1}-1}{2}} \binom{|x|}{2l+1}*\binom{n-|x|}{j-2l-2}  \Bigg )\Bigg)\ .
\end{split}
\end{align}

\noindent For $|x|\ mod\ 2 =1$ it can be rewritten for all $|x|\in \{1,3,5,...,n\}$ and $n\in \mathbb{N}$, as
\begin{align}\label{expr4}
 \begin{split}
   \mathsf{poly}_{C^{k}}(n,|x|)=\frac{\pi}{2^{k-1}}\Bigg (\sum_{j=0}^{k/2} \binom{n-k/2-j}{k/2-j} & (-1)^j \Bigg (\sum_{l=0}^{\frac{2j+(-1)^j+1}{2}} \binom{|x|}{2l+1}*\binom{n-|x|}{j-2l-1}\\  &+\sum_{l=0}^{\frac{2j+(-1)^{j-1}-1}{2}} \binom{|x|}{2l+1}*\binom{n-|x|}{j-2l-2} \Bigg )\Bigg)\ .
\end{split}
\end{align}

\noindent Fortunately, with the use of Wolfram Mathematica code we could reduce the expressions. In particular, it was possible to show that the dependency on $n$ disappears, and the resulting expression is equivalent to 
\begin{equation}\label{equal}
     \mathsf{poly}_{C^{k}}(n,|x|)=
     \left\{
\begin{array}{ll}
     \frac{\pi}{2^{\frac{k}{2}}*\frac{k}{2}!} \prod_{i=0}^{\frac{k}{2}} (|x|-2i)= \pi*\binom{\frac{|x|}{2}}{\frac{k}{2}},  & |x|\ mod\ 2 =0 \\
   \frac{\pi}{2^{\frac{k}{2}}*\frac{k}{2}!} \prod_{i=0}^{\frac{k}{2}} (|x|-1-2i)= \pi*\binom{\frac{|x|-1}{2}}{\frac{k}{2}},    & |x|\ mod\ 2 =1 \ ,
\end{array} 
\right.
\end{equation}
\noindent with $|x|\in \mathbb{N}$, $n\in \mathbb{N}$, $k=2,4,8,16,32$ and $64$. Now it remains to show that this simpler expression computes the intended functions, and for that, we resort again the $Bit^l$ notation used in Appendix \ref{lemma2proof}

\begin{equation}
    \mathsf{poly}_{C^k}(n,|x|)=
      \left\{
\begin{array}{ll}
     \binom{\frac{|x|}{2}}{\frac{k}{2}}\ mod\ 2=Bit^{log(k/2)}\big(\frac{|x|}{2}\big),  & |x|\ mod\ 2 =0 \\
   \binom{\frac{|x|-1}{2}}{\frac{k}{2}}\ mod\ 2=Bit^{log(k/2)}\big(\frac{|x|-1}{2}\big),    & |x|\ mod\ 2 =1
\end{array}
\right  .
\end{equation}
\noindent Knowing that
\begin{equation}
    Bit^l(x)=Bit^{l-1}(x/2)
\end{equation}
\noindent we obtain
\begin{equation}
    \mathsf{poly}_{C^k}(n,h)=
      \left\{
\begin{array}{ll}
     Bit^{k}\big(|x|\big),  & |x|\ mod\ 2 =0 \\
   Bit^{k}\big(|x|-1\big),    & |x|\ mod\ 2 =1
\end{array} 
\right.
\end{equation}
\noindent Moreover, for any $k>1$ and $|x|\ mod\ 2=1$
\begin{equation}
    Bit^k(|x|-1)=Bit^k(|x|)
\end{equation}
\noindent such that
\begin{align}
     \mathsf{poly}_{C^k}(n,|x|)&= Bit^k(|x|) = \binom{|x|}{k}mod\ 2= C^k(n,|x|)
     \\ &\equiv C^k(x)
\end{align}
\noindent  as intended.

\section{Proof of Lemma \ref{rfbound} } \label{FouriertoKrawt}

In order to prove the lemma, we will simplify the Fourier transform sequentially with conversions that do not impact on the computation of the Fourier coefficients of symmetric semi-Boolean functions until we obtain the Krawtchouk transform. 

First, we will reduce the number of lines in the transform by exploiting a symmetry between the Fourier coefficients, as stated below.

\begin{lemma}\label{equalcoef}
All symmetric semi-Boolean functions, $f_{\ast}:\{0,1\}^n\rightarrow \mathbb{R}$, have equal Fourier coefficients $\widehat{f}(S)$ for all the parity bases ($\bigoplus_{l\subseteq S} x_l$) defined by sets with the same size
\begin{equation}
   S_i,S_j \in [n] ,\ |S_i|=|S_j| \implies \widehat{f}(S_i)=\widehat{f}(S_j)
\end{equation}
\end{lemma}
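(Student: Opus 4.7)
The plan is to prove the claim by exploiting the permutation invariance of $f_{\ast}$ at the level of the Fourier coefficient formula. Recall that, with the convention used in Appendix \ref{fourier}, the coefficient is
\begin{equation}
    \widehat{f}(S) \;=\; \frac{1}{2^n}\sum_{x\in\{0,1\}^n} f_{\ast}(x)\,X_S\!\big(g^{-1}(x)\big),
\end{equation}
where $X_S\!\big(g^{-1}(x)\big)=\prod_{i\in S}(-1)^{x_i}$ depends on $x$ only through the bits indexed by $S$.

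Given two subsets $S_i,S_j\subseteq [n]$ with $|S_i|=|S_j|$, pick any permutation $\pi\in\mathfrak{S}_n$ such that $\pi(S_j)=S_i$; such a $\pi$ exists precisely because the two sets have equal cardinality. Perform the change of variables $x=\pi(y)$ in the sum for $\widehat{f}(S_i)$. The map $x\mapsto\pi(y)$ is a bijection on $\{0,1\}^n$, so the sum range is preserved. Under this relabelling,
\begin{equation}
    X_{S_i}\!\big(g^{-1}(\pi(y))\big) \;=\; \prod_{i\in S_i}(-1)^{y_{\pi^{-1}(i)}} \;=\; \prod_{j\in\pi^{-1}(S_i)}(-1)^{y_j} \;=\; X_{\pi^{-1}(S_i)}\!\big(g^{-1}(y)\big) \;=\; X_{S_j}\!\big(g^{-1}(y)\big),
\end{equation}
and, crucially, $f_{\ast}(\pi(y))=f_{\ast}(y)$ because $f_{\ast}$ is symmetric. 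Putting these two identities together gives $\widehat{f}(S_i)=\widehat{f}(S_j)$, which is the claim.

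The argument is entirely elementary; no step is a real obstacle. The only care required is bookkeeping the action of $\pi$ on the character $X_S$ (it acts on the index set $S$, not on the evaluation point), so I would spell out that computation explicitly to avoid a sign or indexing confusion. After the lemma is established, the next step in Appendix \ref{FouriertoKrawt} will be to collapse the Fourier matrix from $2^n$ rows to $n+1$ rows (one per cardinality class) and then group the columns by Hamming weight $|x|$, which will naturally produce the binomial coefficients appearing in the Krawtchouk entries and yield the $\mathcal{O}(n^{5})$ runtime bound stated in Lemma \ref{rfbound}.
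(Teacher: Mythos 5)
Your proof is correct and rests on the same key idea as the paper's: a coordinate permutation $\pi$ mapping $S_j$ onto $S_i$, under which the character transforms as $X_{S_i}\circ\pi = X_{S_j}$ while $f_{\ast}$ is invariant. The only cosmetic difference is that the paper first groups the sum by Hamming weight via the simplified value vector and applies the permutation to the character sums within each weight class, whereas you apply the change of variables directly in the full Fourier sum — an equivalent, slightly more direct bookkeeping.
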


\begin{proof}
The equation used to compute the Fourier coefficient of symmetric semi-Boolean function can be rewritten using the simplified value vector of the function,
\begin{align}
    \widehat{f}(S)&=\sum_{x_i\in \{0,1\}^n} f(x_i)X_S(x_i) \\ &= \sum_{i=0}^n v_f(i) \sum_{x\in\{0,1\}^n \wedge |x|=i } X_S(x) \ .
\end{align}
\noindent From this equation, it can be verified that the values of the $v_f$ term are always equal for the same function. Therefore, it is sufficient to show that for all parity bases defined by sets $S_i,S_j \subseteq [n]$, of the same size $|S_i|=|S_j|$, the following equality holds
\begin{equation}\label{eqequal}
   \sum_{x\in\{0,1\}^n \wedge |x|=i } X_{S_i}(x) =\sum_{x\in\{0,1\}^n \wedge |x|=i } X_{S_j}(x)\ .
\end{equation}
\noindent This equality is very simple to verify as all the input strings that evaluate to $1$, one of the parity basis will evaluate to $1$ the other for a specific permutation ($p$). This permutation has only to exchange the bits of the input string selected by one of the sets with the bits selected by the other set. Then for all $x\in\{0,1\}^n$ and $|x|=i$, 
\begin{equation}
  X_{S_i}(x)=X_{S_j}(p(x))\ .
\end{equation}

\noindent Furthermore, as no permutation $p$ applied to the input strings changes the number of strings that are evaluated to $1$ by the parity basis, the sum of before and after the permutation is exactly equal, 
\begin{equation}
   \sum_{x\in\{0,1\}^n \wedge |x|=i } X_{S_j}(p(x)) =\sum_{x\in\{0,1\}^n \wedge |x|=i } X_{S_j}(x)\ .
\end{equation}

\noindent Finally, this proves that equality of Equation $\eqref{eqequal}$ provided one of the permutations described previously.  
\end{proof}

\noindent This property shows that at most $n+1$ different Fourier coefficients exist, one for each possible size of the parity bases. Thus, it is sufficient to use $n+1$ lines of the Fourier transform to determine exactly those values, 
\begin{equation}
    \mathcal{F}'=\begin{bmatrix}
     X_{|S_i|=0}(x_1) &  X_{|S_i|=0}(x_2) & X_{|S_i|=0}(x_3) & ... & X_{|S_i|=0}(x_{2^n}) \\
   X_{|S_i|=1}(x_1) &  X_{|S_i|=1}(x_2) & X_{|S_i|=1}(x_3) & ... & X_{|S_i|=1}(x_{2^n}) \\
    X_{|S_i|=2}(x_1) &  X_{|S_i|=2}(x_2) & X_{|S_i|=2}(x_3) & ... & X_{|S_i|=2}(x_{2^n}) \\
    \vdots  & \vdots & \vdots & \hdots  & \vdots \\
     X_{|S_i|=n}(x_1) &  X_{|S_i|=n}(x_2) & X_{|S_i|=n}(x_3) & ... & X_{|S_i|=n}(x_{2^n}) 
\end{bmatrix}\ .
\end{equation}

For the second reduction of the transformation matrix, it is necessary to prove that the Fourier coefficients can be determined based on the simplified value vector of the function $v_f$. Hence, we start with the equation computing Fourier coefficients using a general line of the Fourier transform and apply the necessary transformation with some additional simplifications,  
\begin{align}
  \widehat{f}(S)&=\sum_{x_i\in \{0,1\}^n} f(x_i)X_S(x_i)\\
  &= \sum_{i\in \{0,1\}^n}  v_f(|x_i|) *X_S(x_i)
  \\&= \sum_{x' \in \{0,1\}^{|S|}} \sum_{y'\in \{0,1\}^{n-|S|}} v_f(|x'|+|y'|)* X_S(x')\\
  &= \sum_{x' \in \{0,1\}^{|S|}} \sum_{y'\in \{0,1\}^{n-|S|}} v_f(|x'|+|y'|)* (-1)^{|x'|}*\binom{|S|}{|x'|} \\
  &= \sum_{i=0}^{|S|} \sum_{k=0}^{n-|S|} v_f(i+k) * \binom{n-|S|}{k}*(-1)^i*\binom{|S|}{i}\ .
\end{align}
\noindent The resulting equation is already written based on the simplified value vector and the size of the Fourier basis. However, it is necessary to isolate the term corresponding to the simplified value vector of the function so that one can define the elements of a matrix that computes the Fourier coefficients if applied to that object. For that we resort to the following change of variables $j=i+k$, yielding
\begin{align}
   \widehat{f}(S)&= \sum_{i=0}^{|S|} \sum_{j=0}^{n} v_f(j) * (-1)^i*\binom{|S|}{i}*\binom{n-|S|}{j-i}\\&= \sum_{j=0}^{n} v_f(j) *\sum_{i=0}^{|S|}(-1)^i*\binom{|S|}{i}* \binom{n-|S|}{j-i} \ .
\end{align}
\noindent The last equation allows us to define the lines of a matrix that computes the Fourier coefficients when applied to the simplified value vector, 

\begin{equation}
\begin{bmatrix}
    K_{1,1} &  K_{1,2} &   K_{1,3} & ... &   K_{1,n+1} \\
   K_{2,1} &  K_{2,2} &   K_{2,3} & ... &   K_{2,n+1} \\
    K_{3,1} &  K_{3,2} &   K_{3,3} & ... &   K_{3,n+1} \\
    \vdots  & \vdots & \vdots & \hdots  & \vdots \\
      K_{n+1,1} &  K_{n+1,2} &   K_{n+1,3} & ... &   K_{n+1,n+1}
\end{bmatrix}   *
\begin{bmatrix}
    v_f(0) \\
    v_f(1)\\
    v_f(2)\\
    \vdots \\
    v_f(n)
\end{bmatrix} \\ = 
\begin{bmatrix}
    \widehat{f}(|S|=0) \\
    \widehat{f}(|S|=1) \\
    \widehat{f}(|S|=2) \\
    \vdots \\
    \widehat{f}(|S|=n) \\
\end{bmatrix}\ ,
\end{equation}
\noindent with the following rule defining the entries of the matrix for the resulting transformation, 

\begin{equation}
 K_{i,j}= \sum_{k=0}^{j} (-1)^{k}\binom{j}{k} \binom{n-j}{i-k}\ .
\end{equation}
\noindent  The rule obtained for each matrix element is precisely the rule used to build the Krawtchouk transformation, proving that one can compute the Fourier coefficient with this transformation.

In the end, to prove the bound on the number of operations, it is necessary to determine the number of operations necessary to build the matrix. In particular, each element results from the sum of combinatorial terms, which compute factorial terms. These can be exponentially bigger than $n$. Nevertheless, they have at most $n^2$ digits, which are numbers that are simple to operate with for efficient number multiplication algorithms. The latter are bounded by $\mathcal{O}\big(n\ log(n)\big)$. Therefore, the operations necessary to compute each element of the Krawtchok transform are bounded by  $\mathcal{O}(n^3)$. Then, these are computed for all matrix elements, which pushes the upper bound to $\mathcal{O}(n^5)$.

 \section{\texorpdfstring{$\textnormal{NMQC}_\oplus$}{NQMC} measurement assignment of a CSF: An Example } \label{examplecsf}

The complete symmetric function $C^5:\{0,1\}^n\rightarrow\{0,1\}$ is defined, for all $x\in \{0,1\}^n$, as

\begin{equation}
    C^5(x)=\bigoplus_{i_1=1}^{n-4} x_{i_1} \bigg(   \bigoplus_{i_2=i_1+1}^{n-3} x_{i_2}  \bigg( \bigoplus_{i_3=i_2+1}^{n-2} x_{i_3} \bigg (  \bigoplus_{i_4=i_3+1}^{n-1} x_{i_4} \bigg (  \bigoplus_{i_5=i_4+1}^{n} x_{i_4}\bigg )\bigg ) \bigg ) \bigg) \ .
\end{equation}

\noindent This can be decomposed as the conjunction of two CSF with a degree equal to a power of two as:
\begin{equation}
    C^5(x)=C^4(x)\wedge C^1(x) = \bigoplus_{i_1=1}^{n-3} x_{i_1} \bigg(   \bigoplus_{i_2=i_1+1}^{n-2} x_{i_2}  \bigg( \bigoplus_{i_3=i_2+1}^{n-1} x_{i_3} \bigg (  \bigoplus_{i_4=i_3+1}^{n} x_{i_4}\bigg ) \bigg ) \bigg) \bigg ( \bigoplus_{j=0}^n x_j \bigg) \ .
\end{equation}

\noindent Now resorting to the multi-linear polynomials, with domain $\{-1,1\}^n$, from Lemma \ref{polinomials} and Proposition \ref{decomp}, one can represent this family of Boolean functions as the following family of multi-linear polynomials,
\begin{equation}
\begin{split}
    \mathsf{poly}_{C^5}(x)=&\frac{\pi}{32}  \bigg(\frac{(n-4)(n-3)}{2} \Big(1-\prod_{j=1}^n x_j\Big) - (n-4)\Big( \sum_{i=1}^n x_i + \sum_{i=1}^n \Big(\big(\prod_{j=1}^n x_j\big)*x_i\Big)\Big)\\ &+ \Big( \sum_{i=1}^n \sum_{j=i}^n x_i*x_j\Big)  \Big) -\Big( \sum_{i=1}^n \sum_{j=i}^n \big(\prod_{j=1}^n x_j\big)*x_i*x_j\Big)  \bigg) \ .
\end{split}
\end{equation}

Let us consider a specific instance of this family with a fixed size for the input strings ($|x|=6$) such that the following polynomial represents the function, 
\begin{equation}
\begin{split}
    \mathsf{poly}_{C^5}(x)= &\frac{\pi}{32} \Big(3 - 3*\prod_{j=1}^6 x_j - 2\sum_{i=1}^6 x_i +2 \sum_{i=1}^6 \Big(\big(\prod_{j=1}^6 x_j\big)*x_i\Big)  \\ & + \Big( \sum_{i=1}^6 \sum_{j=i}^6 x_i*x_j\Big)  \Big) -\Big( \sum_{i=1}^6 \sum_{j=i}^6 \big(\prod_{j=1}^6 x_j\big)*x_i*x_j\Big)  \Big) \ .
\end{split}
\end{equation}

\noindent This multi-linear polynomial translates to a measurement assignment using Equations \ref{linears}, \ref{angles} and \ref{fmeasure}. We can then act on a 43-qubit GHZ state, $\ket{\Psi_{GHZ}^{43}}= 1/\sqrt{2} \big(\ket{0}^{\otimes 43} + \ket{1}^{\otimes 43} \big)$, with the following linear functions and measurement operators 
\begin{align}
    L_{i\in \{1,..,6\}}&\equiv x_j,\ \ \ \ \ \ \ \ \ \ \ \ \ \ \ \ \ \ \ with \ M_i(s_i)= \cos(\frac{\pi}{8}*s_i) \sigma_x + \sin(\frac{\pi}{8}*s_i) \sigma_y\\ 
    L_{i\in \{7,..,12\}}&\equiv x_j \oplus \bigoplus_{k=1}^6 x_k,\ \ \ \ \ \ \  with \ M_i(s_i)= \cos(-\frac{\pi}{8}*s_i) \sigma_x + \sin(-\frac{\pi}{8}*s_i) \sigma_y \\
    L_{i\in \{13,..,27\}}&\equiv x_j \oplus x_l ,\ \ \ \ \ \ \ \ \ \ \ \ \ with \ M_i(s_i)= \cos(-\frac{\pi}{16}*s_i) \sigma_x + \sin(-\frac{\pi}{16}*s_i) \sigma_y \\
    L_{i\in \{28,..,42\}}& \equiv x_j \oplus x_l \oplus \bigoplus_{k=1}^6 x_k,\ with \ M_i(s_i)= \cos(\frac{\pi}{16}*s_i) \sigma_x + \sin(\frac{\pi}{16}*s_i) \sigma_y\\
    L_{43}&\equiv \bigoplus_{k=1}^6 x_k ,\ \ \ \ \ \ \ \ \ \ \ \ \ \  with \ M_i(s_i)= \cos(\frac{3\pi}{16}*s_i) \sigma_x + \sin(\frac{3\pi}{16}*s_i) \sigma_y\ ,
\end{align}
\noindent for which the linear functions are defined as a general instance, and the specific linear functions are all the combinations of this expression built with the $6$ inputs bits of the input string. The second best solution regarding resource requirements is the solution presented in \cite{Frembs22}, which requires $63$ qubits. In particular, this solution would require all the linear functions of the form $x_i \oplus x_j \oplus x_k$, which become unnecessary in the presented solution.

\section{Proof of Lemma \ref{ltest}}\label{test}
We will prove the lemma introducing a more detailed description of the problem and the algorithm to solve it. In the end, the algorithm's running time is shown to be polynomially bounded. 

The algorithm has to determine whether there exits a symmetric multi-linear polynomial with a particular set of non-zero Fourier coefficients, $F=\big\{ \widehat{f}(|S|=i)\ \big|\ i\in T,\ T \subseteq [n] \big\}$, that computes deterministically the SBF. Therefore, if this polynomial exists it will be an element of the set of  multi-linear polynomials equivalent to the SBF in the \textnormal{NMQC}$_\oplus$ model. Thus, it will be helpful to redefine this set in in a similar way to what we have done previously for the minimal sparsity in Definition \ref{minspar}. First, consider that any semi-Boolean function $g:\{0,1\}^n\rightarrow \mathbb{Z}$ that verifies the following equality, 
\begin{equation}\label{equivfunc}
    g(x)=f(x)+2h(x)\ ,
\end{equation}
\noindent for any other semi-Boolean function $h(x):\{0,1\}^n\rightarrow \mathbb{Z}$,  has a multi-linear polynomial that computes the Boolean function $f(x):\{0,1\}^n\rightarrow\{0,1\}$ deterministically in the \textnormal{NMQC}$_\oplus$ model. This allows us to use the semi-Boolean function $h(x)$ as the degree of freedom that determines the elements of the equivalence set. Furthermore this degree of freedom can be restricted for our specific problem. First, only symmetric multi-linear polynomials are considered. Therefore if $f(x)$ is symmetric, then also $h(x)$ has to be symmetric to generate symmetric semi-Boolean functions equivalent to $f(x)$ through Equation \eqref{equivfunc}. In addition, the same equality on the functions will be true for the corresponding multi-linear polynomials. Therefore, the elements of the set of symmetric multi-linear polynomials $\mathsf{poly}_{g(x)}$ equivalent to the symmetric Boolean function $f(x)$ in the \textnormal{NMQC}$_\oplus$ model, must verify the following equality, 
\begin{equation}
   \sum_{i=0}^n\widehat{g}\big(|S|=i\big)= \sum_{i=0}^n\widehat{f}\big(|S|=i\big)+ 2 \sum_{i=0}^n\widehat{h}\big(|S|=i\big)\ .
\end{equation}

A second restriction imposed by the problem is that only a specific subset of Fourier coefficients $T \subseteq [n]$ can be non-zero. Therefore all other Fourier coefficients of the symmetric multi-linear polynomial have to be equal to zero. This implies that all the valid symmetric multi-linear polynomials correspond respectively to instances of $h(x)$ that verify simultaneously the following equality $\widehat{f}(|S|=i)=-2\widehat{h}(|S|=i) $ for all $i\in E$, with $E=[n]\setminus T$. This guarantees that   $\widehat{g}(|S|=i)$ is equal to zero for all $i\in E$. Finally, a solution to the problem exists if there is a semi-Boolean function $h(x)$ for which the following equation is verified, 
\begin{equation}
    \sum_{x} X_{|S|=i}(x)f(x)=-\sum_{x} X_{|S|=i}(x)2h(x)
\end{equation}
\noindent for all $i \in E$. This problem defines a system of linear Diophantine equations. Therefore, determining if a symmetric measurement assignment exists reduces to determining if this system of linear Diophantine equations has a solution. Thus, a solution to the decision problem will always exist. However, we should define if it can be determined in polynomial time with respect to the size of the function. Unfortunately, the equations are of exponential size in relation to the input string because they represent specific lines of the Fourier transform. Therefore, any algorithm that receive these equations as input will need to manipulate an exponential size object. This means that solving the problem would also take exponential time. 

Fortunately, the problem can be simplified using the Krawtchouk transform which computes the Fourier coefficient using equations of polynomial size. All the Fourier coefficients will be determined using specific lines of the Krawtchouk transform applied to the simplified value vectors of the functions $f$ and $h$, designated $v_{f}$ and $v_{h}$. Consequently, the problem reduces to finding a solution to the following equation,

\begin{equation}
    \sum_{j=0}^n K_{i,j}^{[n]}*v_{f}(j)=-\sum_{j=0}^n K_{i,j}^{[n]}*2*v_{h}(j)
\end{equation}
 for all $i\in E$. The combination of all these equations defines a system of Diophantine equations that can be represented as,

\begin{equation}
   \begin{bmatrix}
   K_{i_1,1} & K_{i_1,2} & K_{i_1,3} & ... & K_{i_1,n+1}\\
   K_{i_2,1} & K_{i_2,2} & K_{i_2,3} & ... & K_{i_2,n+1} \\
   K_{i_3,1} & K_{i_3,2} & K_{i_3,3} & ... & K_{i_3,n+1} \\
   \vdots & \vdots & \vdots & \vdots  & \vdots & \vdots \\
    K_{i_k,1} & K_{i_k,2} & K_{i_k,3} & ... & K_{i_k,n+1} \\
\end{bmatrix} *2
\begin{bmatrix}
    v_h(0) \\
    v_h(1)\\
    v_h(2)\\
    \vdots \\
    v_h(n)
\end{bmatrix} \\ = 
\begin{bmatrix}
   \widehat{ f}(|S|=i_1) \\
   \widehat{ f}(|S|=i_2) \\
    \widehat{f}(|S|=i_3) \\
    \vdots \\
    \widehat{f}(|S|=i_k) \\
\end{bmatrix}  \ .
\end{equation}

In order to verify if a system of Diophantine equations has a solution, one can use the Smith Normal Form of the matrix defining the system. In particular, the matrix defining the system of Diophantine equations required to be solved will be the submatrix of the Krawtchouk transform composed of the lines selected by the set $E$, $\mathcal{K}^{[n+1]}[E;1,2,...,n+1]$. Therefore, the Smith Normal Form of this matrix creates three unique matrices that verify the following equality,  
\begin{equation}
    U^{-1}D V^{-1}= \mathcal{K}^{[n]}[E;1,2,...,n+1] \ .
\end{equation}
\noindent Then the system of equations to be solved can be decomposed as,
\begin{align}
    U^{-1}DV^{-1} * v_h[E]&= [\widehat{f}(E)]\\
    U^{-1}DV^{-1}*2*v_h'[E]&= [\widehat{f}(E)]\\
    V^{-1}*v_h'[E]=\frac{D^{-1}U[\widehat{f}(E)]}{2}
\end{align}
\noindent with $[f\widehat{(E)}]$ being the set of Fourier coefficients selected by $E$, resulting from $T$. Consequently, a solution for the system exists if for all $i\in E$,
\begin{equation}\label{tests}
  \frac{(U[\widehat{f}(|S|=i)])}{2D[i,i]} \in \mathbb{Z} \ .
\end{equation}

Finally, to prove that this process can be performed in polynomial time, it will be necessary to show that each step is polynomially bound. First, the Smith decomposition can be performed with algorithms that scale polynomially concerning the size of the matrix, as proven in \cite{Kannan79}. This solution demonstrates that determining the Smith Normal Form of the Krawtchouk transform can be done in polynomial time and space, given that the matrix scales polynomially with the size of the input string. Then, all the matrix and vector multiplications are polynomially bounded, provided these objects are of polynomial size \cite{Alman21}. The last process tests if all elements of the final vector, composed of the elements described by Equation \eqref{tests}, are integer numbers. The previously mentioned vector is smaller or equal to $n$, which also makes this subprocess polynomially bounded. Therefore, the entire process is polynomially bounded, proving the lemma.

\section{Support for Conjecture \ref{lowerbound}}\label{conjecture}
The conjecture is supported by the numerical verification process described below. First, the polynomially scaling algorithm described in the proof of Lemma \ref{ltest}  was implemented, in a way such that one could test the order of magnitude upon whether a solution exists or not. This was necessary because the number of combinations of Fourier coefficients that can be selected to be tested grows exponentially. However, applying the algorithm to a subset of these combinations suffices to test the scaling rate. The last statement is supported by the number of parity bases corresponding to each non-zero Fourier coefficient. Moreover, a non-zero Fourier coefficient relative to parity bases of size $t=|S|$ imposes the use of $\binom{n}{|S|}$ qubits and measurements.  Therefore, each Fourier coefficient is related to a certain asymptotic growth rate, 

\begin{equation}
   \overbrace{ \underbrace{\binom{n}{0}}_{1}}^{ \widehat{f}({|S|=0})},\  \overbrace{\underbrace{\binom{n}{1}}_{n}}^{\widehat{f}({|S|=1})},\ \overbrace{\underbrace{\binom{n}{2}}_{\mathcal{O}\big(n^2\big)}}^{\widehat{f}({|S|=2})},\ \overbrace{\underbrace{\binom{n}{3}}_{\mathcal{O}\big(n^3\big)}}^{\widehat{f}({|S|=3})},\ \ \ \ \hdots \ \ \ \ ,  \overbrace{\underbrace{\binom{n}{n-2}}_{\mathcal{O}\big(n^2\big)}}^{\widehat{f}({|S|=n-2})},\ \overbrace{ \underbrace{\binom{n}{n-1}}_{n}}^{\widehat{f}({|S|=n-1})},\ \overbrace{\underbrace{\binom{n}{n}}_{1}}^{\widehat{f}({|S|=n})} \ .
\end{equation}

In order to test if there is a solution that has at most a certain growth rate $\mathcal{O}(n^t)$, it is sufficient to test if all the Fourier coefficients with larger growth rates could be made equal to zero. Therefore, this method reduced the number of necessary tests necessary to determine the order of the growth rate to a maximum of $n/2$ for each symmetric function. In particular, for the conjecture it was enough to test if there are solutions with a growth rate of $\mathcal{O}(n^{\mathsf{deg}(f)/2}-1)$ and $\mathcal{O}(n^{\mathsf{deg}(f)/2})$. Therefore, all Fourier coefficients with larger scaling rates than these values must be zero for a positive answer. This statement implies that the following equation  
\begin{equation}
    \sum_{j=0}^n K_{i,j}^{[n]}*v_{f}(j)=-\sum_{j=0}^n K_{i,j}^{[n]}*2*v_{h}(j)
\end{equation}
\noindent with $[n]$ defining the size and $i,j$ the element of the Krawtchouk transform, is verified for all $\{i \big|\  i>\mathsf{deg}(f),\  i>n-\mathsf{deg}(f)/2 \}$ if the function has a scaling rate of $\mathcal{O}(n^{\mathsf{deg(f)/2-1}})$. Equivalently, for $\mathcal{O}(n^{\mathsf{deg(f)/2}})$ the same equation has to be true for all $\{i \big|\  i>\mathsf{deg}(f)+1,\  i>n-\mathsf{deg}(f)/2-1 \}$. 

The numerical tests realized verified that all complete symmetric functions that can be defined with an input size up to 345 bits have solutions with the respective sparsities scaling with $\mathcal{O}(n^{\mathsf{deg(f)/2}})$ and failed to have a solution with $\mathcal{O}(n^{\mathsf{deg(f)/2-1}})$. Moreover, the main limitation of the numerical process is that for each specific CSF, only instances of a particular size of the input string are tested each time. One can not use this test to verify the sparsities of all possible input strings of a particular CSF simultaneously. 

In summary, a reasonable number of functions have verified the conjectured scaling relating to their minimum sparsity. Furthermore, the same conjecture and numerical test are consistent with the lower bound found for two particular symmetric Boolean functions. More numerical tests could be conducted. We believe that actually proving the conjecture will be possible resorting to a general Smith decomposition of the Krawtchouk transform.

\end{document}